\def\showauthornotes{0}
\newcommand{\defeq}{\stackrel{\textup{def}}{=}}
\newtheorem{theorem}{Theorem}[section]
\newtheorem{lemma}[theorem]{Lemma}
\newtheorem{fact}[theorem]{Fact}
\theoremstyle{definition}
\newtheorem{definition}[theorem]{Definition}
\DeclareMathOperator*{\pr}{\mathbb{P}}
\DeclareMathOperator*{\av}{\mathbb{E}}
\newcommand{\ceil}[1]{\left\lceil\, {#1}\,\right\rceil}
\newcommand\rea{\mathbb R}
\newcommand{\ol}[1]{\ensuremath{\overline{#1}}}
\newcommand\calL{\mathbf{\mathcal{L}}}
\definecolor{Mygray}{gray}{0.8}
\let\csname ifcommentflag\expandafter\endcsname
\newcommand{\todo}[1]{\colorbox{Mygray}{\color{red}#1}}
\newcommand{\todo}[1]{}
\newcommand{\Authornote}[2]{{\sf\small\color{red}{[#1: #2]}}}
\newcommand{\Authoredit}[2]{{\sf\small\color{red}{[#1]}\color{blue}{#2}}}
\newcommand{\Authorcomment}[2]{{\sf \small\color{gray}{[#1: #2]}}}
\newcommand{\Authorfnote}[2]{\footnote{\color{red}{#1: #2}}}
\newcommand{\Authorfixme}[1]{\Authornote{#1}{\textbf{??}}}
\newcommand{\Authormarginmark}[1]{\marginpar{\textcolor{red}{\fbox{%
#1:!}}}}
\newcommand{\Authornote}[2]{}
\newcommand{\Authoredit}[2]{}
\newcommand{\Authorcomment}[2]{}
\newcommand{\Authorfnote}[2]{}
\newcommand{\Authorfixme}[1]{}
\newcommand{\Authormarginmark}[1]{}
\newcommand{\paren}[1]{\left({#1}\right)}
\newcommand{\sqparen}[1]{\left[{#1}\right]}
\newcommand{\curlyparen}[1]{\left\{{#1}\right\}}
\def\pleq{\preccurlyeq}
\def\pgeq{\succcurlyeq}
\def\ple{\prec}
\def\norm#1{\left\| #1 \right\|}
\newcommand\PPi{\boldsymbol{\Pi}}
\newcommand\ttau{\boldsymbol{\tau}}
\newcommand\bb{\boldsymbol{\mathit{b}}}
\newcommand\vv{\boldsymbol{\mathit{v}}}
\newcommand\ww{\boldsymbol{\mathit{w}}}
\newcommand\yy{\boldsymbol{\mathit{y}}}
\newcommand\xx{\boldsymbol{\mathit{x}}}
\renewcommand\AA{\boldsymbol{\mathit{A}}}
\newcommand\BB{\boldsymbol{\mathit{B}}}
\newcommand\CC{\boldsymbol{\mathit{C}}}
\newcommand\DD{\boldsymbol{\mathit{D}}}
\newcommand\II{\boldsymbol{\mathit{I}}}
\newcommand\MM{\boldsymbol{\mathit{M}}}
\newcommand\LL{\boldsymbol{\mathit{L}}}
\renewcommand\SS{\boldsymbol{\mathit{S}}}
\newcommand\UU{\boldsymbol{\mathit{U}}}
\newcommand\WW{\boldsymbol{\mathit{W}}}
\newcommand\XX{\boldsymbol{\mathit{X}}}
\newcommand\YY{\boldsymbol{\mathit{Y}}}
\newcommand\ZZ{\boldsymbol{\mathit{Z}}}
\newcommand{\vone}{\boldsymbol{\mathbf{1}}}
\DeclareMathOperator{\Sc}{\mathsf{SC}}
\newcommand\Lhat{\widehat{\boldsymbol{\mathit{L}}}}
\newcommand\Ltil{\widetilde{\boldsymbol{\mathit{L}}}}
\newcommand\xtil{\boldsymbol{\widetilde{\mathit{x}}}}
\newcommand\multiref[1]{\@first@ref#1,@}
\def\@throw@dot#1.#2@{#1}%
\def\@set@refname#1{%
    \edef\@tmp{\getrefbykeydefault{#1}{anchor}{}}%
    \xdef\@tmp{\expandafter\@throw@dot\@tmp.@}%
    \ltx@IfUndefined{\@tmp autorefnameplural}%
         {\def\@refname{\@nameuse{\@tmp autorefname}s}}%
         {\def\@refname{\@nameuse{\@tmp autorefnameplural}}}%
}
\def\@first@ref#1,#2{%
  \ifx#2@\autoref{#1}\let\@nextref\@gobble%
  \else%
    \@set@refname{#1}%
    \@refname~\ref{#1}%
    \let\@nextref\@next@ref%
  \fi%
  \@nextref#2%
}
\def\@next@ref#1,#2{%
   \ifx#2@ and~\ref{#1}\let\@nextref\@gobble%
   \else, \ref{#1}%
   \fi%
   \@nextref#2%
}
\begin{document}

\title{A Simple and Efficient Parallel Laplacian Solver}

\author{Sushant Sachdeva}
\author{Yibin Zhao\thanks{This research is supported by an Natural Sciences and Engineering
    Research Council of Canada (NSERC) Discovery Grant RGPIN-2018-06398 and an
    Ontario Early Researcher Award (ERA) ER21-16-284. } }
\affil{
    University of Toronto \\
    \texttt{\{sachdeva,ybzhao\}@cs.toronto.edu}
}

\maketitle

\begin{abstract}
  A symmetric matrix is called a Laplacian if it has
  nonpositive off-diagonal entries and zero row sums. Since the
  seminal work of \citeauthor{SpielmanT04} (\citeyear{SpielmanT04}) on solving
  Laplacian linear systems in nearly linear time, several algorithms
  have been designed for the task.  Yet, the work of \citeauthor{KyngS16}
  (\citeyear{KyngS16}) remains the simplest and most practical sequential
  solver.  They presented a solver purely based on random sampling and
  without graph-theoretic constructions such as low-stretch trees
  and sparsifiers.

  In this work, we extend the result of \citeauthor{KyngS16} to a simple
  parallel Laplacian solver with $O(m \log^3 n \log\log n)$ or 
  $O((m + n\log^5 n)\log n \log\log n)$
  work and
  $O(\log^2 n \log\log n)$ depth using the ideas of block Cholesky
  factorization from \citeauthor{KyngLPSS16} (\citeyear{KyngLPSS16}). 
  Compared to the best known parallel
  Laplacian solvers that achieve polylogarithmic depth due to
  \citeauthor{LPS15} (\citeyear{LPS15}), our solver achieves both
  better depth and, for dense graphs, better work.
\end{abstract}

\section{Introduction}
A weighted undirected graph $G=(V,E,\ww)$ is associated with a
canonical symmetric matrix called the graph Laplacian, $\LL_G \in \rea^{V
  \times V},$ such that $\LL_{ij} = -\ww_{ij}$ and $\LL_{ii} = \sum_{e: e
\owns i} \ww_e.$
Solving systems of Laplacian linear equations $\LL_G\xx = \bb$ is a
fundamental linear algebra problem that arises frequently in
scientific computing~\cite{Strang86, BomanHV04}, semi-supervised
learning on graphs~\cite{ZhuGL03, ZhouBLWS04, BelkinMN04}, or
solving maximum-flow or minimum-cost flow using Interior Point
methods~\cite{DaitchS08, ChristianoKMST11, Madry13, LeeS14}.

The groundbreaking work of \citeauthor{SpielmanT04}~\cite{SpielmanT04} 
gave the first nearly-linear time algorithms for solving Laplacian
linear systems. Since then, there has been considerable work on
faster/better algorithms or algorithms in different computation
models~\cite{KoutisMP10, KoutisMP11, KelnerOSZ13, PS14,
  CohenKMPPRX14,  LPS15, KyngLPSS16, KyngS16, FGL+20, JambulapatiS21}.
Most of these algorithms rely on involved graph-theoretic
constructions such as low-stretch spanning trees~\cite{SpielmanT04,
  KoutisMP10, KoutisMP10, KelnerOSZ13}, graph
sparsification~\cite{SpielmanT04, PS14, CohenKMPPRX14, LPS15, KyngLPSS16,
  JambulapatiS21} and explicit expander graphs~\cite{KyngLPSS16}.

In this work, we propose a new parallel algorithm for solving Laplacian linear
systems based on block Cholesky factorization
(a symmetric block matrix version of Gaussian elimination, see \autoref{sec:pre}
and \eqref{eq:bcf})
.
Similar to~\cite{LPS15,KyngLPSS16}, to build such approximate block Cholesky
factorization, our algorithm eliminates large sub-matrices of the Laplacian
corresponding to ``almost-independent sets'' in the graph.
The challenge in \cite{LPS15} is eliminating these vertices while
generating a sparse approximation to the resulting graph (known as the Schur
complement).
We propose a very simple algorithm to compute sparse approximations to Schur
complements based on short random walks. This generalizes the sampling
from~\cite{KyngS16} and does not increase the edge count in our graph; hence
allowing us to bypass graph sparsification.
Putting these together, we obtain a parallel algorithm based purely
on random sampling that does not invoke any graph theoretic
construction.
In addition, we present an application of our technique for generating a sparse
approximation to a Schur complement. %

\paragraph{Related works}
As mentioned above, most Laplacian solvers reply on graph-theoretic
constructions.
A notable exception is the work of \citeauthor{KyngS16}~\cite{KyngS16} which
gives a very simple nearly-linear time Laplacian solver based on
randomized Gaussian elimination (or Cholesky factorization).
Their algorithm is based purely on random sampling and does not
require any graph theoretic constructions. The algorithm can be
described simply as follows: Eliminate the vertices in a uniformly
random order. At each step of elimination, instead of adding a
complete clique on the neighbors as dictated by Gaussian elimination,
sample a few edges and scale them up appropriately.
Despite its simplicity, their algorithm is inherently sequential and does not
lend itself to a good parallel algorithm.

In the realm of parallel Laplacian solvers,
\citeauthor{PS14}~\cite{PS14} obtained the first parallel algorithm with
almost-linear work and polylogarithmic depth, but their algorithm
requires an involved graph sparsification routine
from~\cite{SpielmanT04, OrecchiaV11} that is both complicated and
requires unspecified large polynomial depth.
\citeauthor{LPS15}~\cite{LPS15} gave a different solver that could achieve
better depth and work. 
It was based on the aforementioned block Cholesky factorization approach, but
it still required a parallel graph sparsification routine and explicit expander
graphs.

In contrast, our algorithm for generating sparse approximations to Schur
complements is based on random walks.
There are deep connections between %
random walks, electrical networks, and spectral graph theory.
Most notably, there is a long line of work that utilizes random walks and
approximations to Schur complements to sample random spanning trees \cite{Bro89,
Ald90, Wil96, KM09, MST14, DPPR17, DKP+17, Sch18}.
Recent work by \citeauthor{DGGP19}~\cite{DGGP19}
uses random walks for dynamically maintaining Schur complements. 
This then lead to a few algorithmic advancements in solving maxflow and mincost
flow problems \cite{GLP22,vdBG+22}.

\paragraph{Our Results}
\sloppy
We present a very simple parallel algorithm for solving Laplacian
systems. If $G$ has $n$ vertices and $m$ edges, our algorithm achieves
$O(\log^2 n \log \log n)$ depth and 
$O(m \log^3 n \log\log n)$ or
$O((m+n\log^5 n) \log n \log \log n)$ total work.
This improves on the
$O(\log^6 n \log^4 \log n)$ depth and, for dense graphs, 
$O(m \log^2 n + n \log^4 n \log\log n)$ work achieved by \cite{LPS15} using
parallel sparsifiers from \cite{KyngPPS17} and the $O(m \log^3 n)$ running time
achieved by \cite{KyngS16}.

Our main results are formally presented in the following two theorems:
\begin{theorem} \label{thm:main}
  There is an algorithm that, given a Laplacian matrix
  $\LL \in \rea^{n \times n}$ with $m$ nonzeros, a vector $\bb \in
  \rea^n,$ and $0 < \epsilon < 1/2$,
  returns in $O(m \log^3 n \log\log n \log 1/\epsilon)$ work and
  $O(\log^2 n \log\log n \log 1/\epsilon)$ depth a vector $\xtil$ such
  that
  $ \norm{\xtil - \LL^+\bb}_{\LL} \leq \epsilon \norm{\LL^+\bb}_{\LL}
  $ with high probability.
\end{theorem}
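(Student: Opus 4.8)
The plan is to construct, in parallel, an \emph{approximate block Cholesky factorization} of $\LL$: a short chain of Laplacians $\LL = \LL^{(0)}, \LL^{(1)}, \dots, \LL^{(d)}$, together with elimination blocks, in which each $\LL^{(i+1)}$ is a \emph{sparse} spectral approximation of the Schur complement of $\LL^{(i)}$ onto a large subset of its remaining vertices. This is the parallel counterpart of the sequential randomized Cholesky factorization of \citeauthor{KyngS16}, organized in the block form used by \citeauthor{KyngLPSS16}. Given such a chain, the linear operator $\ZZ$ that ``solves back up'' along it is a good preconditioner for $\LL$, so preconditioned Richardson (or Chebyshev) iteration converges geometrically and $O(\log 1/\epsilon)$ iterations drive the error down to $\epsilon$ in the $\LL$-norm; this is where the $\log 1/\epsilon$ factor comes from.

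\noindent\textbf{Building one level.}
I would first establish two structural facts. (i) Every Laplacian on $n$ vertices admits a subset $F$ with $|F| = \Omega(n)$ that is $\alpha$-strongly diagonally dominant for an absolute constant $\alpha$ --- equivalently $F$ behaves like an ``almost independent set'', so $\LL_{FF}$ is well conditioned relative to its diagonal and can be solved to accuracy $\Theta(1/\log n)$ by $O(\log\log n)$ Jacobi steps --- and such an $F$ can be found in $O(m)$ work and $O(\log n)$ depth via an averaging bound on the weighted degrees followed by a low-depth greedy/random thinning. (ii) Given such an $F$, one can produce a Laplacian $\Sctil$ on $V \setminus F$ with $\operatorname{nnz}(\Sctil) = O(\operatorname{nnz}(\LL))$ --- the edge count does \emph{not} grow --- that is a good spectral approximation of $\Sc(\LL, V \setminus F)$. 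The construction of $\Sctil$ generalizes the clique sampling of \citeauthor{KyngS16}: each original edge meeting $F$ is ``rolled forward'' along a short random walk in $G$ until it first leaves $F$, and the terminated walks, with suitably rescaled weights, become the edges of $\Sctil$. The $\alpha$-diagonal-dominance of $F$ forces these walks to have geometrically decaying survival probability, hence length $O(\log n)$ with high probability, which is what bounds the work and depth of the step.

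\noindent\textbf{Chaining and the solve.}
Since one level shrinks the vertex count by a constant factor, after $d = O(\log n)$ levels the graph has $O(1)$ vertices. The per-level accuracies must be tuned (e.g., $\Theta(1/\log n)$, controlled by a martingale that spans the construction) so that, together with the fact that each $\LL^{(i)}_{F_iF_i}$ block is only approximately inverted, the factorization spectrally approximates $\LL$ to within a constant factor. Building the chain costs $O(m \log^3 n \log\log n)$ work and $O(\log^2 n \log\log n)$ depth, the polylogarithmic overheads absorbing the $\Theta(\log n)$ levels, the $\Theta(\log n)$-length walks, the number of sampled walks per edge needed for the cross-level composition, and the $O(\log\log n)$ Jacobi steps per block. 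Applying $\ZZ$ proceeds level by level --- an $O(\log\log n)$-step Jacobi solve in $\LL^{(i)}_{F_iF_i}$ (each step a matrix--vector product of $O(\log n)$ depth via parallel summation), a recursive solve on $\LL^{(i+1)}$, and a back-substitution --- costing $O(m \cdot \operatorname{polylog} n)$ work and $O(\log^2 n \log\log n)$ depth. Because $\ZZ$ is built from a constant-factor approximation, the preconditioned operator has spectrum in a constant band around $1$ on $\im \LL$, so each refinement step (one application of $\ZZ$ plus a matrix--vector product with $\LL$) reduces the $\LL$-norm error by a constant factor; $O(\log 1/\epsilon)$ steps yield $\xtil$ with $\norm{\xtil - \LL^+ \bb}_\LL \le \epsilon \norm{\LL^+ \bb}_\LL$ with high probability. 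Combining gives the stated $O(m \log^3 n \log\log n \log 1/\epsilon)$ work and $O(\log^2 n \log\log n \log 1/\epsilon)$ depth; the alternative $O((m + n\log^5 n)\log n \log\log n)$ bound comes from the same scheme with the sampling parameters retuned so the lower, relatively denser levels of the chain contribute the $n \log^5 n$ term rather than a polylogarithmic overhead on $m$.

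\noindent\textbf{Main obstacle.}
The heart of the argument is part (ii): showing that the random-walk construction yields a \emph{spectral} approximation of the Schur complement while neither increasing the edge count nor exceeding polylogarithmic depth. I would prove the spectral bound by writing $\Sctil - \Sc(\LL, V\setminus F)$ as a sum of mean-zero matrix-valued contributions --- one per sampled walk step --- and applying a matrix Bernstein/Freedman concentration inequality to the associated matrix martingale, using the geometric decay of walk survival to control both the per-term operator norm and the total variance with only a bounded (polylogarithmic) number of walks per edge, so that $\operatorname{nnz}$ does not grow. The second real difficulty is the error bookkeeping: combining the approximate block inverses, the approximate Schur complements, and the per-level errors into a single $O(1)$-factor preconditioner, which requires either a per-level accuracy of $\Theta(1/\log n)$ with an operator-norm telescoping argument or a single martingale whose total deviation over all levels is polylogarithmic; a union bound over the $O(\log n)$ levels then gives the high-probability guarantee and is where the extra $\log\log n$ factors are ultimately absorbed.
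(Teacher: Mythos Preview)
Your proposal is essentially correct and follows the paper's approach: 5-DD ``almost-independent'' subsets, random-walk Schur approximation (one walk per multi-edge, terminating on first exit from $F$), $O(\log\log n)$-step Jacobi for the eliminated block, $O(\log n)$ levels, a single matrix Freedman martingale spanning the whole chain, and preconditioned Richardson at the end.

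One technical point you have miswired: the per-term operator norm in the Freedman argument is \emph{not} controlled by the geometric decay of walk survival --- that decay only bounds walk \emph{length} and hence runtime. The norm bound comes instead from an $\alpha$-boundedness invariant on leverage scores: the paper first splits every edge into $\alpha^{-1}=\Theta(\log^2 n)$ equal-weight copies (this initial splitting, not ``walks per edge for cross-level composition'', is the true source of the extra $\log^2 n$ in the work bound), so each multi-edge has leverage score at most $\alpha$ with respect to $\LL$; it then shows, via the triangle inequality for effective resistance, that each walk-derived edge inherits the same bound. This invariant is what plugs directly into Freedman as $R=\alpha$ and $\sigma^2=O(\alpha\log n)$. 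Your parenthetical about the alternative bound is also off: the $O((m+n\log^5 n)\log n\log\log n)$ work comes from replacing the naive initial splitting by a leverage-score-overestimation-based splitting, not from retuning the sampling at deeper levels of the chain.
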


\begin{theorem}
    \label{thm:alter}
    \sloppy
    There is an algorithm that, given a Laplacian matrix
    $\LL \in \rea^{n \times n}$ with $m$ nonzeros, a vector $\bb \in
    \rea^n,$ and $0 < \epsilon < 1/2$,
    returns in 
    $O((m + n \log^5 n)\log n \log\log n \log 1/\epsilon)$ work and 
    $O(\log^2 n \log\log n \log 1/\epsilon)$ depth a
    vector $\xtil$ such that
    $ \norm{\xtil - \LL^+\bb}_{\LL} \leq \epsilon \norm{\LL^+\bb}_{\LL} $
    with high probability.
\end{theorem}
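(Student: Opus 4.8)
The plan is to reduce \autoref{thm:alter} to \autoref{thm:main}: first replace $\LL$ by a sparse spectral approximation $\Ltil$, then recover an $\epsilon$-accurate solution for $\LL$ by preconditioned iterative refinement that uses a (constant-accuracy) solver for $\Ltil$ as the preconditioner. Stage one is to run our random-walk sparsification subroutine (the construction of sparse approximations to Schur complements referenced above) on $\LL$; we would argue that, with high probability and using $O(m\log n\log\log n)$ work and $O(\log^2 n\log\log n)$ depth, it returns a Laplacian $\Ltil$ on the same vertex set with $\widetilde m = O(n\log^2 n)$ nonzeros and $\tfrac12\LL \pleq \Ltil \pleq \tfrac32\LL$. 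The point of this stage is that it performs only short random walks and solves no linear system, so it is markedly cheaper than invoking the full solver on the (possibly dense) matrix $\LL$.

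Stage two is to invoke the solver of \autoref{thm:main} on $\Ltil$ with accuracy parameter fixed to a constant $\epsilon_0 = 1/10$; since $\widetilde m = O(n\log^2 n)$, one such call costs $O(\widetilde m\log^3 n\log\log n) = O(n\log^5 n\log\log n)$ work and $O(\log^2 n\log\log n)$ depth, and returns with high probability a vector that is $\epsilon_0$-accurate in $\norm{\cdot}_{\Ltil}$. Stage three is to run $O(\log(1/\epsilon))$ steps of preconditioned Richardson for $\LL\xx = \bb$, where each step performs one matrix-vector product with $\LL$ ($O(m)$ work, $O(\log n)$ depth) and one constant-accuracy solve with $\Ltil$ via a fresh call to stage two (reusing a single block Cholesky factorization of $\Ltil$ would also work and is cheaper, but is not needed for the stated bound). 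Summing the $O(\log(1/\epsilon))$-fold per-step cost with the cost of stages one and two and loosely absorbing lower-order terms gives $O((m + n\log^5 n)\log n\log\log n\log(1/\epsilon))$ work and $O(\log^2 n\log\log n\log(1/\epsilon))$ depth; the high-probability guarantee comes from a union bound over the sparsification event and the $O(\log(1/\epsilon))$ inner solves, each boosted to failure probability $n^{-\Omega(1)}$.

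Two routine facts would justify stage three. First, spectral closeness composes: if $\tfrac12\LL \pleq \Ltil \pleq \tfrac32\LL$ and an operator $\ZZ$ is a constant-factor spectral approximation of $\Ltil^+$, then (after restriction to $\im(\LL) = \mathbf{1}^{\perp}$) $\ZZ$ is a constant-factor spectral approximation of $\LL^+$, hence an $O(1)$-quality preconditioner for $\LL$; an $\epsilon_0$-accurate solve with $\Ltil$ plays the role of such a $\ZZ$. Second, with an $O(1)$-quality (inexact, randomized) preconditioner the error of preconditioned Richardson, measured in $\norm{\cdot}_{\LL}$, shrinks by a constant factor per step provided $\epsilon_0$ is a small enough constant, so $O(\log(1/\epsilon))$ steps reach $\norm{\xtil - \LL^+\bb}_{\LL} \le \epsilon\norm{\LL^+\bb}_{\LL}$; this is the classical inexact-preconditioning analysis. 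The main obstacle — the part that is genuinely new and needs the most care — is stage one: showing that the random-walk Schur-complement machinery really does produce a constant-factor spectral sparsifier of $\LL$ with $\widetilde O(n)$ edges in only $O(m\log n\log\log n)$ work and $O(\log^2 n\log\log n)$ depth, rather than the $O(m\log^3 n\log\log n)$ of the full solver. Once that is in hand, the rest is black-box use of \autoref{thm:main} together with standard preconditioned iteration.
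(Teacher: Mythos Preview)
Your proposal has a genuine gap in stage one. The random-walk subroutine \TW (and its wrapper \AS) does \emph{not} produce a spectral sparsifier of the input Laplacian $\LL$: it approximates a Schur complement $\Sc(\LL,C)$ onto a strict subset $C\subsetneq V$, eliminating the vertices in $V\setminus C$. If you set $C=V$ nothing is eliminated and you get $\LL$ back unchanged; if you set $C\subsetneq V$ you get a graph on fewer vertices, not a sparsifier of $\LL$. Moreover, \autoref{lemma:twtime} only guarantees that \TW returns \emph{at most} $m$ multi-edges --- it never drives the edge count below the input, so there is no mechanism in the paper's machinery for producing an $\widetilde m = O(n\log^2 n)$-edge graph from an $m$-edge input on the same vertex set. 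Finally, \TW requires its input to already be $\alpha$-bounded (\autoref{lemma:twalksnew}), which on a simple graph means first splitting every edge into $\Theta(\log^2 n)$ copies --- exactly the blow-up you are trying to avoid. You correctly flag stage one as the crux, but it is not a detail to be filled in: as stated, it asks the Schur-complement sampler to do something it is not designed to do.

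The paper's route to \autoref{thm:alter} is quite different and does not pass through a sparsifier of $\LL$ at all. It uses \autoref{lemma:sparsebound}: compute leverage-score \emph{overestimates} via Johnson--Lindenstrauss and a call to the \autoref{thm:main} solver on a uniformly subsampled graph with $m/K$ edges, then split each edge $e$ into $\lceil\alpha^{-1}\widehat\ttau(e)\rceil$ copies. Because $\sum_e\widehat\ttau(e)=O(nK)$, this yields an $\alpha$-bounded multi-graph $H$ with $\LL_H=\LL_G$ and only $O(m+nK\alpha^{-1})$ multi-edges. Choosing $K=\Theta(\log^3 n)$ and $\alpha^{-1}=\Theta(\log^2 n)$ gives $O(m+n\log^5 n)$ multi-edges, and feeding $H$ directly into \autoref{lemma:mainnew} gives the stated work and depth. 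No separate sparsification, no outer Richardson loop around a sparsified system --- just smarter initial edge-splitting followed by one invocation of the core solver.
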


While we present an analysis based on matrix martingales (see
\autoref{sec:concentration}), a simpler analysis
using standard matrix concentration bounds results in only an additional
$\log n$ factor in the total work.

We give our application for generating a sparse approximation to a Schur
complement in \autoref{sec:scappr}.

\begin{paragraph}{Organization}
    \autoref{sec:pre} gives definitions and facts we use throughout the paper.
    In \autoref{sec:bcf}, we give an outline of our algorithm as well as its key
    components.
    \autoref{sec:abcf} gives a proof of our key result
    \autoref{thm:thmsolvenew}.
    We present the proofs to a few important lemmas in
    \multiref{sec:schur,sec:alpha}.
    Our theorem for approximating a specific Schur complement can be found in
    \autoref{sec:scappr}.
    Finally, \autoref{sec:proof} gives all deferred proofs.
\end{paragraph}

\section{Preliminaries} \label{sec:pre}
A symmetric matrix $\AA$ is positive semidefinite (PSD)
(resp. positive definite (PD)) if, for any vector
$\xx$ of compatible dimension, $\xx^\top \AA \xx \geq 0$ (resp. $\xx^\top
\AA \xx > 0$ ).
Let $\AA$ and $\BB$ be two symmetric matrices of the same dimension,
then we write $\BB \pleq \AA$ or $\AA
\pgeq \BB$ if $\AA-\BB$ is PSD.
The ordering given by $\pleq$ is called Loewner partial order.  Let
$\norm{\AA}$ denote the operator norm of a matrix $\AA$.
For symmetric matrices, it is equal to the largest absolute value of
an eigenvalue of $\AA.$
Given a symmetric matrix with eigenvalue decomposition $\AA = \sum_{i}
\lambda_i \vv_i \vv_i^{\top},$ where $\{\vv_i\}_i$ form an orthonormal
basis, the pseudoinverse is defined as $\AA^+ = \sum_{i: \lambda_i
  \neq 0} \frac{1}{\lambda_i}\vv_i \vv_i^{\top}.$

\begin{fact} \label{fact:inver} If $\AA, \BB$ are PSD with the same
  kernel, then $\AA \pgeq \BB$ if and only if $\BB^+ \pgeq \AA^+$.
\end{fact}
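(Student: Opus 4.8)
The plan is to reduce to the positive-definite case, settle that case with a two-line congruence argument, and then get the converse for free by symmetry.

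\textbf{Reduction to the PD case.} Since $\AA$ and $\BB$ are symmetric and share a common kernel $\mathcal{N} \defeq \ker \AA = \ker \BB$, they share the common image $\mathcal{V} \defeq \mathcal{N}^\perp$, and each restricts to an invertible self-map $\AA|_{\mathcal V}, \BB|_{\mathcal V}$ of $\mathcal{V}$. By the definition of the pseudoinverse via the eigendecomposition, $\AA^+$ annihilates $\mathcal N$ and agrees with $(\AA|_{\mathcal V})^{-1}$ on $\mathcal V$, and likewise for $\BB^+$. Hence all four of $\AA,\BB,\AA^+,\BB^+$ act as $0$ on $\mathcal N$, so each of the Loewner inequalities $\AA \pgeq \BB$ and $\BB^+ \pgeq \AA^+$ is equivalent to the same inequality tested only against vectors of $\mathcal V$. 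It therefore suffices to prove the statement under the assumption that $\AA,\BB$ are positive definite, in which case $\AA^+ = \AA^{-1}$ and $\BB^+ = \BB^{-1}$.

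\textbf{The PD case.} Assume $\AA \pgeq \BB \pge 0$. Conjugating by the PD matrix $\BB^{-1/2}$ preserves the Loewner order, so
\[
  \BB^{-1/2}\AA\BB^{-1/2} \;\pgeq\; \BB^{-1/2}\BB\BB^{-1/2} \;=\; \II .
\]
A symmetric matrix that is $\pgeq \II$ has all eigenvalues at least $1$, hence its inverse has all eigenvalues at most $1$; thus $\big(\BB^{-1/2}\AA\BB^{-1/2}\big)^{-1} = \BB^{1/2}\AA^{-1}\BB^{1/2} \pleq \II$. Conjugating this inequality once more by $\BB^{-1/2}$ yields $\AA^{-1} \pleq \BB^{-1}$, i.e.\ $\BB^+ = \BB^{-1} \pgeq \AA^{-1} = \AA^+$. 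This proves the ``only if'' direction.

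\textbf{The converse.} The statement is symmetric under replacing the pair $(\AA,\BB)$ by $(\AA^+,\BB^+)$: the pseudoinverse is an involution on symmetric matrices with kernel $\mathcal N$, and $\AA^+,\BB^+$ share that kernel. Applying the implication just proved to $\BB^+ \pgeq \AA^+$ gives $(\AA^+)^+ \pgeq (\BB^+)^+$, that is, $\AA \pgeq \BB$.

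I do not expect a real obstacle here; the only mildly delicate point is the first step — checking that restricting to $\mathcal V$ is legitimate and that the pseudoinverse genuinely becomes the ordinary inverse on $\mathcal V$ — after which the positive-definite case is immediate from congruence invariance of $\pleq$ together with the elementary spectral fact that $\MM \pgeq \II$ implies $\MM^{-1} \pleq \II$.
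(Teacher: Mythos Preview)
Your proof is correct. The paper states this as a fact without proof, so there is no approach to compare against; your reduction to the positive-definite case via the common image, followed by the standard congruence argument $\BB^{-1/2}\AA\BB^{-1/2}\pgeq\II\Rightarrow\BB^{1/2}\AA^{-1}\BB^{1/2}\pleq\II$, is the standard way to establish it.
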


\begin{fact} \label{fact:twoside}
    If $\AA \pgeq \BB$ and $\CC$ is any matrix of compatible dimension, then
   $ \CC \AA \CC^\top \pgeq \CC \BB \CC^\top.$
\end{fact}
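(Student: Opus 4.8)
The plan is to unwind both sides to the definition of the Loewner order and then apply the quadratic-form characterization of positive semidefiniteness. Recall that $\AA \pgeq \BB$ means, by definition, that $\AA - \BB$ is PSD, i.e. $\yy^\top (\AA - \BB) \yy \ge 0$ for every vector $\yy$ of compatible dimension; note that $\AA - \BB$ is symmetric since $\AA$ and $\BB$ are, so this is a meaningful statement.

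First I would rewrite the difference of the two transformed matrices using bilinearity of matrix multiplication: $\CC \AA \CC^\top - \CC \BB \CC^\top = \CC(\AA - \BB)\CC^\top$. This matrix is symmetric, because for any symmetric $\MM$ we have $(\CC \MM \CC^\top)^\top = \CC \MM^\top \CC^\top = \CC \MM \CC^\top$, so it is again meaningful to ask whether it is PSD. Next, fix an arbitrary vector $\xx$ of the appropriate dimension and set $\yy \defeq \CC^\top \xx$. Then $\xx^\top \CC(\AA - \BB)\CC^\top \xx = \yy^\top (\AA - \BB)\yy \ge 0$ by the hypothesis. Since $\xx$ was arbitrary, $\CC(\AA - \BB)\CC^\top$ is PSD, which is exactly the assertion $\CC \AA \CC^\top \pgeq \CC \BB \CC^\top$.

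There is essentially no obstacle here: the statement is the standard fact that congruence transformations preserve the Loewner order, and the only real content is choosing the right test vector $\CC^\top \xx$ so as to transport the positivity of $\AA - \BB$ through the conjugation by $\CC$. The argument uses nothing about the shape or rank of $\CC$ beyond dimensional compatibility, so it applies verbatim in full generality.
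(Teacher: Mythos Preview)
Your argument is correct and is exactly the standard one-line proof of this congruence-invariance fact. The paper itself states this as a basic fact without proof, so there is no alternative approach to compare against; your write-up would serve perfectly well as the omitted justification.
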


We say that $\AA$ is an $\epsilon$-approximation of $\BB$, written $\AA
\approx_\epsilon \BB$ if $e^{-\epsilon}\BB \pleq \AA \pleq e^{\epsilon} \BB$. 
This relation is symmetric. Moreover,
if $\AA \approx_\epsilon \BB$ and $\BB \approx_\delta \CC$, then
$\AA \approx_{\epsilon+\delta} \CC$

Let $G = (V,E, \ww)$ be a weighted undirected graph (possibly with
multi-edges) with edge weights $\ww: E \rightarrow \rea_{\ge 0}$.
The Laplacian $\LL_G$ of $G$ is defined as $\LL_G = \DD - \AA$ where
$\DD$ is a diagonal matrix with $\DD_{uu} = \sum_{e \ni u} \ww(e)$ the
weighted degree of vertex $u$, and $\AA$ a symmetric nonnegative
matrix with 0 diagonal and $\AA_{uv} = \sum_{e : e = (u,v)}\ww(e)$
where $u,v$ are distinct vertices. 
When the context is clear, we write $\LL$ instead of $\LL_G$.
Since different multi-graphs can correspond to the same graph Laplacian, we have
written our algorithms completely with respect to the multi-graphs instead of
matrices.
\begin{fact}
    If $G$ is connected, then the kernel of $\LL$ is the span of vector $\vone$. 
\end{fact}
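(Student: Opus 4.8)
The plan is to prove the two set inclusions $\mathrm{span}(\vone) \subseteq \ker(\LL)$ and $\ker(\LL) \subseteq \mathrm{span}(\vone)$ separately. For the first, I would argue directly from the definition $\LL = \DD - \AA$: the $u$-th coordinate of $\LL\vone$ is $\DD_{uu} - \sum_v \AA_{uv} = \sum_{e \ni u}\ww(e) - \sum_{e \ni u}\ww(e) = 0$, so $\LL\vone = \vzero$ and hence $c\vone \in \ker(\LL)$ for every scalar $c$. This inclusion does not use connectivity.

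For the reverse inclusion I would use the edgewise expansion of the Laplacian quadratic form. Writing $\LL = \sum_{e=(u,v)\in E} \ww(e)\,(\ee_u-\ee_v)(\ee_u-\ee_v)^\top$ — which one checks matches $\DD-\AA$ entrywise, with the sum ranging over the multiset of edges so multi-edges are handled correctly — gives, for any $\xx \in \rea^{V}$,
\[
  \xx^\top \LL \xx \;=\; \sum_{e=(u,v)\in E} \ww(e)\,(\xx_u-\xx_v)^2 \;\ge\; 0,
\]
so $\LL$ is PSD. Now suppose $\xx \in \ker(\LL)$. Then $\xx^\top \LL \xx = 0$, and since each summand on the right is nonnegative, every edge $(u,v)$ with $\ww(e)>0$ must satisfy $\xx_u = \xx_v$. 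Since $G$ is connected, any two vertices of $V$ are joined by a path of such edges, so $\xx$ is constant on $V$; that is, $\xx \in \mathrm{span}(\vone)$. Combining the two inclusions yields $\ker(\LL) = \mathrm{span}(\vone)$, and in particular $\mathrm{rank}(\LL) = n-1$.

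I do not expect a genuine obstacle here; this is the standard argument. The only points needing a little care are bookkeeping-level: verifying that the rank-one decomposition of $\LL$ summed over the edge multiset really reproduces $\DD - \AA$ (so that the quadratic-form identity is valid for multi-graphs), and using the convention that ``connected'' refers to connectivity through the edges that actually contribute, i.e.\ those of positive weight — otherwise a zero-weight cut edge would enlarge the kernel and break the claim.
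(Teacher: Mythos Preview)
Your argument is correct and is the standard proof of this fact. The paper does not actually prove this statement; it is stated as a well-known \texttt{Fact} without proof, so there is nothing to compare against beyond noting that your two-inclusion argument via the quadratic form $\xx^\top \LL \xx = \sum_{e=(u,v)} \ww(e)(\xx_u-\xx_v)^2$ is exactly the expected justification.
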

\noindent We assume for the rest of the paper that all graphs are connected
undirected multi-graphs with non-negative edge weights.

We say that $\xtil$ is an $\epsilon$-approximate solution to the Laplacian
linear system $\LL\xx = \bb$ if 
$
    \norm{\xtil - \LL^+\bb}_{\LL} \leq \epsilon \norm{\LL^+\bb}_{\LL}
$
where $\norm{\xx}_{\LL} = \sqrt{\xx^\top\LL\xx}$.
\medskip

\noindent
\begin{paragraph}{Block Cholesky Factorization and Schur Complement}
Given a bipartition of the rows $F \sqcup C$, the block Cholesky factorization of
a Laplacian $\LL$ is
\begin{equation} \label{eq:bcf}
    \LL = 
    \begin{pmatrix}
        \II & 0 \\
        \LL_{CF}\LL_{FF}^{-1} & \II
    \end{pmatrix}
    \begin{pmatrix}
        \LL_{FF} & 0 \\
        0 & \LL_{CC} - \LL_{CF}\LL_{FF}^{-1}\LL_{FC}
    \end{pmatrix}
    \begin{pmatrix}
        \II & \LL_{FF}^{-1}\LL_{FC} \\
        0 & \II
    \end{pmatrix},
\end{equation}
where the matrix in the lower-right block of the middle matrix is
defined as
the Schur complement of $\LL$ onto $C$
\[
    \Sc(\LL,C) \defeq \LL_{CC} - \LL_{CF}\LL_{FF}^{-1}\LL_{FC}.
\]
\begin{fact}
    If $G$ is connected, then the matrix $\Sc(\LL_G,C)$ is also a Laplacian
    matrix of a connected graph.
\end{fact}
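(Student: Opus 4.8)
Write $\LL = \LL_G$ in block form with respect to the partition $F \sqcup C$ as in \eqref{eq:bcf}, and assume $C \neq \emptyset$; otherwise the claim is vacuous. We must show that $\Sc(\LL, C) = \LL_{CC} - \LL_{CF}\LL_{FF}^{-1}\LL_{FC}$ is symmetric, has zero row sums, and has nonpositive off-diagonal entries -- so that it is the Laplacian $\LL_H$ of a graph $H$ on vertex set $C$ -- and finally that $H$ is connected. First I would note that $\LL_{FF}$ is positive definite, so the Schur complement is well defined: if $\xx_F \neq \vzero$, the vector equal to $\xx_F$ on $F$ and $\vzero$ on $C$ is not a multiple of $\vone$ (as $C$ is nonempty), hence its $\LL$-quadratic form, which equals $\xx_F^\top \LL_{FF}\xx_F$, is strictly positive by the stated fact that $\ker\LL = \operatorname{span}(\vone)$. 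Symmetry of $\Sc(\LL,C)$ is then immediate from $\LL_{FC} = \LL_{CF}^\top$. For the zero row sums I would use $\LL\vone = \vzero$: its $F$- and $C$-blocks read $\LL_{FF}\vone_F + \LL_{FC}\vone_C = \vzero$ and $\LL_{CF}\vone_F + \LL_{CC}\vone_C = \vzero$; the first gives $\LL_{FF}^{-1}\LL_{FC}\vone_C = -\vone_F$, and substituting into the definition, $\Sc(\LL,C)\vone_C = \LL_{CC}\vone_C + \LL_{CF}\vone_F = \vzero$.

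For the nonpositive off-diagonal entries, the one nonroutine ingredient is that $\LL_{FF}^{-1}$ has all nonnegative entries. This is the inverse-positivity of symmetric M-matrices: $\LL_{FF}$ has nonpositive off-diagonal entries and is positive definite, so for a sufficiently large scalar $s > 0$ the matrix $\NN \defeq \II - s^{-1}\LL_{FF}$ has all nonnegative entries and spectral radius strictly below $1$, whence $\LL_{FF}^{-1} = s^{-1}\sum_{k\ge 0}\NN^k$ is entrywise nonnegative. Since $\LL_{CF}$ and $\LL_{FC} = \LL_{CF}^\top$ have nonpositive entries, $\LL_{CF}\LL_{FF}^{-1}\LL_{FC}$ is entrywise nonnegative, so for distinct $u,v\in C$ we get $\Sc(\LL,C)_{uv} = (\LL_{CC})_{uv} - (\LL_{CF}\LL_{FF}^{-1}\LL_{FC})_{uv} \le (\LL_{CC})_{uv} \le 0$. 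Together with symmetry and the zero row sums, this shows $\Sc(\LL,C) = \LL_H$ for a graph $H$ on $C$ (and such a matrix is automatically PSD).

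To see that $H$ is connected, I would count ranks using \eqref{eq:bcf}. Its two outer factors are unit block-triangular, hence invertible, so $\operatorname{rank}(\LL)$ equals the rank of the middle block-diagonal matrix, namely $\operatorname{rank}(\LL_{FF}) + \operatorname{rank}(\Sc(\LL,C)) = \abs{F} + \operatorname{rank}(\Sc(\LL,C))$, since $\LL_{FF}$ is invertible. As $G$ is connected, $\operatorname{rank}(\LL) = n - 1 = \abs{F} + \abs{C} - 1$, so $\operatorname{rank}(\Sc(\LL,C)) = \abs{C} - 1$; i.e. $\LL_H$ has a one-dimensional kernel. Since a graph Laplacian has a one-dimensional kernel if and only if its graph is connected (the ``connected $\Rightarrow$ nullity $1$'' direction being exactly the fact quoted above), $H$ is connected.

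The one step that is not a direct computation is the entrywise nonnegativity of $\LL_{FF}^{-1}$, so that is what I would be most careful about. An alternative that sidesteps M-matrix theory is to eliminate the vertices of $F$ one at a time: eliminating a single vertex $v$ from a connected graph Laplacian replaces each pair of vertices $u,w$ by an edge of weight $\ww_{uw} + \ww_{uv}\ww_{vw}/\deg_G(v)$, which is manifestly nonnegative and moreover shortcuts every path through $v$, so the result is again the Laplacian of a connected graph; one then iterates, using that $\Sc(\LL,C)$ is exactly what remains after eliminating all the vertices of $F$ in any order.
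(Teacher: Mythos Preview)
Your proof is correct and complete. However, the paper does not actually prove this statement: it is listed as a \emph{Fact} in the preliminaries, treated as a standard result from the literature with no accompanying argument. So there is no ``paper's own proof'' to compare against directly.

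That said, the paper does implicitly establish the Laplacian structure later, via \autoref{lemma:sumofwalks}, which writes $\Sc(\LL,C)$ explicitly as a sum of nonnegatively-weighted edge Laplacians indexed by $C$-terminal walks. The proof of that lemma uses the same Neumann-series expansion of $\LL_{FF}^{-1}$ that you invoke for the M-matrix inverse-positivity step, so your argument and the paper's walk-based characterization are really two views of the same computation: you stop once entrywise nonnegativity of $\LL_{FF}^{-1}$ is established, whereas the paper pushes further to identify each term of the series with a walk. Your direct route has the advantage of being self-contained and of cleanly separating the four properties (symmetry, zero row sums, sign pattern, connectedness); the walk interpretation buys a combinatorial description that the paper then exploits algorithmically in \TW. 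Your rank-counting argument for connectedness and the alternative one-vertex-at-a-time elimination are both standard and sound; neither appears in the paper, since connectedness is simply asserted.
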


\begin{fact} \label{fact:block}
    If $\LL_{FF} \pleq \Ltil_{FF}$, then
    \[
        \begin{pmatrix}
            \LL_{FF} & \LL_{FC} \\
            \LL_{CF} & \LL_{CC}
        \end{pmatrix}
        \pleq 
        \begin{pmatrix}
            \Ltil_{FF} & \LL_{FC} \\
            \LL_{CF} & \LL_{CC}
        \end{pmatrix}.
    \]
\end{fact}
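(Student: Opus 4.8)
The statement is equivalent to showing that the difference of the two block matrices is PSD, so the plan is simply to compute that difference and recognize it. Subtracting the left-hand matrix from the right-hand one, the $FC$, $CF$, and $CC$ blocks all cancel, leaving the block-diagonal matrix
\[
    \begin{pmatrix}
        \Ltil_{FF} - \LL_{FF} & 0 \\
        0 & 0
    \end{pmatrix}.
\]
By hypothesis $\LL_{FF} \pleq \Ltil_{FF}$, i.e. $\MM \defeq \Ltil_{FF} - \LL_{FF} \pgeq 0$. It then remains to observe that this block-diagonal matrix is PSD.

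For that last step I would either argue directly — for any conformally partitioned vector $(\xx_F, \xx_C)$ the associated quadratic form equals $\xx_F^\top \MM \xx_F \ge 0$ — or, to reuse machinery already in the paper, invoke \autoref{fact:twoside} with the rectangular matrix $\CC = \begin{pmatrix}\II \\ 0\end{pmatrix}$ (embedding the $F$-coordinates into the full $F \sqcup C$ index set) and $\BB = 0$, which gives $\CC \MM \CC^\top \pgeq \CC \cdot 0 \cdot \CC^\top = 0$; the left-hand side is exactly the block-diagonal matrix above.

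There is essentially no obstacle here: the only thing to be slightly careful about is the bookkeeping of the block structure (making sure the $\LL_{FC}$ and $\LL_{CC}$ blocks are literally identical on both sides, which they are by assumption), and the elementary fact that a direct sum of PSD matrices is PSD. I expect the whole argument to be two or three lines.
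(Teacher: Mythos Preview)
Your proposal is correct and is exactly the standard one-line argument; note that the paper itself states this as a \emph{Fact} without proof, so there is nothing further to compare against.
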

\end{paragraph}

\begin{paragraph}{Parallel Model and Primitives}
As adopted by all previous works \cite{PS14,LPS15,KyngLPSS16} on parallel
Laplacian solvers, the runtime bounds of this work are in terms of \emph{work}
and \emph{depth} (span).
The work of an algorithm is the total number of operations performed, and the
depth is the length of the longest chain of sequentially dependent operations.
We remark that our algorithms work in the classic CREW PRAM model. 

We use several existing parallel primitives implicitly in our algorithm.
Since our algorithm uses random walks, we frequently need to perform weighted
random samplings and transform between different representations of graphs.
Weighted random sampling asks for sampling items from a set with replacement
such that the probability of sampling an item is proportional to a given
weight.

\begin{lemma}[\cite{HS19}]
    There is a weighted random sampling algorithm that takes $O(n)$ work and
    $O(\log n)$ depth preprocessing time and $O(1)$ work and depth for each
    query.
\end{lemma}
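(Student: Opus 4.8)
The plan is to realize the data structure as a \emph{parallel alias table}, following Walker's alias method with the construction parallelized as in \cite{HS19}. Recall the alias-table data structure: after scaling the input weights so that $p_i \defeq n\,w_i/W$ with $W = \sum_j w_j$, so that $\sum_i p_i = n$, one stores $n$ ``bins'', where bin $k$ records a main index $\mathrm{main}[k]$, an alias index $\mathrm{alias}[k]$, and a threshold $\theta_k \in [0,1]$, with the invariant that for every index $i$ the total probability mass assigned to $i$ across all bins equals $p_i/n = w_i/W$. A query then costs $O(1)$ work and depth: draw a uniform $k \in \{1,\dots,n\}$ and a uniform $u \in [0,1)$, and return $\mathrm{main}[k]$ if $u < \theta_k$ and $\mathrm{alias}[k]$ otherwise. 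Since queries use independent fresh randomness, this gives sampling with replacement, and since drawing an $O(\log n)$-bit uniform integer and fraction is $O(1)$ in the PRAM word model, the stated per-query bound is immediate once the table is built.

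It remains to build the table in $O(n)$ work and $O(\log n)$ depth. First compute $W$ by a parallel reduction and the $p_i$ by an independent map, both $O(n)$ work and $O(\log n)$ depth. Classify each index as \emph{light} ($p_i \le 1$) or \emph{heavy} ($p_i > 1$) by a single comparison, and compact the two classes into contiguous arrays by a parallel prefix sum (stable partition), again $O(n)$ work and $O(\log n)$ depth. The heart of the construction is the assignment of main/alias/threshold to the $n$ bins. Done sequentially this is the classical worklist loop — repeatedly pair one light item with one heavy item, emit a bin, decrement the heavy item's residual mass, and reclassify it — which is inherently serial. A parallel version (as in \cite{HS19}) instead unrolls this process in one shot: using prefix sums it lays the light items' deficits $1 - p_i$ and the heavy items' surpluses along two number lines of equal total length, merges the resulting breakpoints with a parallel merge of two sorted arrays, and reads off, for each of the $n$ bins, which light item is its main index, which heavy item supplies its alias mass, and the corresponding threshold. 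Parallel scan and parallel merge of sorted arrays both run in $O(n)$ work and $O(\log n)$ depth, and correctness amounts to checking that this batched assignment satisfies the mass-conservation invariant above.

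The step I expect to be the real obstacle is precisely this last one: turning the serial alias-construction loop into a work-efficient, $O(\log n)$-depth computation, which is exactly where the prefix-sum/merge reformulation and its correctness proof are needed; everything else — reductions, maps, filters, and the query itself — is a routine composition of standard $O(\log n)$-depth PRAM primitives. For completeness we note a simpler but weaker alternative: store the prefix sums $S_i = \sum_{j \le i} w_j$ (one parallel scan, $O(n)$ work, $O(\log n)$ depth) and answer a query by drawing $u \sim \mathrm{Unif}(0,W)$ and binary-searching for the $i$ with $S_{i-1} < u \le S_i$; this meets the preprocessing bounds but costs $O(\log n)$ per query, which is exactly why the alias-table construction of \cite{HS19} is needed to obtain $O(1)$-time queries.
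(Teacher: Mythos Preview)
The paper does not provide its own proof of this lemma; it is simply imported as a black-box primitive from \cite{HS19}. Your proposal correctly sketches the parallel alias-table construction underlying that reference, so there is nothing to compare against and your account is appropriate as an expansion of the citation.
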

\begin{lemma}[\cite{BM10}]
    There is an algorithm that converts between edge list and adjacency list
    representation of any multi-graph with $m$ multi-edges in $O(m)$ work and
    $O(\log m)$ depth.
\end{lemma}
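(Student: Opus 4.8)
The plan is to reduce both directions of the conversion to a single call of a work-efficient parallel \emph{integer} sort together with a constant number of parallel prefix-sum (scan) operations. Throughout we use that the input is a connected multi-graph, so $n \le m+1$ and hence $O(n) = O(m)$ and $\log n = O(\log m)$; edge weights and multiplicities are carried along as satellite data and never affect the asymptotics. \emph{Edge list $\to$ adjacency list:} given the array of $m$ multi-edges $e_k = (u_k, v_k, \ww_k)$, in parallel emit for each $k$ two \emph{half-edge} records $(u_k; k)$ and $(v_k; k)$, forming an array $H$ of length $2m$ in $O(m)$ work and $O(1)$ depth. Stably sort $H$ by its first coordinate (the endpoint vertex label); since the keys are integers in $\{1,\dots,n\}$ with $n = O(m)$, we invoke a parallel integer/bucket sort rather than a comparison sort. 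After sorting, all half-edges incident to a fixed vertex occupy a contiguous block of $H$; mark the positions where the vertex label changes and run a parallel prefix-sum over these marks to recover, for each vertex $u$, the start offset and degree of its block. Emitting the offset array together with the sorted $H$ gives the adjacency-list (CSR-style) representation in $O(m)$ total work and $O(\log m)$ depth. If one also wants each half-edge to carry a pointer to its twin (the other occurrence of $e_k$), sort the half-edges by the field $k$, pair adjacent records, and scatter the pointers back — again $O(m)$ work and $O(\log m)$ depth.

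\emph{Adjacency list $\to$ edge list:} run a parallel prefix-sum over the vertex degrees to obtain, for every vertex $u$, the position in an output array of length $2m$ at which its incident half-edges should be written, then in parallel copy all half-edges there. This array already \emph{is} an edge list in which every edge appears twice, once per endpoint; if the required convention is that each edge appear exactly once, reuse the twin pointers from the forward direction (or recompute them with one integer sort on the edge-id field) and keep only one occurrence from each twin pair via a compaction scan. Every step here is a prefix sum, a parallel copy, or at most one integer sort, so the cost is again $O(m)$ work and $O(\log m)$ depth.

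The main obstacle is the sorting step: a comparison-based sort would cost $\Theta(m \log m)$ work, which is too much, and a naive bucket sort either requires concurrent writes (outside CREW) or incurs depth proportional to the largest bucket. The resolution — and the substance of the cited construction — is a work-optimal parallel \emph{integer} sort for keys drawn from a range of size $O(m)$, assembled from parallel prefix sums and (approximate) compaction/counting, which attains $O(m)$ work and $O(\log m)$ depth in the CREW PRAM model; every other part of the lemma is routine bookkeeping layered on top of that primitive.
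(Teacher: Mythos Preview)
The paper does not give its own proof of this lemma: it is stated in the preliminaries as a black-box parallel primitive and attributed entirely to the cited reference, with no argument reproduced in the body or the appendix. So there is no ``paper's proof'' to compare against; the authors simply import the result.

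Your sketch is the standard reduction and is structurally sound: both directions of the conversion really do boil down to one integer sort on endpoint labels (or edge ids) plus a constant number of prefix sums and scatters, and you correctly isolate the sorting step as the only non-trivial ingredient. The one point to be careful about is the last paragraph, where you assert that a deterministic CREW integer sort on keys in $\{1,\dots,O(m)\}$ with $O(m)$ work and $O(\log m)$ depth is available ``assembled from parallel prefix sums and (approximate) compaction/counting.'' That is precisely the nontrivial content being outsourced to the citation, and it is not something you have actually argued; attaining those bounds simultaneously on CREW (as opposed to CRCW, or with randomization, or with an extra $\log\log$ factor) is delicate. Since the paper itself treats this as a citation rather than a proof, it is fine for you to do the same --- but you should phrase it as invoking the cited primitive rather than as something you have derived.
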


\end{paragraph}

\section{Block Cholesky Factorization Algorithm} \label{sec:bcf}
In this section, we present our algorithm for solving Laplacian linear
systems, and the key subroutines. 
The algorithm \BC for recursively applying a sequence of block
Cholesky factorization and building the necessary matrices for solving
the system is summarized in \autoref{alg:build}. 
\autoref{alg:solve} summarizes the \AC algorithm for solving Laplacian linear
systems using the matrices generated by \BC to constant approximation.

\subsection{Overview}
The block Cholesky factorization of a Laplacian
$\LL$ from \eqref{eq:bcf} allows us to write $\LL^+$ as
\begin{equation} \label{eq:linv}
    \LL^+ =
    \begin{pmatrix} \II & -\LL_{FF}^{-1}\LL_{FC} \\ 0 & \II
    \end{pmatrix}
    \begin{pmatrix} \LL_{FF}^{-1} & 0 \\ 0 & \Sc(\LL, \CC)^+
    \end{pmatrix}
    \begin{pmatrix} \II & 0 \\ -\LL_{CF}\LL_{FF}^{-1} & \II
    \end{pmatrix}.
\end{equation}
This factorization presents a natural approach to solving Laplacian
systems.
The upper and lower triangular block matrices are easy to
apply, given that the system of $\LL_{FF}$ is easy to solve.
As for the middle block diagonal matrix, since a Schur complement of a Laplacian
matrix is also Laplacian, we can recursively apply the block Cholesky
factorization until the Schur complement is small.
To ensure good parallel runtime, we restrict the number of recursions to 
$O(\log n)$.
Note, however, that the graph corresponding to the Schur complement of a sparse
graph Laplacian can be dense.

Now, our goal is to find a large subset $F$ such that two conditions
hold: 1) $\LL_{FF}$ is an easy matrix to invert, and 2) a sparse
approximation to the Schur complement $\Sc(\LL, V \setminus F)$ can be
computed efficiently.
If both steps can be done efficiently,
then we can recurse on the approximate Schur complement.

We show that it suffices to find $F$ such that $\LL_{FF}$ represents an
``almost independent'' graph. To be precise, we will require $\LL_{FF}$
to be a $5$-DD matrix, that we define as follows.
\begin{definition}[5-DD Matrix]
  A symmetric matrix $\MM$ is said to be $5$-diagonally
  dominant (5-DD) if for each row $i$, $\MM_{ii} \geq 5 \sum_{j:j \neq i}
  |\MM_{ij}|.$
\end{definition}
In the context of a graph, a subset $F \subseteq V$ is said to be $5$-DD if
$\LL_{FF}$ is a $5$-DD matrix.
\DS (\autoref{alg:sdd}) is a subroutine from~\cite{LPS15} that allows us to
find 5-DD subsets efficiently (See \autoref{subsec:sdd}).

Secondly,
we show in \autoref{subsec:trw} that when $F \subseteq V$ is a 5-DD subset, we
can generate a sparse approximation to $\Sc(\LL, V\setminus F)$ using a simple
scheme that samples short random walks (\TW in \autoref{alg:trw}).

Before we present the aforementioned steps, we first introduce the notion of
$\alpha$-boundedness in \autoref{subsec:alpha}. 
This notion guides us in studying
the central object, \emph{leverage score}, in the Laplacian paradigm.

\begin{algorithm}[t]
    \caption{Block Cholesky Factorization Algorithm}
    \label{alg:build}
    \Pro{\BC{$G$}}{
        Initially, let $G^{(0)} \gets G$, $k \gets 0$.\;
        \While{$G^{(k)}$ has more than 100 vertices}{ \label{alg:build_d}
            $k \gets k+1$.\;
            $F_k \gets$ \DS{$G^{(k-1)}$} and set $C_k \gets C_{k-1} \backslash
            F_k$.\;
            $G^{(k)} \gets$ \TW{$G^{(k-1)}, C_k$}.\;
        }
        Let $d$ be the last $k$.\;
        \Return{$(G^{(0)}, \ldots G^{(d)}; F_1, \ldots F_d)$.}\;
    }
\end{algorithm}

\begin{algorithm}[t]
    \caption{Apply Block Cholesky Factorization}
    \label{alg:solve}
    \Pro{\AC{$G^{(0)}, \ldots G^{(d)}; F_1, \ldots F_d; \bb$}}{
        Let $\bb^{(0)} \gets \bb$.\;
        \For{$k \gets 0, \ldots ,d-1$}{
            Apply forward substitution to solve
            $
                \begin{pmatrix}
                    (\ZZ^{(k)})^{-1} & 0 \\
                    (\LL_{G^{(k)}})_{C_{k+1}F_{k+1}} & \II
                \end{pmatrix}
                \begin{pmatrix}
                    \yy^{(k)}_{F_{k+1}} \\
                    \yy^{(k)}_{C_{k+1}}
                \end{pmatrix}
                = 
                \bb^{(k)},
            $ %
            where $\ZZ^{(k)}$ is the equivalent operator of
            \JC{$G^{(k)},F_{k+1}, 1/2d, \cdot$}.\;
            $\bb^{(k+1)} \gets \yy^{(k)}_{C_{k+1}}$.\;
        }
        Solve $\LL_{G^{(d)}}\xx^{(d)} = \bb^{(d)}$.\;\label{alg:solve_d}
        \For{$k \gets d-1, \ldots ,0$}{
            Apply backward substitution to solve
            $
                \begin{pmatrix}
                    \II & \ZZ^{(k)}(\LL_{G^{(k)}})_{F_{k+1}C_{k+1}} \\
                    0 & \II
                \end{pmatrix}
                \begin{pmatrix}
                    \xx^{(k)}_{F_{k+1}} \\
                    \xx^{(k)}_{C_{k+1}}
                \end{pmatrix}
                =
                \begin{pmatrix}
                    \yy^{(k)}_{F_{k+1}} \\
                    \xx^{(k+1)}
                \end{pmatrix}.
            $ \;
        }
        \Return{$\xx^{(0)}$.}\;
    }
    \Pro{\JC{$G, F, \epsilon, \bb$}}{
        Set $(\LL_{G})_{FF} = \XX + \YY$ where $\XX$ is diagonal and
        $\YY$ is the Laplacian of the induced subgraph $G[F]$.\;
        Set $l$ the smallest odd integer such that $l \geq
        \log_2(3/\epsilon)$.\;
        \For{$i = 1 \ldots l$}{
            $\xx^{(i)} \gets \XX^{-1}\bb - \XX^{-1}\YY\xx^{(i-1)}$.\;
        }
        \Return{$\xx^{(l)}$.}\;
    }
\end{algorithm}

\subsection{$\alpha$-boundedness for Multi-edges} \label{subsec:alpha}
For a positive scalar $\alpha$, we say that a multi-edge $e$ is
$\alpha$-bounded w.r.t.~some Laplacian matrix $\LL$ if its leverage score is at
most $\alpha$, i.e. the weight of the multi-edge times its effective resistance
in $\LL$: 
\[
    \ttau(e) = \ww(e) \bb_e^\top \LL^{+}\bb_e \leq \alpha.
\]
We say that a weighted multi-graph $G$ is $\alpha$-bounded w.r.t.~a Laplacian
$\LL$ if every multi-edge of $G$ is $\alpha$-bounded w.r.t.~$\LL$.

For \BC, our input graph $G$ is required to be a connected
undirected multi-graph $G$ such that each multi-edge is $\alpha$-bounded
w.r.t.~$\LL_G$ for some $\alpha < 1$.
Having $\alpha$-bounded leverage scores is essential for good matrix
concentration guarantees (see \autoref{thm:freedman}).
Our algorithm maintains the $\alpha$-boundedness of all newly sampled
multi-edges (see \TW and \autoref{sec:schur}).

Given an input simple graph, there are two ways to achieve
$\alpha$-boundedness initially.
Since $w(e) \bb_e^\top \LL_G^+ \bb_e \leq 1$ for any edge $e$, it suffices to
split each edge into $\ceil{\alpha}$ copies with $1/\ceil{\alpha}$ times the
original weight. 
This approach is used in \autoref{thm:main}. 
\begin{lemma} \label{lemma:split}
    \sloppy
    There is an algorithm that,
    given a weighted simple graph $G$ with $n$ vertices,
    $m$ edges and a positive scalar $\alpha$ satisfying $\alpha^{-1} =
    \textnormal{poly}(n)$,
    returns a multi-graph $H$ with $O(m\alpha^{-1})$
    multi-edges such that each multi-edge of $H$ is $\alpha$-bounded
    w.r.t.~$\LL_H$ and $\LL_H = \LL_G$.
    The algorithm runs in $O(m\alpha^{-1})$ work and $O(\log n)$
    depth. 
\end{lemma}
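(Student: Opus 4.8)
The plan is to give an explicit, embarrassingly parallel construction: for each edge $e = (u,v)$ of $G$ with weight $\ww(e)$, let $k_e \defeq \ceil{\alpha^{-1}}$ and replace $e$ by $k_e$ parallel multi-edges $e_1, \dots, e_{k_e}$, each connecting $u$ and $v$ with weight $\ww(e)/k_e$. Let $H$ be the resulting multi-graph. First I would observe that this operation does not change the Laplacian: for any pair $u,v$, the total weight between $u$ and $v$ is unchanged, so $\LL_H = \LL_G$ entrywise, and in particular $\LL_H^+ = \LL_G^+$. The number of multi-edges in $H$ is $\sum_{e \in E} k_e = m \ceil{\alpha^{-1}} = O(m\alpha^{-1})$, since $\alpha^{-1} = \mathrm{poly}(n) \ge 1$ is assumed (so $\ceil{\alpha^{-1}} \le 2\alpha^{-1}$; if $\alpha \ge 1$ the statement is trivial as every edge is already $1$-bounded, but the hypothesis $\alpha^{-1} = \mathrm{poly}(n)$ presumes $\alpha < 1$).

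Next I would verify $\alpha$-boundedness. For any edge $e=(u,v)$ of the \emph{original} simple graph $G$, the leverage score satisfies $\ww(e)\,\bb_e^\top \LL_G^+ \bb_e \le 1$; this is standard (leverage scores of a graph lie in $[0,1]$, since the effective resistance between $u$ and $v$ is at most $1/\ww(e)$ — removing all other edges only increases effective resistance, and a single edge of weight $\ww(e)$ has effective resistance $1/\ww(e)$). Now each copy $e_i$ in $H$ connects the same pair $u,v$, so $\bb_{e_i} = \pm\bb_e$ and $\bb_{e_i}^\top \LL_H^+ \bb_{e_i} = \bb_e^\top \LL_G^+ \bb_e$ using $\LL_H^+ = \LL_G^+$. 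Hence
\[
  \ttau_H(e_i) = \frac{\ww(e)}{k_e}\,\bb_{e_i}^\top \LL_H^+ \bb_{e_i}
  = \frac{1}{k_e}\cdot \ww(e)\,\bb_e^\top \LL_G^+ \bb_e
  \le \frac{1}{\ceil{\alpha^{-1}}} \le \alpha,
\]
so every multi-edge of $H$ is $\alpha$-bounded w.r.t.~$\LL_H$, as required.

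Finally I would address the parallel complexity. Computing $k_e = \ceil{\alpha^{-1}}$ is $O(1)$ work; the output edge list of $H$ has $O(m\alpha^{-1})$ entries, and we can write it by assigning, in parallel, a contiguous block of $k_e$ slots to each original edge $e$ (the block offsets are obtained by a parallel prefix sum over the $k_e$'s, or more simply just $m$ blocks of equal size $\ceil{\alpha^{-1}}$, which needs no prefix sum at all). A parallel prefix sum over $m$ values takes $O(m)$ work and $O(\log m) = O(\log n)$ depth; writing each block is $O(k_e)$ work and $O(1)$ depth per block, for $O(m\alpha^{-1})$ total work and $O(1)$ additional depth. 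This gives the claimed $O(m\alpha^{-1})$ work and $O(\log n)$ depth. I do not anticipate a genuine obstacle here — the only point requiring a moment's care is the bound $\ttau(e) \le 1$ for edges of a simple graph, which I would either cite or justify in one line via the monotonicity of effective resistance under edge deletion (equivalently, $\LL_G \pgeq \ww(e)\,\bb_e\bb_e^\top$ implies $\bb_e^\top\LL_G^+\bb_e \le \ww(e)^{-1}$ by \autoref{fact:inver} and \autoref{fact:twoside}).
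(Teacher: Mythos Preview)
Your proposal is correct and is exactly the approach the paper takes: the paper itself only sketches the argument in one sentence before stating the lemma (``Since $\ww(e)\bb_e^\top\LL_G^+\bb_e \le 1$ for any edge $e$, it suffices to split each edge into $\ceil{\alpha^{-1}}$ copies with $1/\ceil{\alpha^{-1}}$ times the original weight''), and you have filled in precisely those details. Nothing more is needed.
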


For certain applications where the graph is relatively dense,
one can improve its work dependency on edges by using a leverage score
overestimation scheme and perform the splitting by such estimations. 
By invoking our Laplacian solver in \autoref{thm:main}, we get
the following guarantees due to \cite{CohenLMMPS15,kyng2017approximate}.
We prove the parallel runtime in \autoref{sec:alpha}.

\begin{lemma} 
    \label{lemma:sparsebound}
    \sloppy
    There is an algorithm that,
    given a weighted simple graph $G$ with $n$ vertices and $m$ edges and a
    positive scalar $\alpha, K$ satisfying $\alpha^{-1} = O(n)$ and
    $0 < K < m$,
    with high probability returns a multi-graph $H$
    with $O(m + nK\alpha^{-1})$ multi-edges such that each multi-edge of
    $H$ is $\alpha$-bounded w.r.t.~$\LL_H$ and $\LL_H = \LL_G$. 
    The algorithm runs in 
    $O(nK\alpha^{-1} + m\log n + \frac{m}{K}\log^4 n \log\log n)$ work and
    $(\log^2 n \log\log n)$ depth.
\end{lemma}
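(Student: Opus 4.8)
The plan is to reduce the problem to computing a constant-factor overestimate of the leverage scores (effective resistances) of all $m$ edges of $G$, and then splitting each edge according to that overestimate exactly as in \autoref{lemma:split}. The key point is that to certify $\alpha$-boundedness we do not need exact leverage scores: if $\widetilde{\ttau}(e) \ge \ttau(e)$ is any overestimate, then splitting edge $e$ into $\ceil{\widetilde{\ttau}(e)/\alpha}$ parallel copies of weight $\ww(e)\big/\ceil{\widetilde{\ttau}(e)/\alpha}$ produces multi-edges each of leverage score at most $\alpha$ (scaling the weight of a parallel copy down by an integer factor $t$ scales its leverage score down by $t$, and $\ceil{\widetilde\ttau(e)/\alpha}\ge \widetilde\ttau(e)/\alpha \ge \ttau(e)/\alpha$), while $\LL_H=\LL_G$ since parallel copies sum back to the original edge. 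The total multi-edge count is $\sum_e \ceil{\widetilde\ttau(e)/\alpha} \le m + \alpha^{-1}\sum_e \widetilde\ttau(e)$, so everything hinges on controlling $\sum_e \widetilde\ttau(e)$: since $\sum_e \ttau(e) = n-1$ (the trace/rank identity for leverage scores of a connected graph), a constant-factor overestimate gives $\sum_e \widetilde\ttau(e) = O(n)$ and hence $O(m + n\alpha^{-1})$ multi-edges. To get the stated $O(m + nK\alpha^{-1})$ bound with the claimed work, I would instead use the coarser estimator of \cite{CohenLMMPS15,kyng2017approximate}, which partitions the edges and spends less work per edge at the cost of a weaker (but still $\mathrm{poly}$) overestimate that inflates the sum by a factor $K$; this is exactly the work/quality tradeoff that the parameter $K$ controls.

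Concretely, the steps are: (1) Invoke the leverage-score overestimation routine of \cite{CohenLMMPS15,kyng2017approximate}, which reduces computing approximate leverage scores to a poly-logarithmic number of calls to a Laplacian solver on $G$ (via Johnson–Lindenstrauss projection of $\BB\LL^+$, i.e.\ solving $\LL\xx = \BB^\top \gg_j$ for $O(\log n)$ Gaussian vectors $\gg_j$ and reading off $\|\BB\LL^+\BB^\top\|$ row norms). Here each solve is done to constant accuracy using our \autoref{thm:main} in $O(m\log^3 n\log\log n)$ work and $O(\log^2 n \log\log n)$ depth. (2) Use the $K$-parameter version so that only $O(m/K)$ "representative" solves' worth of work is charged, contributing $O(\tfrac{m}{K}\log^4 n\log\log n)$ work (the extra $\log n$ over \autoref{thm:main}'s bound comes from the $O(\log n)$ projection directions), plus $O(m\log n)$ work for the projections/row-norm bookkeeping and graph-format conversions, and depth $O(\log^2 n\log\log n)$. (3) Given the overestimates $\widetilde\ttau(e)$, run the splitting of \autoref{lemma:split} in parallel over all edges: $O(1)$ work and depth per edge to compute $\ceil{\widetilde\ttau(e)/\alpha}$ and emit that many copies, for a total of $O(m + nK\alpha^{-1})$ work and $O(\log n)$ depth, then prefix-sum to lay them out. (4) Verify correctness: $\LL_H = \LL_G$ because parallel copies of an edge sum to the original; and each emitted multi-edge $e'$ has $\ttau_H(e') = \ww(e')\,\bb_{e'}^\top\LL_H^+\bb_{e'} = \tfrac{1}{\ceil{\widetilde\ttau(e)/\alpha}}\ww(e)\,\bb_e^\top\LL_G^+\bb_e \le \alpha$, so $H$ is $\alpha$-bounded.

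I expect the main obstacle to be the bookkeeping in step (2): correctly invoking the $K$-parameterized leverage-score estimator of \cite{CohenLMMPS15,kyng2017approximate} as a \emph{parallel} routine and tracking how the error parameter, the number of solver calls, and the overestimation factor $K$ interact to land exactly on the work bound $O(nK\alpha^{-1} + m\log n + \tfrac{m}{K}\log^4 n\log\log n)$ and depth $O(\log^2 n \log\log n)$. The subtlety is that the estimator is normally stated sequentially and with a solver given as a black box; one must check that its recursion (which itself may eliminate/contract edges or recurse on a sparsified graph) has $O(\log n)$ parallel rounds and that each round's Laplacian solve can be charged to \autoref{thm:main}. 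The remaining ingredients — the identity $\sum_e\ttau(e)=n-1$, the monotonicity of leverage score under weight rescaling of a parallel copy, and the parallel primitives for sampling/format conversion/prefix sums cited in \autoref{sec:pre} — are routine, which is why this lemma's detailed proof is deferred to \autoref{sec:alpha}.
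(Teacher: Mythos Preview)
Your proposal is correct and follows the same approach as the paper's \autoref{sec:alpha}: overestimate leverage scores via \cite{CohenLMMPS15}, then split each edge $e$ into $\ceil{\widehat\ttau(e)/\alpha}$ copies. One clarification on your step~(2): the $K$-tradeoff is not a recursive or partitioned estimator. Rather, one uniformly samples $m/K$ edges of $G$ (reweighted) to form a sparser graph $G'$, then runs the Johnson--Lindenstrauss effective-resistance sketch with the $O(\log n)$ Laplacian solves performed on $\LL_{G'}$, not on $\LL_G$; since $G'$ has only $m/K$ edges, each solve via \autoref{thm:main} costs $O\!\paren{\tfrac{m}{K}\log^3 n\log\log n}$ work, and the resulting estimates $\widehat\ttau(e) = \ww(e)\,\bb_e^\top\LL_{G'}^+\bb_e$ are overestimates of the true leverage scores summing to $O(nK)$. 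There is no recursion, so your worry about parallel rounds in the estimator is unfounded; the remaining pieces (the $O(m\log n)$ work for the per-edge $\ell_2$-distance computations, the splitting, and the correctness argument that $\LL_H=\LL_G$ with each copy $\alpha$-bounded) are exactly as you describe.
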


\subsection{Vertex Elimination using Diagonally Dominant Sets} \label{subsec:sdd}

\begin{algorithm}[t]
    \caption{Routine for generating a $5$-DD subset
    \cite{LPS15, KyngLPSS16}}
    \label{alg:sdd}
    \Pro{\DS{$G = (V,E,\ww)$}}{
        Set of vertices $F \gets \emptyset$ initially.\;
        \While{$|F| \leq n/40$}{
            Let $F'$ be a uniformly sampled subset of $V$ of size
            $\frac{n}{20}$.\;
            Let $F$ be the set of all $i \in F'$ such that 
            $\sum_{e \in E(G[F]) : e \ni i}
            \ww(e) \leq \frac{1}{5} \sum_{e \in E: e \ni i} \ww(e)$.\;
        }
        \Return{$F$.}\;
    }
\end{algorithm}

The work of \cite{LPS15} gives an algorithm that finds a $5$-DD subset
of constant fraction size of the vertices with linear work and logarithmic
expected depth.
This is achieved by repeatedly sampling a large subset of vertices and
removing vertices that violate the $5$-DD condition until the
resulting subset is large enough, described in \textsc{5DDSubset}.
\begin{lemma}[\cite{LPS15} Lemma 5.2 paraphrased] 
    \label{lemma:sdd}
    For every multi-graph $G$ with $n$ vertices, $m$ multi-edges,
    \DS computes a $5$-DD subset $F$ of size at least $n/40$ in
    $O(m)$ expected work and $O(\log m)$ expected depth where $m$ is the number
    of multi-edges in $G$.
\end{lemma}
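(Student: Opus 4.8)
The plan is to verify two things separately: (i) the set $F$ returned by \DS\ is $5$-DD, and (ii) the \textbf{while} loop performs only $O(1)$ iterations in expectation, each costing $O(m)$ work and $O(\log m)$ depth. Since the loop guard forces $\abs{F} > n/40$ at termination, the size requirement is automatic, so (i) handles correctness and (ii) handles the runtime.

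For (i), consider any vertex $i$ that survives an iteration. By the test in the algorithm, the total weight of multi-edges incident to $i$ inside $G[F']$ is at most $\tfrac15\sum_{e\ni i}\ww(e)$; since $F\subseteq F'$, restricting the induced subgraph further only deletes incident multi-edges, so the same bound holds with $G[F']$ replaced by $G[F]$. Now $\LL_{FF}$ is the principal submatrix of $\LL_G$ on $F$, so its diagonal entry $(\LL_{FF})_{ii}$ is the \emph{full} weighted degree $\sum_{e\ni i}\ww(e)$, whereas $\sum_{j\in F,\,j\ne i}\abs{(\LL_{FF})_{ij}} = \sum_{e\in E(G[F]):\,e\ni i}\ww(e)$. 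Combining the two gives $(\LL_{FF})_{ii}\ge 5\sum_{j\ne i}\abs{(\LL_{FF})_{ij}}$ for every $i\in F$, i.e.\ $\LL_{FF}$ is $5$-DD.

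For (ii), fix a vertex $i$ and condition on $i\in F'$, an event of probability $\tfrac1{20}$. For any other endpoint $j$ of a multi-edge at $i$, $\pr[\,j\in F'\mid i\in F'\,] = \tfrac{n/20-1}{n-1}\le\tfrac1{20}$, so by linearity the expected weight of multi-edges at $i$ inside $G[F']$ is at most $\tfrac1{20}\sum_{e\ni i}\ww(e)$; Markov's inequality then shows that $i$ fails the test with probability at most $\tfrac14$. Hence $\av\abs{F} = \sum_i \pr[i\in F']\,\pr[\,i\in F\mid i\in F'\,] \ge \tfrac{3}{80}n$, and since $\abs{F}\le\abs{F'} = n/20$ always, Markov applied to the nonnegative quantity $n/20-\abs{F}$ gives $\pr[\,\abs{F}>n/40\,]\ge\tfrac12$. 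Thus each iteration independently terminates the loop with probability at least $\tfrac12$, so the number of iterations is stochastically dominated by a geometric random variable of mean at most $2$.

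Finally, for the per-iteration cost: sampling $F'$ uses the weighted-sampling primitive in $O(n)$ work and $O(\log n)$ depth; marking membership in $F'$ and then computing, for every vertex, the incident multi-edge weight inside $G[F']$ is one parallel pass over the multi-edge list followed by a segmented prefix sum, costing $O(m)$ work and $O(\log m)$ depth; filtering the failing vertices and computing $\abs{F}$ is another $O(n)$-work, $O(\log n)$-depth prefix sum. Multiplying by the $O(1)$ expected iteration count (the per-iteration depths are deterministic and simply add across the sequentially dependent iterations) and using $n\le m+1$ for connected $G$, we obtain $O(m)$ expected work and $O(\log m)$ expected depth. The only genuinely delicate point is the probabilistic core of (ii): handling the negative dependence of sampling $F'$ without replacement and then chaining the forward Markov bound on $\av\abs{F}$ with the reverse Markov bound that converts it into a constant per-iteration success probability; the rest is routine parallel-primitive bookkeeping.
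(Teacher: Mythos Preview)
Your proposal is correct and follows essentially the same approach as the paper: both establish $5$-DD-ness via $F\subseteq F'$, bound the conditional failure probability at each vertex by $1/4$ using $\pr[j\in F'\mid i\in F']\le 1/20$ and Markov, and then apply Markov a second time to the deficit $n/20-\abs{F}$ (the paper phrases this as Markov on $\abs{\{i:A_i\}}$, which is the same quantity) to get a constant per-iteration success probability. Your write-up is slightly more explicit about the parallel primitives, but the argument is the same.
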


\cite{LPS15} also observed that 5-DD matrices can be solved
efficiently using Jacobi method. 

\begin{lemma}[\cite{LPS15} Lemma 5.4] 
    \label{lemma:jacobi}
    Given a $5$-DD matrix $\MM = \XX + \YY$ where $\XX$ is diagonal,
    $\YY$ is a Laplacian, and a scalar $0 < \epsilon < 1$, the
    operator $\ZZ$ defined as
    \begin{equation} \label{eq:jacobi}
        \ZZ = \sum_{i=0}^l \XX^{-1} (-\YY\XX^{-1})^i,
    \end{equation}
    where $l$ is an odd
    integer such that $l \geq \log_2 3/\epsilon,$ satisfies
    $
        \MM \pleq \ZZ^{-1} \pleq \MM + \epsilon \YY
   $
    and $\ZZ$ can be applied to a vector in work $O(m \log 1/\epsilon)$ and
    depth $O(\log m \log 1/\epsilon)$ where
    $m$ satisfies that $\MM$ can be applied in $O(m)$ work and $O(\log m)$ depth. 
\end{lemma}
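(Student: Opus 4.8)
The plan is to simultaneously ``diagonalize'' $\XX$ and $\YY$ by passing to the conjugated matrix $\NN \defeq \XX^{-1/2}\YY\XX^{-1/2}$, reduce the desired Loewner inequality to a one-variable inequality about the truncated Neumann series $P_l(t) \defeq \sum_{i=0}^{l}(-t)^i$ evaluated on the spectrum of $\NN$, and then check that scalar inequality using the $5$-DD hypothesis. The first step is to unpack $5$-diagonal dominance. Since $\YY$ is a Laplacian, $\YY_{ij}=\MM_{ij}$ for $i\neq j$ and $\YY_{ii}=\sum_{j\neq i}|\YY_{ij}|$, so $5$-DD of $\MM=\XX+\YY$ says precisely $\XX_{ii}+\YY_{ii}=\MM_{ii}\ge 5\YY_{ii}$, i.e. $\XX_{ii}\ge 4\YY_{ii}\ge 0$; if some $\XX_{ii}=0$ then that row of $\MM$ vanishes (a degenerate case split off separately), so we may assume $\XX$ is positive definite. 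Hence $\operatorname{diag}(\YY)\pleq \tfrac14\XX$, and combining this with the standard bound $\YY\pleq 2\operatorname{diag}(\YY)$ for a Laplacian gives $\YY\pleq\tfrac12\XX$; conjugating by $\XX^{-1/2}$ shows $\NN$ is PSD with $\NN\pleq\tfrac12\II$, i.e. every eigenvalue $t$ of $\NN$ lies in $[0,\tfrac12]$.

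Next I would rewrite $\ZZ$ as a matrix polynomial in $\NN$. A direct computation gives $-\YY\XX^{-1}=\XX^{1/2}(-\NN)\XX^{-1/2}$, hence $\XX^{-1}(-\YY\XX^{-1})^i=\XX^{-1/2}(-\NN)^i\XX^{-1/2}$ and therefore $\ZZ=\XX^{-1/2}P_l(\NN)\XX^{-1/2}$, where $P_l(t)=\sum_{i=0}^l(-t)^i=\frac{1-t^{l+1}}{1+t}$ (oddness of $l$ is used here so that $(-t)^{l+1}=t^{l+1}$; it is also what makes the lower bound below hold). Since $P_l(t)>0$ on $[0,\tfrac12]$, both $P_l(\NN)$ and $\ZZ$ are PD and $\ZZ^{-1}=\XX^{1/2}P_l(\NN)^{-1}\XX^{1/2}$; likewise $\MM=\XX^{1/2}(\II+\NN)\XX^{1/2}$ and $\MM+\epsilon\YY=\XX^{1/2}(\II+(1+\epsilon)\NN)\XX^{1/2}$. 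Conjugating the target $\MM\pleq\ZZ^{-1}\pleq\MM+\epsilon\YY$ by $\XX^{-1/2}$, and using that all three inner matrices are functions of the single symmetric matrix $\NN$, it suffices to prove the scalar inequalities
\[
    1+t \;\le\; \frac{1}{P_l(t)} \;\le\; 1+(1+\epsilon)t \qquad \text{for all } t\in[0,\tfrac12].
\]

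For the scalar estimate, write $P_l(t)^{-1}=\frac{1+t}{1-t^{l+1}}$; the lower bound is immediate since $1-t^{l+1}\le 1$. Clearing the (positive) denominator and simplifying, the upper bound is equivalent to $t^{l}\bigl(1+(1+\epsilon)t\bigr)\le\epsilon$, and since $t\le\tfrac12$ and $\epsilon<1$ the bracket is at most $2$, so it suffices that $2\cdot 2^{-l}\le\epsilon$, which holds whenever $l\ge\log_2(2/\epsilon)$, in particular under the hypothesis $l\ge\log_2(3/\epsilon)$. For the cost bound, applying $\ZZ$ to a vector is Horner's rule for $P_l$, namely the recursion $\xx^{(i)}\gets\XX^{-1}\bb-\XX^{-1}\YY\,\xx^{(i-1)}$ run for $l=O(\log 1/\epsilon)$ steps; each step is one multiplication by the diagonal $\XX^{-1}$ ($O(n)$ work, $O(1)$ depth) and one multiplication by $\YY=\MM-\XX$, which costs at most the cost of applying $\MM$ ($O(m)$ work, $O(\log m)$ depth by hypothesis) plus the cheap $\XX$. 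Summing over the $O(\log 1/\epsilon)$ steps gives $O(m\log 1/\epsilon)$ work and $O(\log m\log 1/\epsilon)$ depth.

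I expect the first step --- extracting $\NN\pleq\tfrac12\II$ from the $5$-DD hypothesis --- to be the only substantive point: this is the sole place the constant $5$ is used, and it is exactly what keeps the spectrum of $\NN$ bounded away from the point $t=1$ where the Neumann/Jacobi iteration converges slowly, thereby making $O(\log 1/\epsilon)$ terms enough. Once the spectrum is pinned inside $[0,\tfrac12]$, both the two-sided operator bound and the iteration-cost bound follow from the elementary analysis of $P_l$ above.
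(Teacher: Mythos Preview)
Your proof is correct and follows essentially the same approach as the paper: both extract the bound $\YY\pleq\tfrac12\XX$ from the $5$-DD hypothesis to pin the spectrum of $\XX^{-1/2}\YY\XX^{-1/2}$ (equivalently, of $\XX^{-1}\YY$) in $[0,\tfrac12]$, and then reduce the two-sided Loewner inequality to the scalar estimate $1+t\le (1+t)/(1-t^{l+1})\le 1+(1+\epsilon)t$ on that interval. Your explicit conjugation by $\XX^{-1/2}$ and functional-calculus phrasing is a slightly cleaner packaging of the paper's eigenvalue computation for $\ZZ\MM=\II-(\XX^{-1}\YY)^{l+1}$ and $\ZZ(\MM+\epsilon\YY)$, but the substance is identical.
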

The \JC algorithm (\autoref{alg:solve}) implements the
operator described by \autoref{lemma:jacobi}.
Such an approximation can then be used to replace $\LL_{FF}^{-1}$
subblock in the block Cholesky factorization form to generate an
approximation to the Laplacian.
\begin{lemma}[\cite{LPS15} Lemma 5.6 paraphrased] 
    \label{lemma:jacobiapprox}
    Let $\LL$ be a Laplacian matrix, $F$ be a $5$-DD subset with the complement
    subset $C$ and a scalar $0 < \epsilon \leq 1/2$.
    For $\ZZ$ as defined in \eqref{eq:jacobi}, 
    \[
        \begin{pmatrix}
            \II & 0 \\
            \LL_{CF}\ZZ & \II
        \end{pmatrix}
        \begin{pmatrix}
            \ZZ^{-1} & 0 \\
            0 & \Sc(\LL, C)
        \end{pmatrix}
        \begin{pmatrix}
            \II & \ZZ\LL_{FC} \\
            0 & \II
        \end{pmatrix}
        \approx_{\epsilon} \LL.
    \]
\end{lemma}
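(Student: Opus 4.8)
Write $\Ltil$ for the matrix on the left of the claimed relation. The plan is to sandwich it as $(1-\epsilon)\LL \pleq \Ltil \pleq (1+\epsilon)\LL$; for $0<\epsilon\le\tfrac12$ one has $e^{-2\epsilon}\le 1-\epsilon$ and $1+\epsilon\le e^{\epsilon}$, so this gives $\Ltil\approx_{2\epsilon}\LL$, and invoking \autoref{lemma:jacobi} with $\epsilon/2$ in place of $\epsilon$ (which only adds a constant to $l$) recovers $\Ltil\approx_{\epsilon}\LL$ exactly. Two observations do all the work. First, \autoref{lemma:jacobi} applied to $\MM=(\LL_G)_{FF}=\XX+\YY$, where $\YY=\LL_{G[F]}$, gives $\LL_{FF}\pleq\ZZ^{-1}\pleq\LL_{FF}+\epsilon\YY$, and hence, by \autoref{fact:inver}, $(\LL_{FF}+\epsilon\YY)^{-1}\pleq\ZZ\pleq\LL_{FF}^{-1}$. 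Second, if $\ol{\YY}$ denotes the $n\times n$ matrix equal to $\YY$ on the $F\times F$ block and zero elsewhere, then $\ol{\YY}\pleq\LL$ — because $\LL-\ol{\YY}$ is the Laplacian of $G$ with every edge internal to $F$ deleted, hence PSD — and in particular $\LL+\epsilon\ol{\YY}\pleq(1+\epsilon)\LL$.

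Next I would multiply out the three block factors. Using $\ZZ\ZZ^{-1}=\II$ together with $\Sc(\LL,C)=\LL_{CC}-\LL_{CF}\LL_{FF}^{-1}\LL_{FC}$, the product collapses to
\[
  \Ltil \;=\; \begin{pmatrix}\ZZ^{-1}&\LL_{FC}\\ \LL_{CF}&\LL_{CC}\end{pmatrix} \;-\; \begin{pmatrix}\vzero&\vzero\\ \vzero&\MM_C\end{pmatrix},
  \qquad \MM_C \defeq \LL_{CF}\big(\LL_{FF}^{-1}-\ZZ\big)\LL_{FC},
\]
and $\MM_C\pgeq\vzero$ since $\ZZ\pleq\LL_{FF}^{-1}$ (\autoref{fact:twoside}). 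Let $\Lhat$ be the first matrix, i.e.\ $\LL$ with its $F\times F$ block replaced by $\ZZ^{-1}$; as $\Lhat$, $\LL$, and $\LL+\epsilon\ol{\YY}$ share their off-diagonal and $C\times C$ blocks, \autoref{fact:block} makes $\Lhat$ monotone in its $FF$ block. Thus the problem reduces to controlling $\Lhat$ and the size of the correction $\MM_C$.

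The upper bound is then immediate: $\ZZ^{-1}\pleq\LL_{FF}+\epsilon\YY$ gives $\Lhat\pleq\LL+\epsilon\ol{\YY}\pleq(1+\epsilon)\LL$, and deleting the PSD correction only shrinks the matrix, so $\Ltil\pleq\Lhat\pleq(1+\epsilon)\LL$. The lower bound is the crux. From $\ZZ^{-1}\pgeq\LL_{FF}$ we get $\Lhat\pgeq\LL$, so it suffices to show the correction is $\pleq\epsilon\LL$. Bounding $\MM_C$ by $\epsilon\LL_{CC}$ is too lossy (and false in general); the correct estimate is $\MM_C\pleq\epsilon\,\Sc(\LL,C)$. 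To obtain it, $\ZZ\pgeq(\LL_{FF}+\epsilon\YY)^{-1}$ and \autoref{fact:twoside} give $\MM_C\pleq\LL_{CF}\big(\LL_{FF}^{-1}-(\LL_{FF}+\epsilon\YY)^{-1}\big)\LL_{FC}$, and expanding the two Schur complements shows the right-hand side is $\Sc(\LL+\epsilon\ol{\YY},C)-\Sc(\LL,C)$. The Schur-complement map is monotone under the Loewner order — immediate from $\xx_C^{\top}\Sc(\AA,C)\xx_C=\min_{\xx_F}\left(\begin{smallmatrix}\xx_F\\\xx_C\end{smallmatrix}\right)^{\!\top}\AA\left(\begin{smallmatrix}\xx_F\\\xx_C\end{smallmatrix}\right)$ — so $\LL+\epsilon\ol{\YY}\pleq(1+\epsilon)\LL$ yields $\Sc(\LL+\epsilon\ol{\YY},C)\pleq(1+\epsilon)\Sc(\LL,C)$, hence $\MM_C\pleq\epsilon\,\Sc(\LL,C)$. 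The same variational identity gives $\left(\begin{smallmatrix}\vzero&\vzero\\\vzero&\Sc(\LL,C)\end{smallmatrix}\right)\pleq\LL$, so the correction is $\pleq\epsilon\LL$ and $\Ltil\pgeq\LL-\epsilon\LL=(1-\epsilon)\LL$.

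The step I expect to need the most care is this last one: the $C\times C$ correction $\MM_C$ must be measured against $\Sc(\LL,C)$ — the block that actually survives inside $\Ltil$ — rather than against $\LL_{CC}$, and the device that makes the two match is Schur-complement monotonicity applied to the perturbation $\LL\mapsto\LL+\epsilon\ol{\YY}$. The only other non-mechanical input is $\ol{\YY}\pleq\LL$, which is precisely where $\YY=\LL_{G[F]}$ being an honest sub-Laplacian of $\LL$ enters.
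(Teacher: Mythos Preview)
Your argument is correct, and it runs on the same two inputs the paper uses --- \autoref{lemma:jacobi} for $\LL_{FF}\pleq\ZZ^{-1}\pleq\LL_{FF}+\epsilon\YY$, and the observation that the padded $\ol{\YY}$ satisfies $\ol{\YY}\pleq\LL$ --- but the packaging is different. The paper never multiplies the three factors out. Instead it defines $\Lhat=\left(\begin{smallmatrix}\ZZ^{-1}&\LL_{FC}\\\LL_{CF}&\LL_{CC}\end{smallmatrix}\right)$, notes $\LL\pleq\Lhat\pleq(1+\epsilon)\LL$, and then compares $\Sc(\LL,C)$ with $\Sc(\Lhat,C)$ by passing to pseudoinverses (using $(\AA^{+})_{CC}=\Sc(\AA,C)^{+}$) to get $e^{-\epsilon}\Sc(\Lhat,C)\pleq\Sc(\LL,C)\pleq\Sc(\Lhat,C)$. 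Since $\Ltil$ and $\Lhat$ share the \emph{same} outer triangular factors (with $\ZZ$), \autoref{fact:twoside} turns this into $e^{-\epsilon}\Lhat\pleq\Ltil\pleq\Lhat$, and chaining with $\LL\pleq\Lhat\pleq e^{\epsilon}\LL$ gives $\approx_{\epsilon}$ on the nose. Your route --- expand $\Ltil=\Lhat-\left(\begin{smallmatrix}\vzero&\vzero\\\vzero&\MM_C\end{smallmatrix}\right)$ and bound the additive correction --- is in fact computing the same quantity, since your $\MM_C$ is exactly $\Sc(\Lhat,C)-\Sc(\LL,C)$; but by embedding $\MM_C\pleq\epsilon\Sc(\LL,C)$ back into $\LL$ via $\left(\begin{smallmatrix}\vzero&\vzero\\\vzero&\Sc(\LL,C)\end{smallmatrix}\right)\pleq\LL$ you lose a factor, landing at $(1-\epsilon)\LL\pleq\Ltil$ and hence only $\approx_{2\epsilon}$. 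Your halving-$\epsilon$ patch changes $\ZZ$, so strictly speaking it proves a slight variant rather than the lemma as stated; if you want the exact constant without touching $l$, the cleaner move is the paper's: keep the factorization and compare the middle block-diagonals, which avoids the lossy embedding step entirely.
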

\noindent For completeness, we have included the proofs of
\multiref{lemma:sdd,lemma:jacobi,lemma:jacobiapprox} in the
\autoref{sec:proof}.

\subsection{C-Terminal Random Walks for Approximating Schur Complements}
\label{subsec:trw}
Formally, given a subset of vertices $C$, we say a walk
$W = (u_0, \ldots, u_l)$ in $G$ is $C$-terminal iff
$u_0, u_l \in C$ and $u_1, \ldots, u_{l-1} \in V \backslash C$.  It is
well known that the Schur complement $\Sc(\LL, C)$ is the sum of all
$C$-terminal walks (\cite{DPPR17} Lemma 5.4) weighted appropriately.
The following lemma extends this result to multi-graphs,
To distinguish the multi-edges used by the walk, we instead write 
$W = (u_0, e_1, u_1, \ldots, u_{l-1}, e_l, u_l)$ where $e_i \ni u_{i-1}, u_i$. 
Once again, we say that $W$ is $C$-terminal iff $W$ is a walk in
multi-graph $G$ and only $u_0,u_l$ are in $C$. 

\begin{algorithm}[t]
    \caption{$C$-Terminal Random Walks}
    \label{alg:trw}
    \Pro{\TW{$G=(V,E,w),C$}}{
        Let $H \gets$ an empty multi-graph\;
        \For{each multi-edge $e \in E$}{
            Let $u,v$ be the two incident vertices of $e$.\;
            Generate a random walk $W_1(e)$ from $u$ until it reaches
            $C$ at some vertex $c_1$.\;
            Generate a random walk $W_2(e)$ from $v$ until it reaches
            $C$ at some vertex $c_2$.\;
            Connect $W_1(e), e, W_2(e)$ to form a walk $W(e)$ between
            $c_1$ and $c_2$ if $c_1 \neq c_2$.\;\label{alg:trw_j}
            Add a multi-edge $f_e = \{c_1, c_2\}$ with weight 
            $\frac{1}{\sum_{f \in W(e)} 1/\ww(f)}$ and let $\YY_e$ be the
            associated Laplacian of this multi-edge.\;\label{alg:trw_w}
        }
        \Return{The multi-graph $H$.}\;
    }
\end{algorithm}

\begin{lemma} 
    \label{lemma:sumofwalks}
    For any multi-graph $G$ with its associated
    Laplacian $\LL$ and any non-trivial subset of vertices 
    $C \subseteq V$, the Schur complement $\Sc(\LL, C)$ is given as an union
    over all multi-edges corresponding to $C$-terminal walks 
    $W = (u_0, e_1, u_1 \ldots, e_l, u_l)$ and with weight
    \begin{equation}
        \frac{\prod_{i=1}^l \ww(e_i)}{\prod_{i=1}^{l-1} \ww(u_i)} 
    \end{equation}
    where $\ww(u) = \sum_{e : e \ni u} \ww(e)$ for any vertex $u$. 
\end{lemma}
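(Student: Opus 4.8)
The plan is to establish the identity by expanding $\Sc(\LL, C)$ through its algebraic definition and matching the resulting terms with $C$-terminal walks. Recall $\Sc(\LL,C) = \LL_{CC} - \LL_{CF}\LL_{FF}^{-1}\LL_{FC}$ where $F = V \setminus C$. The key is to expand $\LL_{FF}^{-1}$ as a power series. Write $\LL_{FF} = \DD_{FF}(\II - \DD_{FF}^{-1}\AA_{FF})$ where $\DD_{FF}$ is the diagonal part (with entries $\ww(u)$ for $u \in F$, note these are the full weighted degrees in $G$, not just degrees within $G[F]$) and $\AA_{FF}$ the off-diagonal adjacency part restricted to $F$. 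Since every vertex of $F$ has at least one edge to $C$ (as $G$ is connected, or more carefully since the walk terminates), the spectral radius of $\DD_{FF}^{-1}\AA_{FF}$ is strictly less than $1$, so $\LL_{FF}^{-1} = \sum_{k \geq 0} (\DD_{FF}^{-1}\AA_{FF})^k \DD_{FF}^{-1}$ converges.

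Substituting, the $(c_1, c_2)$ entry of $\LL_{CF}\LL_{FF}^{-1}\LL_{FC}$ becomes a sum, over all lengths $k \geq 0$ and all sequences $u_1, \ldots, u_k \in F$, of
\[
  (\AA_{C F})_{c_1 u_1} \cdot \frac{(\AA_{FF})_{u_1 u_2}}{\ww(u_1)} \cdots \frac{(\AA_{FF})_{u_{k-1}u_k}}{\ww(u_{k-1})} \cdot \frac{(\AA_{FC})_{u_k c_2}}{\ww(u_k)},
\]
and each adjacency entry $(\AA)_{xy}$ itself decomposes as $\sum_{e = (x,y)} \ww(e)$ over parallel multi-edges $e$. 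Expanding each such factor and collecting, this is precisely $\sum_W \prod_i \ww(e_i) / \prod_i \ww(u_i)$ summed over $C$-terminal walks $W = (c_1, e_1, u_1, \ldots, u_k, e_{k+1}, c_2)$ of length $\ell = k+1 \geq 1$. The off-diagonal $(c_1,c_2)$ entry of $\Sc(\LL,C)$ with $c_1 \neq c_2$ is $-(\LL_{CF}\LL_{FF}^{-1}\LL_{FC})_{c_1 c_2}$ plus the contribution $(\LL_{CC})_{c_1 c_2} = -\sum_{e = (c_1,c_2)}\ww(e)$, which matches the length-$1$ walks $W = (c_1, e, c_2)$ with $e$ a direct edge. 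For a single multi-edge $f_W = \{c_1, c_2\}$ of weight $w_W$, the associated Laplacian $\YY_W$ contributes $-w_W$ to entry $(c_1,c_2)$ and $+w_W$ to the diagonal entries, so summing $\YY_W$ over all walks $W$ reproduces both the off-diagonal entries above and, by the zero-row-sum property of each $\YY_W$ and of $\Sc(\LL,C)$, the diagonal as well; so it suffices to verify the off-diagonal agreement. Finally I would check that the weight $w_W = (\sum_{f \in W} 1/\ww(f))^{-1}$ appearing in \TW and the weight $\prod_i \ww(e_i)/\prod_i \ww(u_i)$ in the lemma statement are reconciled — these are genuinely different, and indeed \TW samples walks with probability proportional to a transition-probability product and assigns the \emph{harmonic-mean-type} weight, so the lemma as stated is the \emph{deterministic} sum-over-all-walks identity (matching \cite{DPPR17} Lemma 5.4 extended to multigraphs), while \TW's reweighting is an unbiased single-sample estimator of it; the probability that a particular walk $W$ is generated by \TW, starting from the edge $e_i$ in the middle, times $w_W$, should equal $\prod \ww(e_j)/\prod\ww(u_j)$, and I would record this as the bridge between the two.

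The main obstacle is bookkeeping the multi-edge structure carefully: the same unordered pair of vertices can be joined by many parallel multi-edges, and a walk is a sequence of \emph{specific} multi-edges, not just vertices, so the expansion of each adjacency matrix entry into its constituent multi-edges must be threaded through the entire power-series argument without double-counting or dropping the choice of which parallel copy is traversed. A secondary subtlety is the treatment of walks that return to their starting vertex versus the direct-edge ($\ell = 1$) terms, and ensuring the convergence of the Neumann series is justified (which follows because $\LL_{FF}$ is positive definite: it is a principal submatrix of a connected Laplacian with at least one row having an edge leaving $F$, hence strictly diagonally dominant in a generalized sense / invertible, and $\|\DD_{FF}^{-1/2}\AA_{FF}\DD_{FF}^{-1/2}\| < 1$). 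Once the combinatorial identification is set up cleanly, the rest is a direct term-by-term comparison.
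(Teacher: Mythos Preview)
Your proposal is correct and follows essentially the same argument as the paper: write $\LL_{FF} = \DD - \AA$, expand $\LL_{FF}^{-1}$ as the Neumann series $\sum_{k\ge 0}(\DD^{-1}\AA)^k\DD^{-1}$ (justified via $\norm{\DD^{-1/2}\AA\DD^{-1/2}}<1$), substitute into the Schur complement formula, and match the $(s,t)$ entry of each $k$-th term with $C$-terminal walks of length $k+1$, with $\LL_{CC}$ supplying the length-$1$ walks. Your final paragraph reconciling the walk weight $\prod_i \ww(e_i)/\prod_i \ww(u_i)$ with the harmonic weight $(\sum_f 1/\ww(f))^{-1}$ used in \textsc{TerminalWalks} is not part of this lemma at all---that reconciliation is exactly the content of \autoref{lemma:unbiased}, and should be kept separate.
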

We defer the proof to \autoref{sec:proof}.

Our algorithm \textsc{TerminalWalks} samples $C$-terminal walks so that the
resulting multi-graph has its associated Laplacian matrix in expectation the
same as the Schur complement while keeping the total number of samples small.
Since each sample is relatively small, we can use matrix
concentration inequality to prove that the approximation is close. 
In \autoref{sec:concentration} we formally prove these claims.

\subsection{A Nearly Log Squared Depth Solver}
By preconditioned iterative methods for solving Linear systems, so long as we
generate a constant good preconditioner to Laplacian $\LL$ (i.e.~an operator
that is a constant approximation to $\LL^+$ and fast to apply), we can use such
preconditioner to solve linear systems in $\LL$ with only $O(\log 1/\epsilon)$
overhead for precision $\epsilon$. 
It then suffices to build just a constant approximate inverse to $\LL$. 
\begin{theorem}[Preconditioned Richardson Iteration, \cite{S03,A96}]
    \label{thm:richardson}
    There exists an algorithm \textsc{PreconRichardson} such that for
    any matrices $\AA, \BB \pgeq \mathbf{0}$ such that
    $\BB \approx_{\delta} \AA^+$ and any error tolerance
    $0 < \epsilon < 1/2$,
    \textsc{PreconRichardson}($\AA,\BB,\bb,\delta, \epsilon$) returns
    an $\epsilon$-approximate solution $\xtil$ to $\AA\xx = \bb$ in
    $O(e^{2\delta} \log(1/\epsilon))$ iterations, each consisting of
    one multiplication of a vector by $\AA$ and one by $\BB$.
\end{theorem}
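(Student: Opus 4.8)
The plan is to realize \textsc{PreconRichardson} as the stationary (preconditioned Richardson) iteration
\[
    \xx_{k+1} \;=\; \xx_k + \eta\,\BB\paren{\bb - \AA\xx_k}, \qquad \xx_0 = \vzero ,
\]
with the constant step size $\eta = e^{-\delta}$; since $\eta$ is a scalar, each iteration still consists of exactly one multiplication by $\AA$ (to form the residual $\bb - \AA\xx_k$) and one by $\BB$, as claimed. First I would reduce to the image of $\AA$. Because $\BB \approx_\delta \AA^+$, the two matrices have the same kernel, so $\im(\BB)=\im(\AA^+)=\im(\AA)$; with $\xx_0 = \vzero$ every iterate therefore stays in $\im(\AA)$. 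Writing $\xx^\star \defeq \AA^+\bb$ and $\ee_k \defeq \xx_k - \xx^\star \in \im(\AA)$, and using $\AA\xx^\star = \Pi\bb$ together with $\BB = \BB\Pi$ (where $\Pi = \AA\AA^+$ is the orthogonal projection onto $\im(\AA)$), a one-line computation collapses the iteration to the clean recursion $\ee_{k+1} = (\II - \eta\BB\AA)\,\ee_k$ on $\im(\AA)$. Note the component of $\bb$ outside $\im(\AA)$ is killed by $\BB$ and is invisible to $\norm{\cdot}_{\AA}$, which is exactly why the argument is unaffected when the system is inconsistent (as happens for a Laplacian $\LL$ when $\bb \not\perp \vone$).

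Next I would pass to the $\AA$-norm via the substitution $\ff_k \defeq \AA^{1/2}\ee_k$, so that $\norm{\ee_k}_{\AA} = \norm{\ff_k}_2$ and $\ff_{k+1} = \paren{\II - \eta\,\AA^{1/2}\BB\AA^{1/2}}\ff_k$, with $\ff_k \in \im(\AA)$ for all $k$. The heart of the proof is the spectral bound on $\MM \defeq \AA^{1/2}\BB\AA^{1/2}$: conjugating $e^{-\delta}\AA^+ \pleq \BB \pleq e^{\delta}\AA^+$ by $\AA^{1/2}$ (\autoref{fact:twoside}) and using $\AA^{1/2}\AA^+\AA^{1/2} = \Pi$ shows that, restricted to $\im(\AA)$, the eigenvalues of $\MM$ lie in $[e^{-\delta},\,e^{\delta}]$. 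With $\eta = e^{-\delta}$ the eigenvalues of $\II - \eta\MM$ restricted to $\im(\AA)$ therefore lie in $[\,0,\;1 - e^{-2\delta}\,]$, so $\norm{\ff_{k+1}}_2 \le (1 - e^{-2\delta})\norm{\ff_k}_2$, which unrolls to $\norm{\xx_k - \xx^\star}_{\AA} \le (1 - e^{-2\delta})^k\,\norm{\xx^\star}_{\AA}$.

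Finally I would fix the iteration count. Using $1 - t \le e^{-t}$ with $t = e^{-2\delta}$, after $k = \ceil{e^{2\delta}\ln(1/\epsilon)} = O\!\paren{e^{2\delta}\log(1/\epsilon)}$ iterations we get $(1-e^{-2\delta})^k \le e^{-k e^{-2\delta}} \le \epsilon$, so $\xtil \defeq \xx_k$ satisfies $\norm{\xtil - \AA^+\bb}_{\AA} \le \epsilon\,\norm{\AA^+\bb}_{\AA}$, i.e.\ it is an $\epsilon$-approximate solution; this matches the stated iteration and cost bounds.

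I expect the only genuine subtlety — rather than a real obstacle — to be the kernel bookkeeping: verifying that $\BB$, $\AA^+$ and $\AA$ share the same image, that the iterates never leave it, and that $\norm{\cdot}_{\AA}$ correctly ignores the kernel component so the estimate carries over verbatim to inconsistent right-hand sides. The choice $\eta = e^{-\delta}$ (rather than plain Richardson with $\eta = 1$) is what keeps $\II - \eta\MM$ contractive for every $\delta > 0$ instead of only for $\delta < \ln 2$; for the small-$\delta$ regime one could drop it at the price of a worse constant in the iteration count.
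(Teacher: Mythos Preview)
Your proof is correct. Note, however, that the paper does not actually prove \autoref{thm:richardson}; it is quoted as a standard result (with citations) and only the algorithm \PR{} is spelled out, so there is no ``paper's own proof'' to compare against.

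The one visible difference is the step size. The paper's \PR{} uses $\alpha = 2/(e^{-\delta}+e^{\delta})$, the optimal Richardson step that centers the spectrum of $\II - \alpha\MM$ symmetrically about the origin and yields contraction factor $\tanh\delta$; you instead take $\eta = e^{-\delta}$, which pushes the spectrum into $[0,\,1-e^{-2\delta}]$ and gives contraction factor $1-e^{-2\delta}$. Since $\tanh\delta < 1-e^{-2\delta}$ the paper's choice is marginally tighter, but both produce the stated $O(e^{2\delta}\log(1/\epsilon))$ iteration bound, and your one-sided choice makes the nonnegativity of $\II-\eta\MM$ (hence the contraction) especially transparent. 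Your handling of the kernel bookkeeping --- that $\BB$ and $\AA^+$ share $\im(\AA)$, that the iterates remain there, and that $\norm{\cdot}_{\AA}$ ignores the kernel component of $\bb$ --- is exactly the care that is needed and is correctly argued.
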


\begin{algorithm}[t]
    \caption{Preconditioned Richardson Iteration}
    \label{alg:richardson}
    \Pro{\PR{$\AA, \BB, \bb, \delta, \epsilon$}}{
        Set $\alpha \gets 2/(e^{-\delta}+e^{\delta})$ where
        scalar $\delta > 0$ satisfies $\BB \approx_{\delta} \AA^+$.\;
        Initially set $\xx^{(0)} \gets \BB \bb$.\;
        \For{$k \gets 1 \ldots d=\ceil{e^{2\delta} \log(1/\epsilon)}$}{
            $\xx^{(k)} \gets (\II - \alpha\BB\AA)\xx^{(k-1)} +
            \alpha\xx^{(0)}$.\;
        }
        \Return{$\xx^{(d)}$.}\;
    }
\end{algorithm}

\autoref{thm:bcfapproxnew} states that we can efficiently generate a sparse
block Cholesky factorization that is a constant approximation.
Before we state the theorem, we need to define a series of matrices.
For any positive integer $k$, let $\DD^{(k)}$ be a block diagonal matrix defined
as
\begin{equation} \label{eq:blockd}
    \begin{split}
    \DD^{(k)}
    =&
    \DD^{(k-1)} - \LL_{G^{(k-1)}} + 
    \begin{pmatrix}
        (\LL_{G^{(k-1)}})_{F_kF_k} & 0 \\
        0 & \LL_{G^{(k)}}
    \end{pmatrix}
    \\
    =&
    \begin{pmatrix}
        \DD^{(k-1)}_{V\backslash C_{k-1}} & 0 & 0\\
        0 & (\LL_{G^{(k-1)}})_{F_kF_k} & 0 \\
        0 & 0 & \LL_{G^{(k)}}
    \end{pmatrix}
    \end{split}
\end{equation}
with $\DD^{(0)} = \LL$, 
and $\UU^{(k)}$ be an upper triangular matrix defined by
\begin{equation} \label{eq:blocku}
\begin{split}
    \UU^{(k)} 
    &= 
    \begin{pmatrix}
        \II & 0 & 0 \\
        0 & \II & ((\LL_{G^{(k-1)}})_{F_kF_k})^{-1}(\LL_{G^{(k-1)}})_{F_kC_k} \\
        0 & 0 & \II
    \end{pmatrix}
    \UU^{(k-1)}
    \\
    &=
    \begin{pmatrix}
        \II & \UU^{(k-1)}_{(V\backslash C_{k-1}) F_k} 
        & \UU^{(k-1)}_{(V\backslash C_{k-1}) C_k} \\
        0 & \II & ( (\LL_{G^{(k-1)}})_{F_kF_k})^{-1}(\LL_{G^{(k-1)}})_{F_kC_k} \\
        0 & 0 & \II
    \end{pmatrix}
\end{split}
\end{equation}
with $\UU^{(0)} = \II$ initially.
Note that $\UU^{(k-1)}_{C_kC_k} = \II$.

\begin{theorem} \label{thm:bcfapproxnew}
    \sloppy
    There exists some $\alpha^{-1} = \Theta(\log^2 n)$ such that,
    given any multi-graph $G$ with $n$ vertices,
    $m$ multi-edges such that each multi-edge is $\alpha$-bounded w.r.t.$\LL_G$,
    the algorithm \BC{$G$}
    returns with high probability
    a sequence of multi-graphs and subsets of vertices
    $(G^{(0)}, \ldots G^{(d)}; F_1, \ldots F_d)$ such that 
    \begin{enumerate}[ref=\autoref{thm:bcfapproxnew}-(\arabic*)]
        \item For $0 \leq k < d$, $G^{(k)}$ has at most $m$ multi-edges.
            \label{thm:bcf1}
        \item For $0 < k \leq d$, $F_k$ is a $5$-DD subset of $\LL_{G^{(k-1)}}$.
            \label{thm:bcf2}
        \item $G^{(d)}$ has size $O(1)$.
            \label{thm:bcf3}
        \item $d = O(\log n)$.
            \label{thm:bcf4}
        \item With $\DD^{(k)}$ and $\UU^{(k)}$ defined as in
        \eqref{eq:blockd} and \eqref{eq:blocku} respectively, with high
        probability 
            \[
                (\UU^{(d)})^\top \DD^{(d)} \UU^{(d)} \approx_{0.5} \LL_G.
            \]
            \label{thm:bcf5}
    \end{enumerate}
    The algorithm takes $O(m\log n)$ work and $O(\log m \log n)$ depth
    with high probability. 
\end{theorem}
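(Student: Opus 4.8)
\textbf{Proof proposal for \autoref{thm:bcfapproxnew}.}

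The plan is to verify the five enumerated properties and the work/depth bounds by tracking the behaviour of \BC{} through its at most $d$ iterations, using the structural guarantees of \DS{} and \TW{} established in \multiref{subsec:sdd,subsec:trw}, together with the yet-to-be-proved concentration statement for \TW{} (referenced in \autoref{subsec:trw} and to be formalized in \autoref{sec:concentration}). First I would record that \TW{} outputs one multi-edge $f_e$ per input multi-edge $e$, so $G^{(k)}$ has at most as many multi-edges as $G^{(k-1)}$; an easy induction gives \ref{thm:bcf1}. Property \ref{thm:bcf2} is immediate from the specification of \DS{} via \autoref{lemma:sdd}, which guarantees $F_k$ is a $5$-DD subset of $\LL_{G^{(k-1)}}$ of size at least $|V(G^{(k-1)})|/40$. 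Since each iteration removes a $\ge 1/40$ fraction of the remaining vertices, the vertex count drops geometrically, so after $O(\log n)$ iterations it falls below the threshold $100$; this yields \ref{thm:bcf3} and \ref{thm:bcf4} simultaneously. The work and depth bounds then follow by summing the per-iteration costs: each call to \DS{} costs $O(m)$ expected work and $O(\log m)$ expected depth (\autoref{lemma:sdd}), each call to \TW{} costs $O(m \cdot (\text{walk length}))$ — and here I would need the bound, presumably proved alongside the concentration result, that with high probability all $C_k$-terminal random walks from a $5$-DD set have length $O(\log n)$ (this is where $5$-DD-ness is used: it forces the walk to hit $C$ quickly in expectation, and $\alpha$-boundedness controls the weights) — so each iteration is $O(m \log n)$ work and $O(\log m \log n)$ depth or so, and multiplying by $d = O(\log n)$ gives the claimed totals after being slightly more careful about where the extra $\log$ factors land. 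One also needs that $\alpha$-boundedness is preserved: each new multi-edge $f_e$ of $G^{(k)}$ must be $\alpha$-bounded w.r.t.\ $\LL_{G^{(k)}}$, which is exactly the content promised at the end of \autoref{subsec:alpha} and \autoref{sec:schur}, and this is what lets the induction on $\alpha$-boundedness (needed to re-apply \autoref{lemma:sdd}-style arguments and the concentration bound at the next level) go through.

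The substantive part is \ref{thm:bcf5}, the spectral approximation $(\UU^{(d)})^\top \DD^{(d)} \UU^{(d)} \approx_{0.5} \LL_G$. I would prove this by a telescoping/hybrid argument across the $d$ levels. The key identity is that, exactly (with no approximation), the true block Cholesky factorization at level $k$ writes $\LL_{G^{(k-1)}}$ as a product of a lower-triangular matrix, the block-diagonal matrix $\mathrm{diag}((\LL_{G^{(k-1)}})_{F_kF_k},\ \Sc(\LL_{G^{(k-1)}}, C_k))$, and an upper-triangular matrix, as in \eqref{eq:bcf}. The definitions \eqref{eq:blockd} and \eqref{eq:blocku} are engineered precisely so that $(\UU^{(k)})^\top \DD^{(k)} \UU^{(k)}$ equals what you get by substituting the recursive approximation at level $k$ into the level-$(k-1)$ factorization: replacing the exact Schur complement $\Sc(\LL_{G^{(k-1)}}, C_k)$ by $\LL_{G^{(k)}}$, the latter being the output of \TW{}. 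So the plan is: (i) show, by induction on $k$ running \emph{downward} from $d$, that $(\UU^{(k)})^\top \DD^{(k)} \UU^{(k)} \approx_{\epsilon_k} \LL_G$ where $\epsilon_k$ accumulates one error term per level below $k$; (ii) at each level, the incremental error is the quality of the \TW{} approximation $\LL_{G^{(k)}} \approx_{\eps'} \Sc(\LL_{G^{(k-1)}}, C_k)$, which by the concentration analysis of \autoref{sec:concentration} can be made $\eps' = O(1/d)$ with high probability when $\alpha^{-1} = \Theta(\log^2 n)$ (the $\log^2$ being $\log n$ for the $d = O(\log n)$ levels times another $\log n$ for the high-probability union bound over edges/levels — this is the reason for the stated value of $\alpha$); (iii) to push the error through the triangular factors, use \autoref{fact:twoside}: conjugating both sides of $\LL_{G^{(k)}} \approx_{\eps'} \Sc$ by the fixed lower-triangular matrix $\begin{pmatrix} \II & 0 \\ \LL_{CF}\LL_{FF}^{-1} & \II\end{pmatrix}$ (embedded appropriately) preserves the $\approx_{\eps'}$ relation, and the $\approx$ relation composes additively. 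Summing $d$ errors of size $O(1/d)$ gives a total of $O(1)$, and with the constants chosen this is at most $0.5$.

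The main obstacle I anticipate is item (iii) combined with the bookkeeping of the hybrid argument: one must be careful that the triangular factor $\UU^{(k)}$ built from the \emph{inexact} intermediate graphs $G^{(1)},\dots,G^{(k)}$ is the correct object to conjugate by, and that the error introduced at level $k$ is measured against the \emph{right} matrix — namely $(\UU^{(k-1)})^\top \mathrm{diag}(\dots, \Sc(\LL_{G^{(k-1)}},C_k)) \UU^{(k-1)}$, not against $\LL_G$ itself — so that the triangle-inequality-style composition of $\approx$ relations telescopes cleanly down to $\LL_G$. A secondary subtlety is that \TW{}'s guarantee is stated for the exact Schur complement of $G^{(k-1)}$, so at level $k$ we are approximating $\Sc(\LL_{G^{(k-1)}}, C_k)$, but $G^{(k-1)}$ is itself only an approximation of the level-$(k-2)$ Schur complement; the resolution is that \autoref{lemma:sumofwalks} and the concentration bound are applied afresh to the actual graph $G^{(k-1)}$ that the algorithm holds, so no compounding happens \emph{inside} a single \TW{} call — the compounding is entirely captured by the additive composition of the $d$ outer $\approx$ relations. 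Finally, one must confirm that the high-probability events (fast-terminating walks, preserved $\alpha$-boundedness, and the per-level spectral bound) can all be union-bounded over the $d = O(\log n)$ levels while keeping the failure probability polynomially small; this is routine given the per-level bounds but should be stated explicitly, since it is the other place the $\log\log n$-free, $\log^2 n$ choice of $\alpha^{-1}$ is consumed.
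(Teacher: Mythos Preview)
Your handling of items (1)--(4) is essentially right, with one minor correction on runtime: each \TW{} call is $O(m)$ work, not $O(m\log n)$, because the walk lengths are $O(1)$ in expectation and their \emph{sum} over all edges is $O(m)$ w.h.p.\ by Chernoff; the $O(\log m)$ bound on the longest walk only governs depth. Multiplying by $d=O(\log n)$ then gives the stated $O(m\log n)$ work.

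The real gap is in item (5). Your level-by-level plan --- prove $\LL_{G^{(k)}}\approx_{\eps'}\Sc(\LL_{G^{(k-1)}},C_k)$ with $\eps'=O(1/d)$ at each level via matrix Chernoff, then telescope through the triangular factors --- is \emph{not} what the paper does, and it cannot achieve $\alpha^{-1}=\Theta(\log^2 n)$. A matrix Chernoff/Bernstein bound for an $\eps'$-spectral approximation of an $n\times n$ matrix requires each sample's leverage score to be $O(\eps'^2/\log n)$; with $\eps'=\Theta(1/\log n)$ this forces $\alpha^{-1}=\Theta(\log^3 n)$. Your accounting ``one $\log n$ for the $d$ levels, one $\log n$ for the union bound'' omits the $\eps'^{-2}$ dependence entirely. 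The paper explicitly flags this: it remarks that a simpler per-level analysis using standard concentration costs an extra $\log n$ factor in work.

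What the paper actually does is set up a single matrix \emph{martingale} across all $d$ levels and apply Freedman's inequality (\autoref{thm:freedman}) once. The key structural fact enabling this is that \TW{} preserves $\alpha$-boundedness \emph{with respect to the original $\LL_G$} (\autoref{lemma:twalksnew}) --- not with respect to $\LL_{G^{(k)}}$ as you wrote. One defines $\LL^{(k)}\defeq(\UU^{(k)})^\top\DD^{(k)}\UU^{(k)}$, verifies the additive identity $\LL^{(k)}=\LL^{(k-1)}+\bigl[\LL_{G^{(k)}}-\Sc(\LL_{G^{(k-1)}},C_k)\bigr]$ (so $\{\LL^{(k)}\}$ is a martingale by \autoref{lemma:unbiased}), normalizes everything by $\LL_G^{+/2}$, and bounds the predictable quadratic variation by $O(\alpha d)=O(\alpha\log n)$ using that every sampled edge at every level has normalized norm at most $\alpha$. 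Freedman then gives deviation $0.3$ with failure probability $n\exp\bigl(-\Omega(1/(\alpha\log n))\bigr)$, which is polynomially small exactly when $\alpha^{-1}=\Theta(\log^2 n)$. The telescoping you describe is replaced by the martingale's built-in accumulation of variance, which is linear in $d$ rather than quadratic in $1/\eps'$ --- that is where the $\log n$ is saved.
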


In the next section, we prove that the above chain of sparse
approximate Schur complements suffices to build an approximate inverse
for $\LL$.
\begin{theorem} \label{thm:thmsolvenew}
    \sloppy
    Let $G$ be any multi-graph with $n$ vertices,
    $m$ multi-edges.
    There exists some $\alpha^{-1} = \Theta(\log^2 n)$ such that
    if each multi-edge of $G$ is $\alpha$-bounded w.r.t.~$\LL_G$
    and let 
    $(G^{(0)}, \ldots, G^{(d)}; F_1, \ldots, F_d)$ 
    be the output of \BC{$G$}, 
    then, with high probability, the algorithm 
    \AC{$G^{(0)}, \ldots, G^{(d)}; F_1, \ldots, F_d; \bb$}
    corresponds to a linear operator $\WW$ acting on $\bb$ such that
    \[
        \WW^+ \approx_{1} \LL_G
    \]
    and the algorithm runs in $O(m\log n \log\log n)$ work and 
    $O(\log m \log n \log\log n)$ depth with high probability.
\end{theorem}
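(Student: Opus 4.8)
The plan is to combine \autoref{thm:bcfapproxnew} with \autoref{lemma:jacobiapprox} and a telescoping-composition argument, and then track the extra work/depth introduced by replacing the exact blocks $\LL_{FF}^{-1}$ by the Jacobi operators $\ZZ^{(k)}$ from \JC. First I would recall that by \autoref{thm:bcfapproxnew} the chain produced by \BC satisfies $(\UU^{(d)})^\top \DD^{(d)} \UU^{(d)} \approx_{0.5} \LL_G$, with $d = O(\log n)$, all intermediate multi-graphs having at most $m$ multi-edges, and $G^{(d)}$ of size $O(1)$. The matrix $\WW$ realized by \AC is exactly the inverse of the operator obtained by expanding \eqref{eq:linv} recursively, except that at level $k$ the inner solve on $F_{k+1}$ is done by $\ZZ^{(k)}$ rather than by $(\LL_{G^{(k)}})_{F_{k+1}F_{k+1}}^{-1}$, and the base case $\LL_{G^{(d)}}\xx^{(d)} = \bb^{(d)}$ is solved exactly (cost $O(1)$ since $G^{(d)}$ has $O(1)$ vertices). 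So $\WW^+$ equals the product form in \autoref{lemma:jacobiapprox} applied recursively down the chain.

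\medskip

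Next I would do the error accounting. \JC is called with $\epsilon = 1/(2d)$ at every level, so by \autoref{lemma:jacobi} each $\ZZ^{(k)}$ satisfies $\MM \pleq (\ZZ^{(k)})^{-1} \pleq \MM + \tfrac{1}{2d}\YY$ where $\MM = (\LL_{G^{(k)}})_{F_{k+1}F_{k+1}} = \XX + \YY$; since $\YY \pleq \MM$ this gives $(\ZZ^{(k)})^{-1} \approx_{1/(2d)} \MM$ (using $\log(1+\tfrac{1}{2d}) \le \tfrac{1}{2d}$). By \autoref{lemma:jacobiapprox}, substituting $\ZZ^{(k)}$ into the block factorization of $\LL_{G^{(k)}}$ incurs a multiplicative $\approx_{1/(2d)}$ error and leaves the true Schur complement $\Sc(\LL_{G^{(k)}}, C_{k+1})$ in the lower-right block. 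But \BC does \emph{not} recurse on the true Schur complement; it recurses on $G^{(k+1)} = \TW(G^{(k)}, C_{k+1})$, whose error relative to the true Schur complement is already absorbed into the $\approx_{0.5}$ guarantee of \autoref{thm:bcfapproxnew}. The cleanest way to organize this is: (i) express $\WW^+$ as the nested product-form operator built from the $\ZZ^{(k)}$'s and the matrices $G^{(k)}$, (ii) show by induction on $k$ (from $d$ down to $0$), using \autoref{lemma:jacobiapprox}, \autoref{fact:twoside} (conjugation preserves $\pleq$), and \autoref{fact:block}/\autoref{fact:inver} to push approximations through the triangular conjugations, that replacing the exact inner solves by $\ZZ^{(k)}$ costs a total of $\sum_{k=0}^{d-1} \tfrac{1}{2d} \le \tfrac12$ in the $\approx$-metric, and (iii) compose with the $\approx_{0.5}$ from \autoref{thm:bcfapproxnew} to conclude $\WW^+ \approx_{1} \LL_G$. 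The sub-additivity of $\approx_\epsilon$ under composition stated in \autoref{sec:pre} is what makes the two halves ($0.5 + 0.5$) add to $1$.

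\medskip

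For the complexity bound I would count, per level $k$: applying $\ZZ^{(k)}$ once costs $O(m_k \log d) = O(m \log\log n)$ work and $O(\log m_k \log d) = O(\log m \log\log n)$ depth by \autoref{lemma:jacobi}, where $m_k \le m$ by \autoref{thm:bcf1}; the triangular forward/backward substitutions at level $k$ each amount to one matrix-vector multiply by a submatrix of $\LL_{G^{(k)}}$ plus one application of $\ZZ^{(k)}$, hence also $O(m\log\log n)$ work and $O(\log m \log\log n)$ depth. Multiplying by $d = O(\log n)$ levels gives $O(m \log n \log\log n)$ work and $O(\log m \log n \log\log n)$ depth for \AC; the base-case exact solve on $G^{(d)}$ adds only $O(1)$. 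The cost of \BC itself is $O(m\log n)$ work and $O(\log m \log n)$ depth by \autoref{thm:bcfapproxnew}, which is dominated. The ``with high probability'' qualifier is inherited directly from \autoref{thm:bcfapproxnew} (the only randomized component; \AC is deterministic given the chain), and the $\alpha^{-1} = \Theta(\log^2 n)$ is exactly the value required there, so no new constraint is introduced.

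\medskip

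The main obstacle I anticipate is step (ii): rigorously propagating the per-level $\approx_{1/(2d)}$ errors through the recursive nesting of triangular conjugations while simultaneously keeping the \emph{true} Schur complements in place (so that the induction hypothesis can be applied at the next level), rather than conflating them with the sampled graphs $G^{(k+1)}$. In other words, one must carefully separate the ``Jacobi approximation'' error (analyzed here, summing to $1/2$) from the ``random-walk sampling'' error (already bundled into \autoref{thm:bcfapproxnew}'s $\approx_{0.5}$), and verify that the operator $\WW$ that \AC actually computes corresponds to the factorization whose middle blocks are the $G^{(k)}$'s with Jacobi-approximated $F$-solves — not the idealized one. Once the bookkeeping is set up with the right intermediate operators, each individual inequality is a routine application of \autoref{fact:twoside}, \autoref{fact:block}, and the transitivity of $\approx$.
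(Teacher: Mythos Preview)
Your proposal is correct and follows essentially the same approach as the paper: backward induction accumulating the per-level Jacobi error $1/(2d)$ (via \autoref{lemma:jacobiapprox} and \autoref{fact:twoside}) to a total of $1/2$, then composing with the $\approx_{0.5}$ from \autoref{thm:bcfapproxnew}. The only refinement the paper makes that you leave implicit is the choice of intermediate objects: rather than ``true'' Schur complements, it defines $\LL^{(d,k)} \defeq (\UU^{(d)}_{C_kC_k})^\top \DD^{(d)}_{C_kC_k}\UU^{(d)}_{C_kC_k}$, which has exactly the $F_{k+1}$-blocks of $\LL_{G^{(k)}}$ but whose Schur complement onto $C_{k+1}$ is \emph{exactly} $\LL^{(d,k+1)}$---this is precisely the bookkeeping device you flagged as the main obstacle, and once you have it your step~(ii) goes through cleanly; your runtime analysis matches the paper's as well.
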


Our main guarantees for multi-graphs with $\alpha$-bounded
multi-edges follows directly by applying preconditioned Richardson iterations. 
\begin{lemma} \label{lemma:mainnew}
    \sloppy
    There exists an algorithm and a positive scalar $\alpha$ satisfying
    $\alpha^{-1} = \Theta(\log^2 n)$ such that, given input a multi-graph $G$
    with $n$ vertices and $m$ multi-edges that are all $\alpha$-bounded
    w.r.t. its associated Laplacian $\LL$, a vector $\bb \in \rea^n$ and $0 <
    \epsilon < 1/2$,
    returns in $O(m \log n \log\log n \log 1/\epsilon)$ work and
    $O(\log m \log n \log\log n \log 1/\epsilon)$ depth a vector $\xtil$ such
    that
    $\norm{\xtil - \LL^+\bb}_{\LL} \leq \epsilon \norm{\LL^+\bb}_{\LL}$
    with high probability.
\end{lemma}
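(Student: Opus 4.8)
The plan is to combine \autoref{thm:thmsolvenew} with the Preconditioned Richardson Iteration of \autoref{thm:richardson} in the most direct way possible, treating \autoref{thm:thmsolvenew} as a black box that supplies the preconditioner. First I would fix $\alpha$ to be the scalar $\Theta(\log^2 n)$ promised by \autoref{thm:thmsolvenew}, so that the hypothesis of that theorem is exactly the hypothesis of this lemma. Running \BC{$G$} produces the chain $(G^{(0)},\ldots,G^{(d)};F_1,\ldots,F_d)$, and by \autoref{thm:thmsolvenew} the operator $\WW$ implemented by \AC\ on this chain satisfies $\WW^+ \approx_1 \LL$ with high probability; equivalently, by the symmetry of $\approx$, $\WW \approx_1 \LL^+$. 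Thus $\WW$ is a valid preconditioner in the sense required by \autoref{thm:richardson} with approximation parameter $\delta = 1$.

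Next I would instantiate \textsc{PreconRichardson}$(\LL, \WW, \bb, 1, \epsilon)$. By \autoref{thm:richardson} this returns an $\epsilon$-approximate solution $\xtil$ to $\LL\xx=\bb$ — i.e.~$\norm{\xtil-\LL^+\bb}_{\LL}\le\epsilon\norm{\LL^+\bb}_{\LL}$ — in $O(e^{2}\log(1/\epsilon)) = O(\log(1/\epsilon))$ iterations, each requiring one application of $\LL$ and one application of $\WW$. The construction (calling \BC) costs $O(m\log n)$ work and $O(\log m\log n)$ depth by \autoref{thm:bcfapproxnew}; this is a one-time cost dominated by the solve. Applying $\LL$ once costs $O(m)$ work and $O(\log m)$ depth since $\LL$ has $m$ nonzeros. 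Applying $\WW$ once costs $O(m\log n\log\log n)$ work and $O(\log m\log n\log\log n)$ depth by \autoref{thm:thmsolvenew}. Multiplying the per-iteration cost by the $O(\log(1/\epsilon))$ iteration count gives total work $O(m\log n\log\log n\log(1/\epsilon))$ and depth $O(\log m\log n\log\log n\log(1/\epsilon))$, matching the claimed bounds (the setup cost and the $\LL$-applications are absorbed). The high-probability qualifier is inherited from \autoref{thm:thmsolvenew}; the Richardson iteration itself is deterministic given $\WW$.

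The only subtlety — and the step I would be most careful about — is checking that $\WW$ is genuinely symmetric PSD with the same kernel as $\LL$, since \autoref{thm:richardson} requires $\AA,\BB\pgeq\mathbf 0$ and the relation $\BB\approx_\delta\AA^+$ presupposes compatible kernels. The operator $\WW$ is assembled from forward/backward substitutions with the triangular blocks of \eqref{eq:linv} sandwiching the \JC\ operators $\ZZ^{(k)}$ and the base-case solve on $G^{(d)}$; \autoref{lemma:jacobi} guarantees each $\ZZ^{(k)}$ is symmetric PD on its block, and the triangular conjugations preserve symmetry and the kernel $\mathrm{span}(\vone)$ inherited from connectedness (\autoref{lemma:sumofwalks} ensures each $G^{(k)}$ is a connected-graph Laplacian). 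So $\WW$ is symmetric PSD with kernel $\mathrm{span}(\vone)$, matching $\LL$, and $\WW^+\approx_1\LL$ then gives $\WW\approx_1\LL^+$ via \autoref{fact:inver}. Once this compatibility check is in place, the rest is a mechanical substitution of parameters, so I do not expect any real obstacle beyond bookkeeping the work/depth products.
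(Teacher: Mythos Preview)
Your proposal is correct and follows essentially the same approach as the paper: invoke \autoref{thm:thmsolvenew} to obtain the operator $\WW$ with $\WW^+\approx_1\LL$, flip via \autoref{fact:inver} to $\WW\approx_1\LL^+$, and plug into \textsc{PreconRichardson} for $O(\log(1/\epsilon))$ iterations at the stated per-iteration cost. Your additional kernel/PSD compatibility check is more careful than the paper (which simply takes it for granted), though note that \autoref{lemma:sumofwalks} is not the right citation for connectedness of the $G^{(k)}$ --- that is the Schur-complement-as-walks lemma; connectedness of Schur complements is the earlier fact in \autoref{sec:pre}.
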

\begin{proof}
    Building the sequence of matrices is given by \autoref{thm:bcfapproxnew}. 
    By \autoref{thm:thmsolvenew}, there is an operator $\WW$ that can be applied in
    $O(m \log n \log\log n)$ work and $O(\log m \log n \log\log n)$ depth to a
    conformable vector and $\WW^+ \approx_{1} \LL$. 
    Using \autoref{fact:inver}, $\WW \approx_{1} \LL^+$ and that we can use
    \textsc{PreconRichardson}($\LL, \WW, \bb, \epsilon$) to generate an
    $\epsilon$ approximation to the system $\LL\xx = \bb$ by
    \autoref{thm:richardson}. 
    Since the algorithm applies $\WW$ for a total of $O(\log 1/\epsilon)$ times,
    the total work and depth guarantee is $O(m \log n \log\log n \log
    1/\epsilon)$ and $O(\log m \log n \log\log n \log 1/\epsilon)$ respectively. 
\end{proof}

Both \autoref{thm:main} and \autoref{thm:alter} are direct consequences of
\autoref{lemma:mainnew} where the former one uses naive edge splitting
due to \autoref{lemma:split} and the latter one uses \autoref{lemma:sparsebound}
with the choice $K = \Theta(\log^3 n)$.

\section{Solving using Block Cholesky Factorization} \label{sec:abcf}
In this section, we prove \autoref{thm:thmsolvenew} for solving the Laplacian
system using an approximate block Cholesky factorization.

\begin{proof}[Proof of \autoref{thm:thmsolvenew}]
    Notice that each $\xx^{(k)}$ is a linear transformation of $\bb^{(k)}$. 
    Let us define $\WW^{(k)}$ to be the equivalent linear operator of such a
    transformation.
    When $k = d$, by \autoref{alg:solve_d} of \autoref{alg:solve} the operator
    is exactly $\WW^{(d)} = \LL_{G^{(d)}}$.
    For simplicity, we write $\calL^{(k)} = \LL_{G^{(k)}}$~\footnote{Not to be
    confused with $\LL^{(k)}$.}.
    For the rest, one can observe that the operator is
    \begin{align*}
        (\WW^{(k)})^+
        =&
        \begin{pmatrix}
            (\ZZ^{(k)})^{-1} & 0 \\
            \calL^{(k)}_{C_{k+1}F_{k+1}} & \II
        \end{pmatrix}
        \begin{pmatrix}
            \II & 0 \\
            0 & (\WW^{(k+1)})^+
        \end{pmatrix}
        \begin{pmatrix}
            \II & \ZZ^{(k)}\calL^{(k)}_{F_{k+1}C_{k+1}} \\
            0 & \II
        \end{pmatrix}
        \\
        =&
        \begin{pmatrix}
            \II & 0 \\
            \calL^{(k)}_{C_{k+1}F_{k+1}}\ZZ^{(k)} & \II
        \end{pmatrix}
        \begin{pmatrix}
            (\ZZ^{(k)})^{-1} & 0 \\
            0 & (\WW^{(k+1)})^+
        \end{pmatrix}
        \begin{pmatrix}
            \II & \ZZ^{(k)}\calL^{(k)}_{F_{k+1}C_{k+1}} \\
            0 & \II
        \end{pmatrix}.
    \end{align*}

    \sloppy
    Let $\LL^{(d,k)} = (\UU^{(d)}_{C_kC_k})^\top
    \DD^{(d)}_{C_kC_k}\UU^{(d)}_{C_kC_k}$
    with $C_0 = V(G)$.
    We use backward induction to show that for all $k$,
    $
        (\WW^{(k)})^+ \approx_{1/2-k/2d} \LL^{(d,k)}
    $
    The base case is already covered for $k=d$.
    For the induction step, assume the approximation for $k+1$, i.e.
    $
        (\WW^{(k+1)})^+ \approx_{1/2-(k+1)/2d} \LL^{(d,k+1)}.
    $
    By \eqref{eq:blockd} and \eqref{eq:blocku}, 
    \begin{align*}
        \LL^{(d,k)} =&
        \begin{pmatrix}
            \II & 0 \\
            \calL^{(k)}_{F_{k+1}C_{k+1}} (\calL^{(k)}_{F_{k+1}F_{k+1}})^{-1}
            & \II
        \end{pmatrix}
        \begin{pmatrix}
            \calL^{(k)}_{F_{k+1}F_{k+1}} & 0 \\
            0 & \LL^{(d,k+1)}
        \end{pmatrix}
        \begin{pmatrix}
            0 & \II \\
            \II &
            (\calL^{(k)}_{F_{k+1}F_{k+1}})^{-1} \calL^{(k)}_{C_{k+1}F_{k+1}} 
        \end{pmatrix}.
    \end{align*}
    Using our inductive assumption and including $(\ZZ^{(k)})^{-1}$ to the
    topleft block,
    \[
        \begin{pmatrix}
            (\ZZ^{(k)})^{-1} & 0 \\
            0 & (\WW^{(k+1)})^+
        \end{pmatrix}
        \approx_{1/2-\frac{k+1}{2d}}
        \begin{pmatrix}
            (\ZZ^{(k)})^{-1} & 0 \\
            0 & \LL^{(d,k+1)}
        \end{pmatrix}.
    \]
    Let $\MM^{(k)}$ be defined as
    \begin{align*}
        \MM^{(k)} = 
        &
        \begin{pmatrix}
            \II & 0 \\
            \calL^{(k)}_{C_{k+1}F_{k+1}}\ZZ^{(k)} & \II
        \end{pmatrix}
        \begin{pmatrix}
            (\ZZ^{(k)})^{-1} & 0 \\
            0 & \LL^{(d,k+1)}
        \end{pmatrix}
        \begin{pmatrix}
            \II & \ZZ^{(k)}\calL^{(k)}_{F_{k+1}C_{k+1}} \\
            0 & \II
        \end{pmatrix}.
    \end{align*}
    Then, using \autoref{fact:twoside} with $\CC$ being the lower triangular
    matrix,
    \[
        (\WW^{(k)})^+ \approx_{1/2-\frac{k+1}{2d}} \MM^{(k)}.
    \]
    Note that $\Sc(\LL^{(d,k)}, C_{k+1}) = \LL^{(d,k+1)}$
    directly from block Cholesky factorization.
    Given the choice of $\epsilon = 1/2d$, by \autoref{lemma:jacobiapprox}, we
    also have
    $\MM^{(k)} \approx_{1/2d} \LL^{(d)}_{C_kC_k}$,
    and that
    \[
        (\WW^{(k)})^+ \approx_{1/2-\frac{k}{2d}} \LL^{(k)}
    \]
    as required.

    Lastly, consider the work and depth of \AC. 
    For the first for loop, the runtime of each forward substitution is dictated
    by two parts: (a) solving using Jacobi iteration with equivalent operator
    $\ZZ^{(k)}$, and (b) apply a subblock of $\calL^{(k)}$ to a vector. 
    Similarly, each backward substitution step, indexed by $k$, is also dictated
    by the same two operations. 

    We remark that the Laplacian of a multi-graph with $m$ multi-edges can be
    applied to a vector in $O(m)$ work and $O(\log m)$ depth. 
    The $O(\log m)$ depth comes from summing up the value for each entry after
    all the edge weights and vector values are multiplied in parallel. 

    As for all $k < d$, $G^{(k)}$ has $O(m)$ multi-edges by
    \ref{thm:bcf1}.
    Thus, applying a subblock of $\calL^{(k)}$ runs in
    $O(m)$ work and $O(\log m)$ depth. %
    Now, $\calL^{(k)}_{F_{k+1}F_{k+1}}$ is a subblock of $\calL^{(k)}$,
    and again by \ref{thm:bcf2}, is also $5$-DD. 
    So, given the choice of $\epsilon = 1/O(\log n)$,
    \autoref{lemma:jacobi} gives us that $\ZZ^{(k)}$ can be applied in 
    $O(m \log\log n)$ work and $O(\log m \log\log n)$ depth. 

    There is only a small overhead of $O(1)$ for both work and depth in solving
    $\calL^{(d)} \xx^{(d)} = \bb^{(d)}$ due to \ref{thm:bcf3}. 
    Therefore, the total runtime is $O(m \log n \log\log n)$ work and 
    $O(\log m \log n \log\log n)$ depth by $d = O(\log n)$.
\end{proof}

\section{Schur Approximation Guarantees} \label{sec:schur}
\label{sec:concentration}

In this section, we formally present the guarantees of the subroutine
\TW in \autoref{alg:trw} and use them to show
\autoref{thm:bcfapproxnew}. 

Firstly, 
we show that our random sampling procedure \TW for
approximation Schur complements is unbiased. 
\begin{lemma} \label{lemma:unbiased}
    For every multi-graph $G$ and every non-trivial $C \subseteq V(G)$,
    \textsc{TerminalWalks}$(G, C)$ returns a multi-graph $H$ such that 
    $\av \LL_H = \Sc(\LL_G, C)$.
\end{lemma}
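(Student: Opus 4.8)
The plan is to expand $\av\LL_H$ by linearity over the $m$ edge-samples, write the probability of each random-walk outcome explicitly, and then re-index the resulting double sum until it becomes the walk expansion of $\Sc(\LL_G,C)$ supplied by \autoref{lemma:sumofwalks}. What makes everything fit together is that the weight $\ww(f_e)=\bigl(\sum_{f\in W(e)}1/\ww(f)\bigr)^{-1}$ assigned by \TW is exactly what triggers a cancellation when one sums over which edge of a $C$-terminal walk played the role of the base edge $e$.

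By linearity of expectation, $\av\LL_H=\sum_{e\in E}\av\YY_e$. Fix a multi-edge $e$ with incident vertices $u,v$. Conditioning on the two random walks,
\[
\av\YY_e=\sum_{W_1,W_2}\pr[W_1(e){=}W_1]\,\pr[W_2(e){=}W_2]\;\ww(f_e)\,\bb_{f_e}\bb_{f_e}^\top ,
\]
where $W_1$ ranges over walks from $u$ whose only vertex in $C$ is the last one, $c_1$, and similarly $W_2$ from $v$ to $c_2$; $W(e)=\mathrm{rev}(W_1)\cdot e\cdot W_2$; $f_e=\{c_1,c_2\}$; and $\bb_{\{c,c\}}\bb_{\{c,c\}}^\top=\mathbf 0$, matching the fact that the self-loop that \TW adds when $c_1=c_2$ has zero Laplacian. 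As $G$ is connected and $C$ non-trivial, each walk reaches $C$ almost surely, and since $\ww(f_e)\le\ww(e)$ (because $e\in W(e)$) the series converges absolutely, so the rearrangements below are valid. Recall that a random walk follows a fixed walk $(z_0,h_1,z_1,\dots,h_k,z_k)$ with probability $\prod_{j=1}^k\ww(h_j)/\ww(z_{j-1})$, where $\ww(z)=\sum_{f\ni z}\ww(f)$.

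The identity driving the proof is the following. Write the concatenated $C$-terminal walk as $W(e)=(u_0,e_1,u_1,\dots,e_l,u_l)$, where $e=e_i$ for the appropriate index. Multiplying out the two walk probabilities, the edges of $W_1$ and of $W_2$ are precisely the $l-1$ edges of $W(e)$ other than $e_i$, while the vertices appearing in the denominators (all vertices of $W_1,W_2$ except their $C$-endpoints) are precisely the internal vertices $u_1,\dots,u_{l-1}$ of $W(e)$; hence
\[
\pr[W_1(e){=}W_1]\,\pr[W_2(e){=}W_2]=\frac{1}{\ww(e_i)}\cdot\frac{\prod_{j=1}^{l}\ww(e_j)}{\prod_{j=1}^{l-1}\ww(u_j)} ,
\]
which is $1/\ww(e_i)$ times the weight \autoref{lemma:sumofwalks} attaches to $W(e)$. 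Thus the triple $(e,W_1,W_2)$ contributes $\frac{\ww(f_e)}{\ww(e_i)}\cdot\frac{\prod_j\ww(e_j)}{\prod_j\ww(u_j)}\,\bb_{\{u_0,u_l\}}\bb_{\{u_0,u_l\}}^\top$ to $\av\LL_H$. Now group the triples by the $C$-terminal walk $W=(u_0,e_1,\dots,e_l,u_l)$ they produce, up to reversal: these are in bijection with the $l$ choices of a cut-slot $i\in\{1,\dots,l\}$ of $W$, the base edge being the multi-edge at slot $i$ and $W_1,W_2$ the two sub-walks read outward toward $C$ (the orientation of that multi-edge decides whether $W$ or $W^{\mathrm{rev}}$ is realized, but the summand is reversal-invariant). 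Since $\ww(f_e)=\bigl(\sum_{j=1}^l 1/\ww(e_j)\bigr)^{-1}$ depends only on $W$, summing the triple-contribution over $i=1,\dots,l$ produces the factor $\ww(f_e)\cdot\sum_{j=1}^l 1/\ww(e_j)=1$, so the total contribution of the reversal class of $W$ is exactly $\frac{\prod_j\ww(e_j)}{\prod_j\ww(u_j)}\,\bb_{\{u_0,u_l\}}\bb_{\{u_0,u_l\}}^\top$; the degenerate triples with $c_1=c_2$ contribute $0$ and match the closed $C$-terminal walks, which also contribute $0$ in \autoref{lemma:sumofwalks}. Summing over all $C$-terminal walks and applying \autoref{lemma:sumofwalks} yields $\av\LL_H=\Sc(\LL_G,C)$.

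The step I expect to require the most care is this last re-indexing rather than any of the algebra: for a general multi-graph one must verify that sampled triples $(e,W_1,W_2)$ and pairs (cut-slot, reversal class of a $C$-terminal walk) are genuinely in bijection — in particular, handling the orientation of the base multi-edge and walks that traverse the same multi-edge several times — and that the degenerate $c_1=c_2$ outcomes line up with the closed walks on both sides. Once that correspondence and the two identities $\pr[W_1(e){=}W_1]\,\pr[W_2(e){=}W_2]=\ww(e)^{-1}\cdot(\text{walk weight of }W(e))$ and $\ww(f_e)\cdot\sum_j 1/\ww(e_j)=1$ are in hand, the lemma follows.
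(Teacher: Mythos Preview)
Your proposal is correct and follows essentially the same route as the paper: both compute $\av\YY_e$ by summing over sampled walks, identify the walk probability as $\ww(e)^{-1}$ times the weight attached to $W(e)$ in \autoref{lemma:sumofwalks}, re-index so as to group contributions by $C$-terminal walk, and use the cancellation $\ww(f_e)\sum_j 1/\ww(e_j)=1$ to finish. The paper argues entry-by-entry and is terser about the re-indexing, whereas you work with the full rank-one Laplacians and are more explicit about the bijection between sampled triples and (walk, cut-slot) pairs; these differences are cosmetic.
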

\begin{proof}
    For each $e$, recall that $\YY_e$ is the associated Laplacian of newly
    sampled edge $f_e$ (See \autoref{alg:trw_w} of \TW). 
    Note that $\YY_e = 0$ if $f_e$ is not formed due to $c_1 = c_2$ as in 
    \autoref{alg:trw_j} of \TW.
    By \autoref{lemma:sumofwalks}, $\Sc(\LL_G, C)$ is sum of $C$-terminal random
    walks on $G$. 

    Consider a multi-edge $e \in G$ and let the incident vertices be $u,v$. 
    For the sampled walk $W(e)$, let $W_1(e) = (u_0, e_1, \ldots, e_p, u_p)$
    with $u_0 = u$ and $W_2(e) = (v_0, f_1, \ldots, f_q, v_q)$ with 
    $v_0 = v$. 
    The probability of $W(e)$ being sampled is
    \begin{align*}
        \pr(W(e)) 
        &= 
        \pr(W_1(e)) \cdot \pr(W_2(e))
        = 
        \frac{\prod_{i=1}^p \ww(e_i)}{\prod_{i=1}^p \ww(u_i)} 
        \cdot
        \frac{\prod_{i=1}^q \ww(f_i)}{\prod_{i=1}^q \ww(v_i)} 
        \\
        &= 
        \frac{1}{\ww(e)} 
        \frac{\prod_{f \in E(W)} \ww(f)}{\prod_{z \in V(W) \backslash C} \ww(z)},
    \end{align*}
    where $V(W)$ and $E(W)$ are the multisets of vertices and edges in $W$. 
    We set $\prod_{z \in V(W) \backslash C} \ww(z) = 1$ when $V(W) \backslash C =
    \emptyset$ to cover the case where both $u,v \in C$. 
    Then, the expected entry for any $c_1, c_2 \in C, c_1 \neq c_2$ is
    \begin{align*}
        \left( \av \sum_e \YY_e \right)_{c_1c_2}
        =& 
        -\sum_e 
        \sum_{
            \begin{subarray}{c}
                W: C\text{-terminal}, e \in W \\
                c_1,c_2 \in V(W) \cap C
            \end{subarray}
            }
        \frac{1}{\sum_{f \in E(W)} 1/\ww(f)} 
        \cdot \frac{1}{\ww(e)} 
        \cdot
        \frac{\prod_{f \in E(W)} \ww(f)}{\prod_{u \in V(W) \backslash C} \ww(u)}
        \\
        =&
        -\sum_{
            \begin{subarray}{c}
                W: C\text{-terminal} \\
                c_1,c_2 \in V(W) \cap C
            \end{subarray}
            }
        \frac{\prod_{f \in E(W)} \ww(f)}{\prod_{u \in V(W) \backslash C} \ww(u)}
        \cdot 
        \frac{1}{\sum_{f \in W} 1/\ww(f)}
        \cdot
        \left( \sum_{e: e \in W} \frac{1}{\ww(e)} \right)
        \\
        =&
        -\sum_{
            \begin{subarray}{c}
                W: C\text{-terminal} \\
                c_1,c_2 \in V(W) \cap C
            \end{subarray}
            }
        \frac{\prod_{f \in E(W)} \ww(f)}{\prod_{u \in V(W) \backslash C} \ww(u)}
        \\
        =& 
        \left( \Sc(\LL_G,C) \right)_{c_1c_2}.
    \end{align*}
    As each $\YY_e$ is Laplacian, the sum must also be Laplacian, and
    subsequently the expected sum is also Laplacian. 
    Hence,
    $%
        \av \LL_H = \av \sum_e \YY_e = \Sc(\LL_G,C). 
    $%
\end{proof}

The following lemma formally lays out an important guarantee of
\textsc{TerminalWalks} for proving our concentration results,
i.e.~$\alpha$-boundedness is ``closed'' under this operation. 
\begin{lemma}
    \label{lemma:twalksnew}
    For every multi-graph $G$ such that each
    multi-edge of $G$ is $\alpha$-bounded w.r.t.~a Laplacian $\LL$,
    and every non-trivial $C \subseteq V(G)$, 
    \textsc{TerminalWalks}$(G, C)$ returns a multi-graph $H$ with a collection
    of independent multi-edges $f_e$ where each is $\alpha$-bounded
    w.r.t.~$\LL$.
\end{lemma}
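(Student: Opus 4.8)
The plan is to show that each newly sampled multi-edge $f_e$ produced by \textsc{TerminalWalks} has leverage score at most $\alpha$ with respect to $\LL$, using the fact that $f_e$ corresponds to a $C$-terminal walk whose ``conductance'' (the reciprocal of the sum of resistances along the walk) is what determines its weight. First I would recall that the leverage score of $f_e = \{c_1, c_2\}$ with weight $\ww(f_e)$ is $\ttau(f_e) = \ww(f_e)\, \bb_{c_1 c_2}^\top \LL^+ \bb_{c_1 c_2} = \ww(f_e) \cdot R_{\mathrm{eff}}(c_1, c_2)$, where $R_{\mathrm{eff}}(c_1,c_2) = \bb_{c_1c_2}^\top \LL^+ \bb_{c_1c_2}$ is the effective resistance between $c_1$ and $c_2$ in $\LL$. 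Since $\ww(f_e) = 1/\sum_{f \in W(e)} 1/\ww(f)$, the resistance of the multi-edge $f_e$ equals $\sum_{f \in W(e)} 1/\ww(f)$, the total resistance of the path $W(e)$ viewed as a series connection of the resistors $1/\ww(f)$.

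The key step is then a triangle-inequality / series argument for effective resistance: because $W(e)$ is a walk from $c_1$ to $c_2$ through edges $e_1, \ldots, e_\ell$ (with repetitions allowed), the effective resistance satisfies $R_{\mathrm{eff}}(c_1, c_2) \le \sum_{i=1}^\ell R_{\mathrm{eff}}(\text{endpoints of } e_i) \le \sum_{i=1}^\ell 1/\ww(e_i)$, where the last inequality uses Rayleigh monotonicity — the effective resistance between the endpoints of a single multi-edge $e_i$ in $\LL$ is at most $1/\ww(e_i)$ (removing all other edges only increases resistance, and a single resistor of conductance $\ww(e_i)$ has resistance exactly $1/\ww(e_i)$). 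Hence
\[
    \ttau(f_e) = \frac{R_{\mathrm{eff}}(c_1,c_2)}{\sum_{f \in W(e)} 1/\ww(f)} \le \frac{\sum_{i} 1/\ww(e_i)}{\sum_{i} 1/\ww(e_i)} = 1.
\]
But this only gives a bound of $1$, not $\alpha$. To get the sharper bound, I would instead route the argument through the $\alpha$-boundedness hypothesis on the \emph{original} multi-edges: each $e_i$ appearing in $W(e)$ satisfies $\ttau_{\LL}(e_i) = \ww(e_i)\, R_{\mathrm{eff}}(\text{endpoints of } e_i) \le \alpha$, so $R_{\mathrm{eff}}(\text{endpoints of } e_i) \le \alpha/\ww(e_i)$. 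Substituting into the series bound, $R_{\mathrm{eff}}(c_1,c_2) \le \sum_i \alpha/\ww(e_i) = \alpha \sum_i 1/\ww(e_i)$, and therefore $\ttau(f_e) \le \alpha \sum_i (1/\ww(e_i)) / \sum_i (1/\ww(e_i)) = \alpha$, as desired. The independence claim follows immediately from the construction: the walks $W_1(e), W_2(e)$ for distinct $e$ are sampled using fresh, independent randomness (each edge $e$ spawns its own pair of random walks), so the multi-edges $\{f_e\}_{e \in E}$ are mutually independent random variables.

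The main obstacle I anticipate is making the effective-resistance triangle inequality fully rigorous in the multi-graph setting and in the presence of repeated edges/vertices along the walk $W(e)$ — in particular, confirming that concatenating $W_1(e)$, the bridging edge $e$, and $W_2(e)$ (each possibly revisiting vertices) still yields a genuine path between $c_1$ and $c_2$ for the series bound, and that the bound $R_{\mathrm{eff}}(c_1,c_2) \le \sum_i 1/\ww(e_i)$ over the multiset of edges in the walk holds without any cancellation issues. This is handled cleanly by the standard variational characterization $R_{\mathrm{eff}}(c_1,c_2) = \min\{ \ff^\top \RR \ff : \ff \text{ is a unit } c_1\text{-}c_2 \text{ flow}\}$ where $\RR$ assigns resistance $1/\ww(f)$ to each multi-edge $f$: routing one unit of flow along the walk $W(e)$ (summing signed contributions if an edge is traversed multiple times, which only helps) gives energy at most $\sum_{f \in W(e)} 1/\ww(f)$, and then replacing $1/\ww(f)$ by $\alpha/\ww(f)$ via $\alpha$-boundedness of each original edge, together with scaling the flow, yields the factor $\alpha$. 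I would state this as the one nontrivial lemma and keep the rest as short bookkeeping.
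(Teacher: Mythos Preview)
Your proposal is correct and follows essentially the same route as the paper: both bound $\bb_{c_1c_2}^\top \LL^+ \bb_{c_1c_2}$ by $\sum_i \bb_{e_i}^\top \LL^+ \bb_{e_i} \le \alpha \sum_i 1/\ww(e_i)$ via the triangle inequality for effective resistance, then cancel against $\ww(f_e)=\bigl(\sum_i 1/\ww(e_i)\bigr)^{-1}$. The only cosmetic difference is that the paper cites the triangle inequality as a black-box lemma (from \cite{KyngS16}) and handles repeated vertices by noting $c_1\neq c_2$ lets one segment the walk to avoid self-loops, whereas you propose to justify it directly via the flow-energy variational characterization; either works, and your detour through the weaker bound of $1$ (via Rayleigh) can simply be dropped.
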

To show this, we invoke the property that effective resistance is a distance and
satisfies triangle inequality. 
\begin{lemma}[\cite{KyngS16}, Lemma 5.2] \label{lemma:refftri} 
    For every weighted, connected multi-graph $G$ with its associated
    Laplacian $\LL$, any three vertices $u,v,z \in V(G)$ and the pair vectors
    $\bb_{uv}, \bb_{vz}, \bb_{uz}$,
    then
    \[
        \bb_{uz}^{\top} \LL^+ \bb_{uz} \le  \bb_{uv}^{\top} \LL^+ \bb_{uv}
        +  \bb_{vz}^{\top} \LL^+ \bb_{vz}.
    \]
\end{lemma}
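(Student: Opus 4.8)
I would prove this classical fact (effective resistance is a metric) by first reducing it to a single off-diagonal inequality and then invoking a discrete maximum principle for the Laplacian. The reduction: since $\bb_{uz} = \bb_{uv} + \bb_{vz}$ (both sides equal $\ee_u - \ee_z$), expanding the quadratic form and using symmetry of $\LL^+$ gives
\[
  \bb_{uz}^{\top}\LL^+\bb_{uz}
  = \bb_{uv}^{\top}\LL^+\bb_{uv} + \bb_{vz}^{\top}\LL^+\bb_{vz} + 2\,\bb_{uv}^{\top}\LL^+\bb_{vz},
\]
so the statement to be proved is exactly equivalent to $\bb_{uv}^{\top}\LL^+\bb_{vz}\le 0$. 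This is the only thing I would need to establish.

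The plan for that inequality is to read it off a potential function. Set $\phi \defeq \LL^+\bb_{uv}$. Since $G$ is connected, $\ker\LL = \mathrm{span}(\vone)$ and $\bb_{uv}\perp\vone$, so $\bb_{uv}$ lies in the range of $\LL$ and therefore $\LL\phi = \bb_{uv} = \ee_u - \ee_v$. By symmetry of $\LL^+$, $\bb_{uv}^{\top}\LL^+\bb_{vz} = \phi^{\top}\bb_{vz} = \phi_v - \phi_z$, so it suffices to show $\phi_v\le\phi_z$, and I would in fact show $\phi_v = \min_{w\in V}\phi_w$. Writing $\ww_{wx}\defeq -\LL_{wx}\ge 0$ for the aggregate edge weight between distinct vertices $w,x$, one has $(\LL\phi)_w = \sum_{x\neq w}\ww_{wx}(\phi_w-\phi_x)$ for every $w$, and the equation $\LL\phi = \ee_u-\ee_v$ says this quantity is $0$ for $w\notin\{u,v\}$, is $1$ for $w=u$, and is $-1$ for $w=v$; in particular $(\LL\phi)_w\ge 0$ for every $w\neq v$.

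Then I would apply the minimum principle. Let $m=\min_w\phi_w$ and $S=\{w:\phi_w=m\}\neq\emptyset$, and suppose for contradiction that $v\notin S$. For each $w\in S$ we have $w\neq v$, hence $0\le(\LL\phi)_w = \sum_{x\neq w}\ww_{wx}(m-\phi_x)\le 0$; every summand is nonpositive, so every summand vanishes, forcing $\phi_x=m$ for every $x$ with $\ww_{wx}>0$. Thus $S$ is closed under passing to neighbors, and by connectivity $S=V$, i.e.\ $\phi$ is constant — contradicting $\LL\phi = \ee_u-\ee_v\neq\vzero$. Hence $v\in S$, so $\phi_v=m\le\phi_z$, which gives $\bb_{uv}^{\top}\LL^+\bb_{vz}=\phi_v-\phi_z\le 0$ and finishes the proof.

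Nothing here is genuinely hard; the one place I would be careful is that the source vertex $u$ has $(\LL\phi)_u=1>0$ rather than $0$, but the minimum-principle argument only ever uses $(\LL\phi)_w\ge 0$ for $w\neq v$, so it goes through unchanged. As an alternative, should one prefer the electrical viewpoint, I could instead use Thomson's principle: $\bb_{uv}^{\top}\LL^+\bb_{uv}$ is the minimum energy over unit $u$–$v$ flows, and summing the optimal unit $u$–$v$ and $v$–$z$ flows yields a unit $u$–$z$ flow whose energy equals $\bb_{uv}^{\top}\LL^+\bb_{uv}+\bb_{vz}^{\top}\LL^+\bb_{vz}+2\,\bb_{uv}^{\top}\LL^+\bb_{vz}$, reducing again to $\bb_{uv}^{\top}\LL^+\bb_{vz}\le 0$.
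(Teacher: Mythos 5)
The paper does not prove this lemma itself; it imports it verbatim as~\cite{KyngS16}, Lemma 5.2, so there is no in-paper proof to compare against. Your argument is a correct and completely standard self-contained proof of the fact. The reduction via $\bb_{uz}=\bb_{uv}+\bb_{vz}$ to the sign of the cross term $\bb_{uv}^{\top}\LL^+\bb_{vz}$ is exact, the identification $\bb_{uv}^{\top}\LL^+\bb_{vz}=\phi_v-\phi_z$ with $\phi=\LL^+\bb_{uv}$ uses only $\LL\LL^+\bb_{uv}=\bb_{uv}$ (valid since $G$ is connected and $\bb_{uv}\perp\vone$), and the discrete minimum-principle step is airtight: on the minimizing set $S$ you correctly exploit that $(\LL\phi)_w\geq 0$ for all $w\neq v$ (in particular at $w=u$, where the value is $+1$, which only helps), conclude $S$ is closed under neighbors, and use connectivity to force $S=V$ unless $v\in S$. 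The degenerate cases $u=v$, $v=z$, $u=z$ are trivially covered since the relevant $\bb$ vanishes. Your honest caveat that the Thomson's-principle variant as sketched circles back to the same cross-term inequality is also accurate --- the electrical $u$--$v$ and $v$--$z$ flows sum to a unit $u$--$z$ flow, but bounding its energy by the sum of energies again requires $\bb_{uv}^{\top}\LL^+\bb_{vz}\leq 0$, so the potential-theoretic argument is doing the real work in either rendering. No gaps.
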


\begin{proof}[Proof of \autoref{lemma:twalksnew}]
    If $f_e = e$, then $\ww(f_e) = \ww(e)$ and its $\alpha$-boundedness
    trivially holds. 
    Otherwise, consider the sampled random walk $W(e)$ and say $W(e) = (u_0,
    e_1, \ldots, e_l, u_l)$ for some $l \geq 2$ and w.l.o.g.~$u_0 = c_1, u_l =
    c_l$. 
    By applying \autoref{lemma:refftri} inductively (and recall that $\ttau$ is
    the leverage score),
    \[
        \bb_{u_0 u_l}^\top \LL^+ \bb_{u_0 u_l} 
        \leq
        \sum_{i=1}^l \bb_{e_i}^\top \LL^+ \bb_{e_i}
        =
        \sum_{i=1}^l \frac{\ttau(\bb_{e_i})}{\ww(e_i)}
        \leq
        \alpha \sum_{i=1}^l \frac{1}{\ww(e_i)}. 
    \]
    Note that by $u_0 = c_1 \neq c_2 = u_l$, we can always segment the walk in a
    way to avoid dealing with self-loops during induction steps. 
    Recall that the weight is set to exactly 
    $\ww(f_e) = \frac{1}{\sum_{i=1}^l \frac{1}{\ww(e_i)}}$ by 
    \autoref{alg:trw_w} of \TW.
    Therefore, $f_e$ is $\alpha$-bounded w.r.t.~$\LL$. 
\end{proof}

We show that the number of multi-edges generated by \TW never increases and that
the algorithm runs efficiently.
\begin{lemma} \label{lemma:twtime}
    For every multi-graph $G$ with $n$ vertices and
    $m$ multi-edges, and every non-trivial $C \subseteq V(G)$,
    \textsc{TerminalWalks}$(G, C)$ returns a multi-graph 
    $H$ with at most $m$ multi-edges.
    If the subset $F = V\backslash C$ is $5$-DD in $\LL_G$, then
    the algorithm runs in $O(m)$ work and $O(\log m)$ depth with high
    probability.
\end{lemma}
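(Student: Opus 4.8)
The plan is to prove the two claims separately. For the bound on the number of multi-edges, the key observation is that \TW iterates over each multi-edge $e \in E(G)$ exactly once, and in line \ref{alg:trw_w} it adds at most one new multi-edge $f_e$ to $H$ (exactly zero when $c_1 = c_2$, in which case $\YY_e = 0$). Hence $|E(H)| \le |E(G)| = m$, with no dependence on the structure of the walks. This part is immediate.

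For the running time, the main content is to argue that under the $5$-DD hypothesis on $F = V \setminus C$, every random walk $W_1(e), W_2(e)$ reaches $C$ in $O(\log m)$ steps with high probability, and in expectation in $O(1)$ steps. The $5$-DD condition says that for each $i \in F$, the total weight of edges from $i$ staying inside $F$ is at most $\tfrac15 \ww(i)$; equivalently, a single step of the natural random walk from any vertex of $F$ lands in $C$ with probability at least $4/5$. Therefore the length of each walk from a vertex of $F$ is stochastically dominated by a geometric random variable with success probability $4/5$: the probability the walk has not hit $C$ after $t$ steps is at most $(1/5)^t$. This gives expected length $O(1)$ per walk and, by a union bound over the $O(m)$ walks, maximum length $O(\log m)$ with high probability. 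First I would make this domination precise (noting that vertices incident to $e$ may themselves lie in $C$, in which case the corresponding walk has length $0$), then sum: the total work is $\sum_e (|W_1(e)| + |W_2(e)| + O(1))$, whose expectation is $O(m)$; combined with the high-probability length bound and a Chernoff/Markov argument on the total, the work is $O(m)$ with high probability. Each walk is generated independently, and within a walk the steps are sequential, so the depth is the maximum walk length plus the $O(\log m)$ overhead for the parallel primitives (weighted sampling preprocessing, edge-list/adjacency-list conversion, and summing weights), giving $O(\log m)$ depth with high probability. Forming $f_e$ in line \ref{alg:trw_w} requires the quantity $\sum_{f \in W(e)} 1/\ww(f)$, which is a prefix-sum-style reduction over a path of length $O(\log m)$, hence $O(\log m)$ depth and work proportional to the walk length.

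The main obstacle I expect is the high-probability bound on the \emph{total} work rather than just its expectation: a naive union bound controls the maximum walk length but one must still argue that the sum of $\Theta(m)$ independent (truncated-)geometric lengths concentrates around its $\Theta(m)$ mean. This follows from a standard Chernoff bound for sums of independent geometrics (or, after truncating at $O(\log m)$, for bounded independent variables), but it is the one place where the argument is not purely syntactic. A secondary subtlety is that the walk steps are \emph{not} quite independent across different starting edges if implemented with shared randomness — but as written each $W_1(e), W_2(e)$ uses fresh random bits, so independence holds and the concentration argument goes through cleanly. Everything else — the edge-count bound, the per-walk primitive costs, and assembling the depth — is routine given the parallel primitives quoted in \autoref{sec:pre}.
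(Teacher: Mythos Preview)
Your proposal is correct and matches the paper's proof essentially line for line: the edge-count bound via ``one output edge per input edge,'' the $4/5$ escape probability from the $5$-DD condition giving geometric walk lengths with $O(1)$ expectation and $O(\log m)$ maximum whp, the Chernoff bound for $O(m)$ total work whp, and the $O(\log m)$ depth from the maximum walk length plus the parallel sampling and representation-conversion overheads. Your additional remarks on independence and the prefix-sum for computing $\ww(f_e)$ are sound and simply make explicit what the paper leaves implicit.
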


\begin{proof}[Proof of \autoref{lemma:twtime}]
    For the number of edges, notice that each multi-edge $e \in E(G)$ can spawn
    at most 1 multi-edge in $H$.
    Thus, the number of multi-edges of $H$ is at most $m$.
    
    Now, we analyze the time complexity of \TW. 
    Due to $F$ being $5$-DD, for any vertex $u \in F$, a
    random walk step samples a vertex in $C$ with probability at least $4/5$. 
    Then, the length of the walk $|W(e)| = O(1)$ in expectation. 
    The longest length of all walks is $\max_e |W(e)| = O(\log m)$
    with high probability. 
    In addition, using the Chernoff bound, the sum of length of all walks is 
    $\sum_e |W(e)| = O(m)$ with high probability.
    Thus, it takes $O(m)$ work and $O(\log m)$ depth to sample all the random
    walks. 
    There is an additional $O(m)$ work and $O(\log m)$ depth
    overhead for the weighted sampling preprocessing time and for coverting the
    new multi-edges back into an adjacency list representation.
    Hence, the algorithm runs in $O(m)$ work and $O(\log m)$ depth. 
\end{proof}

We now present the proof to our main guarantees to the block Cholesky
factorization algorithm in \autoref{thm:bcfapproxnew}. 
For the rest of this section, we use $\LL$ to denote the input Laplacian. 
For any PSD matrix $\SS$ such that $\ker(\LL) \subseteq \ker(\SS)$, we
define
\[
    \ol{\SS} \defeq \LL^{+/2} \SS \LL^{+/2}.
\]

Recall the definition to $\DD^{(k)},\UU^{(k)}$ in
\eqref{eq:blockd},\eqref{eq:blocku}.
Now, by \eqref{eq:bcf} the approximate block Cholesky factorization up to
iteration $k$ is then
\[
\begin{aligned}
    \LL^{(k)} 
    =&
    {\UU^{(k)}}^\top \DD^{(k)} \UU^{(k)}
    \\
    =& 
    {\UU^{(k-1)}}^\top \DD^{(k-1)} \UU^{(k-1)} 
    + \LL_{G^{(k)}} - \Sc(\LL_{G^{(k-1)}},C_k)
    \\
    =&
    \LL^{(k-1)}
    + \LL_{G^{(k)}} - \Sc(\LL_{G^{(k-1)}},C_k)
\end{aligned}
\]
with $\LL^{(0)} = \LL$ initially.
By \autoref{lemma:unbiased}, conditional on the choices of \BC up to iteration
$k-1$ and $F_k$, we have
\[
    \av \LL^{(k)} = 
    \LL^{(k-1)}
    + \av \LL_{G^{(k)}} - \Sc(\LL_{G^{(k-1)}},C_k)
    =
    \LL^{(k-1)}.
\]
This establishes a matrix martingale.
Thus, to get matrix concentration results, we use the following well established
Freedman's inequality for matrices.
\begin{theorem}[Matrix Freedman, \cite{Tropp11,Tropp12}] \label{thm:freedman}
    Let $\YY_0, \YY_1, \ldots$ be a matrix martingale whose values are symmetric
    and $n \times n$ matrices, and let $\XX_1, \XX_2, \ldots$ be the difference
    sequence $\XX_i = \YY_i - \YY_{i-1}$. 
    Assume that the difference sequence is uniformly bounded by 
    $\norm{\XX_k} \leq R$ for all $k$. 
    Define the predicatable quadratic variation process of the martingale,
    \[
        \WW_k \defeq \sum_{j=1}^k \av_{j-1} \sqparen{\XX_j^2}.
    \]
    Then, for all $t > 0$ and $\sigma^2 > 0$, 
    \[
        \pr\sqparen{\exists k : \norm{\YY_k} \geq t~\mbox{and}~\norm{\WW_k} \leq
        \sigma^2} 
        \leq
        n \cdot \exp\paren{-\frac{t^2/2}{\sigma^2 + Rt/3}}.
    \]
\end{theorem}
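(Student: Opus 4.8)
The statement to prove is the Matrix Freedman inequality (\autoref{thm:freedman}). The plan is to run the matrix Laplace-transform (``cumulant'') method: reduce the operator-norm tail bound to a bound on the top eigenvalue; for a transform parameter $\theta>0$ form the scalar process $Z_k=\mathrm{tr}\exp\paren{\theta\YY_k-g(\theta)\WW_k}$ with $g(\theta)=(e^{\theta R}-\theta R-1)/R^2$; prove $\{Z_k\}$ is a nonnegative supermartingale; and finish with a maximal inequality plus a scalar optimization over $\theta$. For the reduction, note $\norm{\YY_k}=\max\{\lambda_{\max}(\YY_k),\lambda_{\max}(-\YY_k)\}$ and that $\{-\YY_k\}$ is again a matrix martingale with the same difference bound $R$ and the same predictable quadratic variation $\WW_k$; so by symmetry it suffices to bound $\pr\sqparen{\exists k:\lambda_{\max}(\YY_k)\ge t\ \text{and}\ \norm{\WW_k}\le\sigma^2}$, the two-sided norm version following by a union bound absorbed into the leading dimensional factor. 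Take $\YY_0=\vzero$, so $\WW_0=\vzero$ and $Z_0=\mathrm{tr}\,\II=n$.

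The crux is the supermartingale property of $\{Z_k\}$, which rests on two classical facts. (i) \emph{Matrix MGF bound}: the scalar map $x\mapsto(e^{\theta x}-\theta x-1)/x^2$ is increasing, hence $e^{\theta x}\le 1+\theta x+g(\theta)x^2$ for every $x\le R$; since $\av_{k-1}\XX_k=\vzero$ and the eigenvalues of $\XX_k$ lie in $[-R,R]$, the transfer rule followed by conditional expectation gives $\av_{k-1}e^{\theta\XX_k}\pleq \II+g(\theta)\,\av_{k-1}\XX_k^2\pleq \exp\paren{g(\theta)\,\av_{k-1}\XX_k^2}$, and operator monotonicity of $\log$ yields $\log\av_{k-1}e^{\theta\XX_k}\pleq g(\theta)\,\av_{k-1}\XX_k^2$. (ii) \emph{Lieb's concavity theorem}: for fixed self-adjoint $\HH$, $\AA\mapsto\mathrm{tr}\exp(\HH+\log\AA)$ is concave on positive-definite $\AA$, so conditional Jensen gives $\av_{k-1}\mathrm{tr}\exp(\HH+\log\XX)\le\mathrm{tr}\exp(\HH+\log\av_{k-1}\XX)$ whenever $\HH$ is $\mathcal F_{k-1}$-measurable. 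Writing $\theta\YY_k-g(\theta)\WW_k=\HH+\log e^{\theta\XX_k}$ with $\HH=\theta\YY_{k-1}-g(\theta)\WW_k$, which is $\mathcal F_{k-1}$-measurable because $\WW_k=\WW_{k-1}+\av_{k-1}\XX_k^2$ is predictable, and then applying (ii), then (i), then monotonicity of $\XX\mapsto\mathrm{tr}\,e^{\XX}$ with respect to the Loewner order,
\[
    \av_{k-1}Z_k \le \mathrm{tr}\exp\paren{\theta\YY_{k-1}-g(\theta)\WW_k+g(\theta)\av_{k-1}\XX_k^2} = \mathrm{tr}\exp\paren{\theta\YY_{k-1}-g(\theta)\WW_{k-1}} = Z_{k-1},
\]
so the variation term cancels exactly and $\{Z_k\}$ is a nonnegative supermartingale with $\av Z_0=n$.

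Finally I would handle the ``$\exists k$'' jointly with the predictable constraint $\norm{\WW_k}\le\sigma^2$. Since each $\av_{j-1}\XX_j^2\pgeq\vzero$, the sequence $\{\WW_k\}$ is Loewner-monotone and predictable, hence $\tau:=\min\{k\ge0:\norm{\WW_{k+1}}>\sigma^2\}$ is a stopping time and the stopped process $\{Z_{k\wedge\tau}\}$ is still a nonnegative supermartingale with the same initial value $n$. On the target event, a witness index $k^\star$ satisfies $\norm{\WW_{k^\star}}\le\sigma^2$, which by monotonicity forces $k^\star\le\tau$; thus $\YY_{k^\star\wedge\tau}=\YY_{k^\star}$ and $\WW_{k^\star\wedge\tau}=\WW_{k^\star}\pleq\sigma^2\II$, whence $\lambda_{\max}\paren{\theta\YY_{k^\star}-g(\theta)\WW_{k^\star}}\ge\theta t-g(\theta)\sigma^2$ and $\sup_k Z_{k\wedge\tau}\ge e^{\theta t-g(\theta)\sigma^2}$ on that event. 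Ville's maximal inequality for nonnegative supermartingales gives $\pr\sqparen{\sup_k Z_{k\wedge\tau}\ge a}\le n/a$, so the target probability is at most $n\exp\paren{g(\theta)\sigma^2-\theta t}$. Choosing $\theta=R^{-1}\log(1+Rt/\sigma^2)$ turns this into the Bennett-type bound $n\exp\paren{-\sigma^2R^{-2}\,h(Rt/\sigma^2)}$ with $h(u)=(1+u)\log(1+u)-u$, and the elementary inequality $h(u)\ge\tfrac{u^2/2}{1+u/3}$ delivers the claimed $n\exp\paren{-\tfrac{t^2/2}{\sigma^2+Rt/3}}$.

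I expect the main obstacle to be fact (ii), Lieb's concavity theorem: it is the deep ingredient that lets expectations be pushed through $\mathrm{tr}\exp$ step by step without accumulating a factor exponential in the number of martingale increments, and without it the telescoping in the display above collapses. A secondary, purely technical, subtlety is phrasing the variation constraint through a genuine stopping time so that optional stopping for supermartingales applies; the scalar MGF estimate and the final optimization over $\theta$ are routine.
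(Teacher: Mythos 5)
The paper does not prove this theorem; it is cited verbatim from Tropp's work on matrix martingale concentration (\cite{Tropp11,Tropp12}), so there is no ``paper's own proof'' to compare against. Your reconstruction of Tropp's argument is accurate: the matrix Laplace-transform method with the supermartingale $Z_k=\mathrm{tr}\exp(\theta\YY_k-g(\theta)\WW_k)$, Lieb's concavity theorem to push conditional expectations through the trace-exponential, the elementary MGF majorization $e^{\theta x}\le 1+\theta x+g(\theta)x^2$ for $x\le R$ with $g(\theta)=(e^{\theta R}-\theta R-1)/R^2$, the predictable stopping time to enforce $\norm{\WW_k}\le\sigma^2$, Ville's maximal inequality, and the Bennett-to-Bernstein relaxation $h(u)\ge\frac{u^2/2}{1+u/3}$ are all the correct ingredients, assembled in the right order, and your identification of Lieb's theorem as the load-bearing step is exactly right. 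One minor point worth noting: the one-sided $\lambda_{\max}$ bound carries the dimensional factor $n$; passing to the operator-norm form by the union bound you describe formally gives $2n$ (Tropp's Corollary 1.3), so the constant $n$ in the statement as printed is a slight abuse, not something your proof needs to recover.
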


\begin{proof}[Proof of \autoref{thm:bcfapproxnew}]
    \ref{thm:bcf3} is immediate from \autoref{alg:build_d} of \BC.
    \ref{thm:bcf2} is guaranteed by \autoref{lemma:sdd}.

    Consider \ref{thm:bcf1}.
    By \autoref{lemma:twtime}, for all $1 \leq k \leq d-1$, the number
    of multi-edges in $G^{(k)}$ is at most the number of multi-edges in
    $G^{(k-1)}$. 
    Since initially $G^{(0)}$ has exactly $m$ edges, by induction, for all $k
    \leq d-1, G^{(k)}$ must have $O(m)$ edges as required.

    As for \ref{thm:bcf4}, notice that again by \autoref{lemma:sdd}, 
    $|C_k| \leq (1-\frac{1}{40})C_{k-1}$ for all $k$. 
    Then, $d$ is at most 
    \begin{equation} \label{eq:dbound}
        d \leq \log_{40/39}(n) 
        = O(\log n).
    \end{equation}

    We now consider the approximation result in \ref{thm:bcf5}, 
    i.e.
    \[
        (\UU^{(d)})^\top \DD^{(d)} \UU^{(d)} \approx_{0.5} \LL.
    \]
    We show the following stronger statement using induction:  
    \emph{
        For any $0 \leq t \leq d$, with probability at least 
        $1 - \frac{t}{n^\delta}$, for all $0 \leq k \leq t$ the approximation
        \[
            0.7 \LL \pleq \LL^{(k)} = 
            (\UU^{(k)})^\top \DD^{(k)} \UU^{(k)} \pleq 1.3 \LL
        \]
        holds conditional on the choice of \BC up to iteration $k-1$ and
        the choice of $F_k$.
    }
    Notice that $0.7 \geq e^{-0.5}, 1.3 \leq e^{0.5}$ and 
    $1 - \frac{d}{n^\delta} \geq 1 - \frac{1}{n^{\delta-1}}$. 
    We can choose an appropriate constant $\delta > 1$ for high probability.

    The base case is trivial as $(\UU^{(0)})^\top \DD^{(0)} \UU^{(0)} = \LL$. 
    For the inductive case, we start by assuming that all such inequalities
    hold up to $t-1$. 
    For any $k$, conditional on the choices upto step $k-1$ of
    \BC, we assume some random order $<, \leq$ on the
    multi-edges of $G^{(k-1)}$.
    As the choices of $\YY_e$ and $\YY_f$ in \TW are
    independent for $e \neq f$, such random order does not affect our analysis. 
    To simplify our notation, let $\av_{\YY^{(k)}_e}$ denote the conditional
    expectation on all the choices prior to $\YY^{(k)}_e$. 
    Now, for any $k$, let $\XX^{(k)}_e = \YY^{(k)}_e - \av_{\YY^{(k)}_e}
    \YY^{(k)}_e$.
    Then, $\av_{\YY^{(k)}_e} \XX^{(k)}_e = 0$.
    Consider the matrix when normalized by $\LL$, we have
    \[
        \norm{\ol{\XX^{(k)}_e}} 
        \leq
        \max\curlyparen{\norm{\ol{\YY^{(k)}_e}}, 
        \norm{\av_{\YY^{(k)}_e} \ol{\YY^{(k)}_e}}}
        \leq
        \alpha,
    \]
    where the first inequality is due to the PSD of $\YY^{(k)}_e$ and
    subsequently the PSD of $\av_{\YY^{(k)}_e} \ol{\YY^{(k)}_e}$ and the second
    inequality is due to \autoref{lemma:twalksnew}. 
    Thus, we can set the norm bound $R = \alpha$ in \autoref{thm:freedman}. 

    On to the predicatable quadratic variation. 
    Let 
    \[
        \WW^{(k)} =
        \sum_{i=1}^{k} \sum_{e} \av_{\YY^{(i)}_{e}}
        \sqparen{\ol{\XX^{(i)}_{e}}^2}.
    \]
    Note that since any $\ol{\XX^{(i)}_{e}}^2$ is PSD, any intermediate
    $\WW^{(k)}_e \pleq \WW^{(k)}$ and any $\WW^{(j)}$ is also
    subsumed by $\WW^{(k)}$ for $j < k$. 
    Thus, it suffices to only consider $\WW^{(t)}$. 
    For any $k$ and $e$, 
    \[
        \av_{\YY_e^{(k)}} \ol{\XX_e^{(k)}}^2 
        =
        \av_{\YY_e^{(k)}} \ol{\YY_e^{(k)}}^2 
        -
        (\av_{\YY_e^{(k)}} \ol{\YY_e^{(k)}})^2 
        \pleq 
        \av_{\YY_e^{(k)}} \ol{\YY_e^{(k)}}^2.
    \]
    Then, by \multiref{lemma:unbiased,lemma:twalksnew}, 
    \begin{align*}
        \sum_e \av_{\YY_e^{(k)}} \ol{\XX_e^{(k)}}^2 
        &\pleq 
        \sum_e \av_{\YY_e^{(k)}} \ol{\YY_e^{(k)}}^2 
        \pleq
        \sum_e \av_{\YY_e^{(k)}} \norm{\ol{\YY_e^{(k)}}} \ol{\YY_e^{(k)}}
        \\
        &\pleq 
        \alpha \ol{\Sc(\LL_{G^{(k-1)}}, C)}
        \pleq 
        \alpha \ol{\LL_{G^{(k-1)}}}
        \pleq 
        \alpha \ol{\LL^{(k-1)}},
    \end{align*}
    where the last two inequalities are immediate from the respective block
    Cholesky factorizations of $\LL_{G^{(k-1)}}$ and $\LL^{(k-1)}$ and
    \autoref{fact:twoside}. 
    Now, by our assumption
    \[
        \WW^{(t)} 
        \pleq 
        \alpha \sum_{i=0}^{t-1} \ol{\LL^{(i)}}
        \pleq 
        1.3(t-1)\alpha \II
    \]
    and that we can take 
    $\sigma^2 = C \alpha \log n$ due to $t \leq d = O(\log n)$ 
    by \eqref{eq:dbound}.

    By applying the matrix Freedman inequality (\autoref{thm:freedman}), the
    upperbound of probability of failure satisfies
    \[
        n \cdot \exp\paren{-\frac{0.3^2/2}{C\alpha\log n + 0.3\alpha/3}} 
        \leq
        \frac{1}{n^\delta}
    \]
    when $\alpha^{-1} = \Theta(\log^2 n)$ given that $\delta$ is constant, 
    as required. 
    Note that $0.7\LL \pleq \LL^{(t)} \pleq 1.3 \LL$ is equivalent to 
    $\norm{\ol{\LL^{(t)}} - \PPi} \leq 0.3$ and that 
    $\LL^{(t)} - \LL = \sum_{k=1}^t \sum_e \XX_e^{(k)}$. 
    Combine with the probability of success of at least $1-\frac{t-1}{n^\delta}$
    up to $t-1$, the probability of success up to $t$ is at least 
    \[
        (1-\frac{t-1}{n^\delta})(1 - \frac{1}{n^\delta})
        \geq 
        1 - \frac{t}{n^\delta},
    \]
    again, as required. 

    Lastly, let us analyze the runtime complexity. 
    We seperate the runtime into three parts: (a) the total runtime of applying
    \DS, (b) the total runtime of performing
    \TW, and (c) additional one-time overhead. 
    Recall for (c) that the only overhead is from converting multi-graph into a
    simple graph for $G^{(d)}$ which runs in 
    $O(m)$ work and $O(\log m)$ depth.
    
    Consider (a). 
    By \autoref{lemma:sdd}, each iteration of \DS runs in 
    $O(m)$ work and $O(\log m)$ depth in expectation. 
    As this guarantee merely depends on the parameter of the input
    multi-graph, the work and depth bounds for all iterations are independent.
    Now, using Chernoff bound, we get a total work and depth of $O(m\log n)$
    and $O(\log m \log n)$ using $d = O(\log n)$.

    \sloppy
    As for (b), since \autoref{lemma:twtime} has the guarantee written as
    with high probability, the total work and depth are
    $O(m \log n)$ and $O(\log m \log n)$ again by $d = O(\log n)$. 

    Therefore, combining (a), (b) and (c), we have the total runtime of 
    $O(m \log n)$ work and $O(\log m \log n)$ depth. 
\end{proof}

\section{Achieving $\alpha$-boundedness Initially} \label{sec:alpha}
We sketch out a brief justification to \autoref{lemma:sparsebound} in this
section.
Since the complete justification is rather involved and does not deviate from
\cite{CohenLMMPS15,kyng2017approximate}, we only emphasize on the parallel work
and depth.
The high-level algorithm in \autoref{lemma:sparsebound} runs as follows:
\begin{enumerate}
    \item Sample a sparser graph $G'$ with $\frac{m}{K}$ edges by
        uniformly choosing edges and scaling weights appropriately.
    \item Compute leverage score overestimations of $G$ using $G'$.
    \item Split edges into $\alpha$-bounded multi-edges using the leverage score
        overestimations.
\end{enumerate}
We remark that (3) is different compared to \cite{CohenLMMPS15,
kyng2017approximate} which use the leverage score estimations for
sparsification instead.

For the first step, we again use the parallel sampling algorithm from
\cite{HS19} and transform the graph $G'$ into appropriate representation in
parallel.
The reweighting can be done independently. 
Thus, step (1) runs in $O(m)$ work and $O(\log n)$ depth.

The second step requires the standard dimension reduction approach due to
\cite{SS08,KoutisLP15} using the Johnson–Lindenstrauss Lemma for querying
leverage score overestimations.
Again, we omit the detailed analysis on the estimation factor and focus on the
parallel runtime.
There are three major steps that we need to consider: 
(a) sampling a uniform random matrix of dimension $O(\log n) \times (m/K)$, 
(b) solving $O(\log n)$ systems of $\LL_{G'}$ to $O(1)$-approximations,
(c) for each edge in $G$, compute the $l_2$ distance of two vectors of size 
$O(\log n)$.
All additional one-time overhead is subsumed by these operations.
For (a), all entries are sampled by independent Bernoulli random variables,
giving us $O(m\log n/K)$ work and $O(1)$ depth.
For (b), we use our solver in \autoref{thm:main} in parallel for each system,
resulting in $O(\frac{m}{K} \log^4 n \log\log n)$ work and 
$O(\log^2 n \log\log n)$ depth.
As for (c), computing the distances can be done in parallel for each edge,
giving work $O(m\log n)$ and depth $O(\log\log n)$.
In total, we get $O(m\log n + \frac{m}{K} \log^4 n \log\log n)$ work and 
$O(\log^2 n \log\log n)$ depth.

Note that after (2), the leverage score overestimations $\widehat{\ttau}$
satisfy that $\sum_{e} \widehat{\ttau}(e) \leq O(nK)$. 
In step (3), it suffices to split an edge $e$ into
$\ceil{\alpha^{-1}\widehat{\ttau}(e)}$ copies, resulting in $O(m+nK\alpha^{-1})$
multi-edges and a parallel runtime of $O(m+nK\alpha^{-1})$ work and $O(\log n)$
depth.

To summarize, we get a total parallel runtime of 
$O(nK\alpha^{-1} + m\log n + \frac{m}{K}\log^4 \log\log n)$ work and 
$O(\log^2 n \log\log n)$ depth as desired.

\section{Schur Complement Approximation} 
\label{sec:scappr}

\begin{algorithm}[t]
    \caption{Schur Complement Approximation}
    \label{alg:scappr}
    \Pro{\AS{$G, C$}}{
        Initially set $G^{(0)} \gets G$, $k \gets 0$ and 
        $U_0 \gets V \backslash C$.\;
        \While{$U_{k} \neq \emptyset$}{
            $k \gets k+1$.\;
            $F_k \gets$ \DS{$G^{(k-1)}[U_{k-1}]$} and set $C_k
            \gets C_{k-1} \backslash F_k, U_k \gets U_{k-1} \backslash F_k$.\;
            $G^{(k)} \gets$ \TW{$G^{(k-1)}, C_k$}.\;
        }
        \Return{$G_S = G^{(d)}$ where $d$ is the last $k$.}\;
    }
\end{algorithm}

In this section, we show that with a slight modification to our main algorithm
\BC, we can compute a sparse approximation to the Schur
complement of $\Sc(\LL,C)$ in parallel. 
Our algorithm \AS is similar to that given by \cite{DKP+17}. 
\begin{theorem} \label{thm:scappr}
    There exists some $\alpha^{-1} = \Theta(\epsilon^{-2}\log^2 n)$ where $0 <
    \epsilon < 1/2$
    such that
    for any multi-graph $G$ with $n$ vertices and
    $m$ multi-edges that are all $\alpha$-bounded w.r.t.~$\LL_G$
    and any non-trivial $C \subset V$, 
    \AS{$G,C$} returns a multi-graph $G_S$ such that 
    with high probability the following holds: 
    \begin{enumerate}
        \item $\LL_{G_S} \approx_\epsilon \Sc(\LL_G, C)$.
        \item The number of multi-edges in $G_S$ is at most $m$.
    \end{enumerate}
    If $s = |V \backslash C|$, then with high probability the algorithm runs in
    $O(m \log s)$ work and $O(\log s \log m)$ depth.
\end{theorem}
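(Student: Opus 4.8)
We sketch the argument for \autoref{thm:scappr}. The algorithm \AS is nothing but \BC with the single restriction that at each step it may only eliminate vertices of $V\setminus C$: at step $k$ it picks $F_k\subseteq U_{k-1}=(V\setminus C)\setminus(F_1\cup\dots\cup F_{k-1})$ by calling \DS on the induced subgraph $G^{(k-1)}[U_{k-1}]$, sets $C_k=C_{k-1}\setminus F_k=C\cup U_k$, replaces $G^{(k-1)}$ by $G^{(k)}=\TW(G^{(k-1)},C_k)$, and stops once $U_d=\emptyset$, so that $C_d=C$ and $G_S=G^{(d)}$ is a graph on $C$. The plan is to re-run the matrix-martingale analysis of \autoref{thm:bcfapproxnew} for this restricted elimination so as to conclude that the exact partial factorization $\LL^{(d)}\defeq(\UU^{(d)})^\top\DD^{(d)}\UU^{(d)}$ --- with $\DD^{(k)},\UU^{(k)}$ defined as in \eqref{eq:blockd} and \eqref{eq:blocku} but fed the sequence $F_k,C_k,G^{(k)}$ produced by \AS, and $C_0=V$ --- satisfies $\LL^{(d)}\approx_\epsilon\LL_G$ with high probability, and then to transfer this to a statement about $\Sc(\LL_G,C)$.

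Before the martingale we would record two structural facts. Even though \DS sees only $G^{(k-1)}[U_{k-1}]$, the set $F_k$ it returns is still $5$-DD \emph{in} $\LL_{G^{(k-1)}}$: for $i\in F_k$ the off-diagonal mass $\sum_{j\in F_k}|(\LL_{G^{(k-1)}})_{ij}|$ counts only multi-edges inside $F_k\subseteq U_{k-1}$, hence equals the corresponding quantity in $G^{(k-1)}[U_{k-1}]$, whereas $(\LL_{G^{(k-1)}})_{ii}$ is the full weighted degree of $i$ in $G^{(k-1)}$, which is at least its weighted degree inside $U_{k-1}$; since the \DS acceptance test bounds the former by $\tfrac15$ of the latter (as computed in $G^{(k-1)}[U_{k-1}]$), the $5$-DD inequality also holds in $\LL_{G^{(k-1)}}$. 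Also, \autoref{lemma:sdd} applied to $G^{(k-1)}[U_{k-1}]$ gives $|F_k|\ge |U_{k-1}|/40$, so $|U_k|\le\tfrac{39}{40}|U_{k-1}|$ and $d=O(\log s)$ with $s=|V\setminus C|$. The second item of the theorem is then immediate from \autoref{lemma:twtime} by induction on $k$, since $G^{(0)}=G$ has $m$ multi-edges.

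For the approximation we would argue as in \autoref{sec:schur}. First, by induction on $k$ using \autoref{lemma:twalksnew} (base case: $G^{(0)}=G$ is $\alpha$-bounded w.r.t.\ $\LL_G$), every multi-edge of every $G^{(k)}$ is $\alpha$-bounded w.r.t.\ $\LL_G$. Block Cholesky factorization gives $\LL^{(k)}=\LL^{(k-1)}+\LL_{G^{(k)}}-\Sc(\LL_{G^{(k-1)}},C_k)$, so by \autoref{lemma:unbiased} the sequence $\LL^{(0)}=\LL_G,\LL^{(1)},\dots$ is a matrix martingale; its per-multi-edge increments $\XX_e^{(k)}$ satisfy $\norm{\ol{\XX_e^{(k)}}}\le\alpha$ by \autoref{lemma:twalksnew}, and --- maintaining the inductive invariant $\LL^{(i)}\approx_\epsilon\LL_G$, whence $\ol{\LL^{(i)}}\pleq 2\II$ --- the predictable quadratic variation is $\pleq\alpha\sum_{i<d}\ol{\LL^{(i)}}\pleq O(\alpha d)\,\II=O(\alpha\log s)\,\II$, using \autoref{lemma:unbiased}, \autoref{lemma:twalksnew} and \autoref{fact:twoside}. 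Applying \autoref{thm:freedman} with deviation $\tfrac{\epsilon}{2}$ (so that $1\pm\tfrac{\epsilon}{2}$ lies between $e^{-\epsilon}$ and $e^{\epsilon}$) makes the failure probability $n\exp(-\Omega(\epsilon^{2}/(\alpha\log n)))$, which is $n^{-\Omega(1)}$ once $\alpha^{-1}=\Theta(\epsilon^{-2}\log^{2}n)$; a union bound over the $d=O(\log s)$ steps then gives $\LL^{(d)}\approx_\epsilon\LL_G$ with high probability, closing the invariant.

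Finally, conditioning on that high-probability event (which forces the relevant principal blocks to be PD, so the Schur complements below make sense), we would prove by induction on $k$ that $\Sc(\LL^{(k)},C_k)=\LL_{G^{(k)}}$: the base case is $\Sc(\LL_G,V)=\LL_G$, and for the step, using $C_k\subseteq C_{k-1}$, transitivity of Schur complements and the inductive hypothesis give $\Sc(\LL^{(k-1)},C_k)=\Sc(\Sc(\LL^{(k-1)},C_{k-1}),C_k)=\Sc(\LL_{G^{(k-1)}},C_k)$, while $\LL_{G^{(k)}}-\Sc(\LL_{G^{(k-1)}},C_k)$, being supported on $C_k$, passes through $\Sc(\cdot,C_k)$ additively, so $\Sc(\LL^{(k)},C_k)=\Sc(\LL^{(k-1)},C_k)+\LL_{G^{(k)}}-\Sc(\LL_{G^{(k-1)}},C_k)=\LL_{G^{(k)}}$. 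In particular $\Sc(\LL^{(d)},C)=\LL_{G^{(d)}}=\LL_{G_S}$; since $\Sc(\cdot,C)$ is monotone and positively homogeneous on PSD matrices with kernel in $\mathrm{span}(\vone)$, $\LL^{(d)}\approx_\epsilon\LL_G$ yields $\LL_{G_S}=\Sc(\LL^{(d)},C)\approx_\epsilon\Sc(\LL_G,C)$, which is the first item. For the running time, each of the $d=O(\log s)$ iterations runs \DS on a graph with at most $m$ multi-edges ($O(m)$ work, $O(\log m)$ depth in expectation by \autoref{lemma:sdd}) and \TW with an $F_k$ that is $5$-DD in $\LL_{G^{(k-1)}}$ ($O(m)$ work, $O(\log m)$ depth with high probability by \autoref{lemma:twtime}); a Chernoff/union bound over the iterations then gives $O(m\log s)$ work and $O(\log m\log s)$ depth with high probability. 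We expect the main obstacle to be this transfer step --- reconciling a martingale centred at $\LL_G$ with the target $\Sc(\LL_G,C)$ via the exact identity $\Sc(\LL^{(d)},C)=\LL_{G_S}$ and monotonicity of the Schur map --- together with the first structural fact above, which is precisely what keeps \TW near-linear even though \DS is run only on the induced subgraph.
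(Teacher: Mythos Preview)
Your proposal is correct and follows essentially the same approach as the paper: both note that a $5$-DD subset of the induced subgraph is $5$-DD in the full graph, bound $d=O(\log s)$, rerun the martingale analysis of \autoref{thm:bcfapproxnew} (with the $\epsilon$-scaled Freedman application) to get $\LL^{(d)}\approx_\epsilon\LL_G$, and then read off $\LL_{G_S}$ as $\Sc(\LL^{(d)},C)$ (equivalently the bottom-right block of $\DD^{(d)}$) before invoking monotonicity of the Schur map. The only difference is cosmetic---you spell out the identity $\Sc(\LL^{(k)},C_k)=\LL_{G^{(k)}}$ by induction, whereas the paper just cites the block structure of $\DD^{(d)}$.
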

\begin{proof}
    This proof is analogous to that of \autoref{thm:bcfapproxnew}. 

    Note that a 5DD subset of an induced subgraph is a 5DD subset of the graph
    itself.
    By \autoref{lemma:sdd}, the total number of iterations of \AS is bounded by
    $d \leq \log_{40/39}(|U_0|) = O(\log s)$,
    where $s = |U_0| = |V \backslash C| \leq n$, which is also in $O(\log n)$.
    Note that a subset of size 1 is always 5-DD.
    So, even when $n \leq 40$, the guarantees in \autoref{lemma:sdd} still holds
    true.

    The approximate Schur complement is simply the bottom right block of
    $\DD^{(d)}$ by \eqref{eq:blockd}. 
    Thus, the desired approximation guarantee follows direcly from
    the approximation guarantee for the factorization 
    $(\UU^{(d)})^\top \DD^{(d)} \UU^{(d)}$.
    We can then follow the same argument of \autoref{thm:bcfapproxnew} using
    \autoref{thm:freedman}.

    Again, each intermediate Schur complement has at most $m$ multi-edges and
    that the number of multi-edges in $G_S$ is at most $m$ as well.
    Moreover, combine with $d = O(\log s)$, the algorithm runs in $O(m \log s)$
    work and $O(\log s \log m)$ depth.
\end{proof}

\printbibliography

\appendix

\section{Missing Proofs}
\label{sec:proof}

\begin{proof}[Proof of \autoref{lemma:sdd}, \cite{LPS15,KyngLPSS16}]
    As $F$ is a subset of $F'$, 
    \[
        \sum_{e \in E(G[F]) : e \ni i} \ww(e)
        \leq 
        \sum_{e \in E(G[F']) : e \ni i} \ww(e).
    \]
    So it is guaranteed that returned $F$ is $5$-DD. 

    Consider the runtime of \textsc{5DDSubset}. 
    As each iteration of the algorithm runs in $O(m)$ work and $O(\log m)$
    depth, it suffices to show that the probability of the algorithm finishs in
    each iteration is some constant, specifically $1/2$. 

    Let $A_i$ be the event that $i \in F'$ and $i \notin F$. 
    The set $F$ is then the set of $i \in F'$ for which $A_i$ does not hold. 
    Notice that $A_i$ only happens if $i \in F'$ and 
    \[
        \sum_{e \in E(G[F']) : e \ni i} \ww(e) > \frac{1}{5} \ww(i),
    \]
    where $\ww(i) = \sum_{e \in E: e \ni i} \ww(e)$. 
    Given that $i \in F'$, the probability that $j \in F'$ for $j \neq i$ is
    \[
        \frac{1}{n-1} \paren{\frac{n}{20} - 1}.
    \]
    Then,
    \[
        \av \left(  \sum_{e \in E(G[F']) : e \ni i} |\LL_{ij}| \middle| i
          \in F' \right)
        \leq%
        \frac{1}{n-1}\paren{\frac{n}{20} - 1} \sum_{e \in E : e \ni i} \ww(e)
        <
        \frac{1}{20} \sum_{e \in E : e \ni i} \ww(e)
        \leq
        \frac{1}{20} \ww(i).
    \]
    By Markov's inequality, 
    \[
        \pr \sqparen{\sum_{e \in E(G[F']) : e \ni i} \ww(e) > 
        \frac{1}{5} \ww(i) \big| i \in F'} 
        < 1/4,
    \]
    and that 
    \[
        \pr(A_i) = \pr(i \in F')\pr(i \notin F | i \in F') <
        \frac{1}{20}\frac{1}{4} = \frac{1}{80}.
    \]
    Using Markov's inequality for a second time gives us
    \[
        \pr\paren{|\{i : A_i\}| \geq \frac{n}{40}} < 1/2
    \]
    as required. 
\end{proof}

\begin{proof}[Proof of \autoref{lemma:sumofwalks}]
    Consider the Schur complement 
    $\Sc(\LL,C) = \LL_{CC} - \LL_{CF}\LL_{FF}^{-1}\LL_{FC}$. 
    Let us write $\LL_{FF} = \DD - \AA$ where $\DD$ is diagonal and $\AA$ is
    nonnegative with 0 diagonally. 
    The matrix $\LL_{FF}$ must be diagonally dominant, and thus,
    $-\II \ple \DD^{-1/2}\AA\DD^{-1/2} \ple \II$. 
    Using the identity that
    \[
        (\II - \MM)^{-1} = \sum_{i=0}^\infty \MM^i
    \]
    for $\norm{\MM} < 1$, we have, 
    \begin{align*}
        \LL_{FF}^{-1} 
        &= 
        (\DD - \AA)^{-1}
        = 
        \DD^{-1/2} \paren{\II - \DD^{-1/2}\AA\DD^{-1/2}}^{-1}\DD^{-1/2}
        \\
        &= 
        \DD^{-1/2} \sqparen{\sum_{i=0}^\infty \paren{\DD^{-1/2}\AA\DD^{-1/2}}^k}
        \DD^{-1/2}
        =
        \sum_{i=0}^\infty (\DD^{-1}\AA)^k\DD^{-1}.
    \end{align*}
    Substituting this in place of $\LL^{-1}_{FF}$ gives
    \[
        \Sc(\LL,C) = \LL_{CC} - 
        \sum_{i=0}^\infty \LL_{CF} (\DD^{-1}\AA)^i \LL_{FC}.
    \]
    As $\LL_{FC},\LL_{CF}$ are non-positive, we can replace them with
    $-\LL_{FC}, -\LL_{CF}$ respectively to makes the terms positive. 
    Recall that each entry of $\AA$ is defined as the weighted sum of all
    multi-edges connecting them $\AA_{uv} = \sum_{e : e \ni u,v}\ww(e)$. 
    Now, fix two distinct vertices $s,t \in C$. 
    For any fixed $k \geq 1$, 
    \[
        \paren{\LL_{CF}(\DD^{-1}\AA)^{k-1}\DD^{-1}\LL_{FC}}_{s,t}
        =
        \sum_{
            \begin{subarray}{c}
                W: C-\text{terminal} \\
                |E(W)| = k+1; s,t \in V(W)
            \end{subarray}
        }
        \frac{\prod_{i=1}^{k+1} \ww(e_i)}{\prod_{i=1}^{k-1} \ww(u_i)}.
    \]
    Summing over all $k$ as long with the weights of multi-edges of $(s,t)$ in
    $\LL_{CC}$ gives the correct identity. 
\end{proof}

\begin{proof}[Proof of \autoref{lemma:jacobi}]
    Recall the definition of $\ZZ$ from \eqref{eq:jacobi}. 
    \[
        \ZZ = \sum_{i=0}^l \XX^{-1} (-\YY\XX^{-1})^i,
    \]
    where $l$ is an odd integer such that $l \geq \log_2 3/\epsilon$. 

    The left-hand inequality is equivalent to the statement that all the
    eigenvalues of $\ZZ\MM$ are at most 1 (\cite{SpielmanT04}). 
    To see that this is the case,
    \begin{align*}
        \ZZ\MM 
        &= 
        \sum_{i=0}^l (-\XX^{-1}\YY)^i \XX^{-1} (\XX + \YY)
        \\
        &=
        \sum_{i=0}^l (-\XX^{-1}\YY)^i - \sum_{i=1}^{l+1}(-\XX^{-1}\YY)^i
        =
        \II - (\XX^{-1}\YY)^{l+1},
    \end{align*}
    where the last equality uses the assumption that $l$ is odd. 
    Since $l+1$ is even, all eigenvalues of $(\XX^{-1}\YY)^{l+1}$ are
    nonnegative. 
    Then, by the last matrix, all eigenvalues of $\ZZ\MM$ are at most 1 as
    required. 

    The RHS inequality is equivalent to all eigenvalues of $\ZZ(\MM +
    \epsilon\YY)$ are at least 1.
    Expanding this gives
    \[
        \sum_{i=0}^l (-\XX^{-1}\YY)^i \XX^{-1} (\XX + (1+\epsilon)\YY)
        =
        \II - (\XX^{-1}\YY)^{l+1} - \epsilon \sum_{i=1}^{l+1}(-\XX^{-1}\YY)^i.
    \]
    Then, the eigenvalues of this matrix ar of the form
    \[
        1 - \lambda^{k+1} - \epsilon \sum_{i=1}^{l+1} (-\lambda)^i,
    \]
    where $\lambda$ ranges over the eigenvalues of $\XX^{-1}\YY$.  
    Given that $\MM$ is $5$-DD, we have $2\YY \pleq \XX$. 
    Then, the eigenvalues of $\XX^{-1}\YY$ must be in the range of $[0,1/2]$. 
    Once again, by $l$ being odd, it suffices to have
    \[
        \epsilon - \lambda^{l+1} - \epsilon \frac{1+\lambda^{l+2}}{1+\lambda}
        \geq 0.
    \]
    The LHS is minimized at $\lambda = 1/2$, which gives us that 
    $\lambda \geq \log_2(3/\epsilon+1)-1$. 
    As $\epsilon < 1$, it suffices to instead have 
    $\lambda \geq \log_2(3/\epsilon)$. 
\end{proof}

\begin{proof}[Proof of \autoref{lemma:jacobiapprox}]
    Let the matrix in \autoref{lemma:jacobiapprox} be $\Ltil$. 
    We start by defining a matrix
    \[
        \Lhat =
        \begin{pmatrix}
            \ZZ^{-1} & \LL_{FC} \\
            \LL_{CF} & \LL_{CC}
        \end{pmatrix}.
    \]
    By \autoref{lemma:jacobi} and \autoref{fact:block}, 
    \[
        \LL \pleq \Lhat \pleq 
        \LL + \epsilon 
        \begin{pmatrix} 
            \YY & 0 \\
            0 & 0
        \end{pmatrix},
    \]
    where $\YY$ is Laplacian with off-diagonal entries the same as $\LL_{FF}$. 
    This means that $\YY$ is the Laplacian of a induced subgraph of $\LL$ and
    that
    \[
        \begin{pmatrix} 
            \YY & 0 \\
            0 & 0
        \end{pmatrix}
        \pleq \LL.
    \]
    Then, $\LL \pleq \Lhat \pleq (1+\epsilon)\LL$.
    Now, by \autoref{fact:inver} and \eqref{eq:linv}, the inverse schur
    complement onto $C$ is a submatrix of the Laplacian and that 
    $e^{-\epsilon} \Sc^+(\LL,C) \pleq \Sc^+(\Lhat,C) \pleq \Sc^+(\LL,C)$
    since $e^{-\epsilon} < 1/(1+\epsilon)$.
    Using \autoref{fact:inver} again gives us
    \[
        e^{-\epsilon} \Sc(\Lhat,C) \pleq \Sc(\LL,C) \pleq \Sc(\Lhat,C).
    \]
    Then, adding $\ZZ$ onto the $FF$ block gives
    \[
        e^{-\epsilon}
        \begin{pmatrix}
            \ZZ^{-1} & 0 \\
            0 & \Sc(\Lhat,C)
        \end{pmatrix}
        \pleq
        \begin{pmatrix}
            \ZZ^{-1} & 0 \\
            0 & \Sc(\LL,C)
        \end{pmatrix}
        \pleq 
        \begin{pmatrix}
            \ZZ^{-1} & 0 \\
            0 & \Sc(\Lhat,C)
        \end{pmatrix}.
    \]
    By \autoref{fact:twoside} with $C$ set to the lower triangular matrix,
    $e^{-\epsilon} \Lhat \pleq \Ltil \pleq \Lhat$.
    Combine with the approximation on $\Lhat$ and by the fact 
    $1+\epsilon < e^\epsilon$, 
    \[
        e^{-\epsilon} \LL \pleq \Ltil \pleq e^{\epsilon} \LL,
    \]
    which concludes our proof. 
\end{proof}

\end{document}